\newcommand{\fusion}{\diamond}
\newcommand{\gkat}{\mathtt{GKAT}}
\newcommand{\kat}{\mathtt{KAT}}
\newcommand{\wh}{\mathsf{wh}}
\newcommand{\wwh}{\mathsf{wwh}}
\newcommand{\lb}{\scalebox{0.8}{$\square$}}
\renewcommand{\int}[1]{\llbracket{#1}\rrbracket}
\begin{document}
\title{A cyclic proof system for Guarded Kleene Algebra with Tests (full version)\thanks{The research of Jan Rooduijn has been made possible by a grant from the Dutch Research Council NWO, project number 617.001.857.} }
%
%
\author{Jan Rooduijn\inst{1}\Envelope \and
Dexter Kozen\inst{2} \and
Alexandra Silva\inst{2}}
\authorrunning{J. Rooduijn et al.}
%
\institute{Institute of Logic, Language and Computation, University of Amsterdam, The Netherlands
\email{janrooduijn@gmail.com}\\ \and
Cornell University, Ithaca, NY, USA 
}
\maketitle              

\begin{abstract}
Guarded Kleene Algebra with Tests ($\gkat$ for short) is an efficient fragment of Kleene Algebra with Tests, suitable for reasoning about simple imperative while-programs. Following earlier work by Das and Pous on Kleene Algebra, we study $\gkat$ from a proof-theoretical perspective. The deterministic nature of $\gkat$ allows for a non-well-founded sequent system whose set of regular proofs is complete with respect to the guarded language model. This is unlike the situation with Kleene Algebra, where hypersequents are required. Moreover, the decision procedure induced by proof search runs in  $\mathsf{NLOGSPACE}$, whereas that of Kleene Algebra is in $\mathsf{PSPACE}$.

\keywords{Kleene Algebra \and Guarded Kleene Algebra with Tests \and Cyclic proofs}
\end{abstract}
\section{Introduction}
    Guarded Kleene Algebra with Test ($\gkat$) is the fragment of Kleene Algebra with Tests ($\kat$) comprised of the deterministic $\mathsf{while}$ programs. Those are the programs built up from sequential composition $(e \cdot f)$, conditional branching (\texttt{if-$b$-then-$e$-else-$f$}) and loops (\texttt{while $b$ do $e$}). For an introduction to $\kat$ we refer the reader to~\cite{KozenKAT97}. The first papers focusing on the fragment of $\kat$ that is nowadays called $\gkat$ are Kozen's~\cite{Kozen08} and Kozen \& Tseng's~\cite{KozenT08}, where it is used to study the relative power of several programming constructs. 

    As $\gkat$ is a fragment of $\kat$, it directly inherits a rich theory. It admits a language semantics in the form of \emph{guarded strings} and for every expression there is a corresponding $\kat$-automaton. Already in~\cite{KozenT08} it was argued that $\gkat$ expressions are more closely related to so-called \emph{strictly deterministic automata}, where every state transition executes a primitive program. Smolka et al. significantly advanced the theory of $\gkat$ in~\cite{Smolka20}, by studying various additional semantics, identifying the precise class of strictly deterministic automata corresponding to $\gkat$-expressions (proving a \emph{Kleene theorem}), giving a nearly linear decision procedure of the equivalence of $\gkat$-expressions, and studying its equational axiomatisation. Since then $\gkat$ has received considerable further attention, \emph{e.g.} in~\cite{SchmidKK021,RozowskiKKS023,Zetzsche22,schmid2023complete}.

    One of the most challenging and intriguing aspects of $\gkat$ is its proof theory. The standard equational axiomatisation of $\kat$ from~\cite{KozenKAT97} does not simply restrict to $\gkat$, since a derivation of an expression that lies within the $\gkat$-fragment might very well contain expressions that lie outside of it.  Moreover, the axiomatisation of $\kat$ contains a least fixed point rule that relies on the equational definability of inclusion, which does not seem to be available in $\gkat$.
    
    In~\cite{Smolka20}, this problem is circumvented by introducing a custom equational axiomatisation for $\gkat$ that uses a \emph{unique} fixed point rule. While a notable result, this solution is still not entirely satisfactory. First, completeness is only proven under the inclusion of a variant of the unique fixed point rule that operates on entire systems of equations (this problem was recently addressed for the so-called \emph{skip-free} fragment of $\gkat$ in~\cite{schmid2023complete}). Moreover, even the ordinary, single-equation, unique fixed point rule contains a non-algebraic side-condition, analogous to the empty word property in Salomaa's axiomatisation of Kleene Algebra~\cite{salomaa1966two}. Because of this, a proper definition of `a $\gkat$' is still lacking. 

    In recent years the proof theory of logic with fixed point operators (such as \texttt{while-$b$-do-$e$}) has seen increasing interest in \emph{non-well-founded} proofs. In such proofs, branches need not to be closed by axioms, but may alternatively be infinitely deep. To preserve soundness, a progress condition is often imposed on each infinite branch, facilitating a soundness proof by infinite descent. In some cases non-well-founded proofs can be represented by finite trees with back-edges, which are then called \emph{cyclic proofs}. See \emph{e.g.} ~\cite{brotherston2005cyclic,kuperberg2021cyclic,Wehr22,afshari2022cyclicfirstorder,dekker2023proof} for a variety of such approaches. Often, the non-well-founded proof theory of some logic is closely related to its corresponding automata theory. Taking the proof-theoretical perspective, however, can be advantageous because it is more fine-grained and provides a natural setting for establishing results such as interpolation~\cite{MartiVenema21arxiv,Gui1}, cut elimination~\cite{SavateevS17,AcclavioCG24}, and completeness by proof transformation~\cite{SprengerD03,Das18}.

    In~\cite{DasPous17}, Das \& Pous study the non-well-founded proof theory of Kleene Algebra, a close relative of $\gkat$ (for background on Kleene Algebra we refer the reader to~\cite{kozen1994completeness}). They show that a natural non-well-founded sequent system for Kleene Algebra is not complete when restricting to the subset of cyclic proofs. To remedy this, they introduce a \emph{hypersequent} calculus, whose cyclic proofs \emph{are} complete. They give a proof-search procedure for this calculus and show that it runs in $\mathsf{PSPACE}$. Since deciding Kleene Algebra expressions is $\mathsf{PSPACE}$-complete, their proof-search procedure induces an optimal decision procedure for this problem. In a follow-up paper together with Doumane, left-handed completeness of Kleene Algebra is proven by translating cyclic proofs in the hypersequent calculus to well-founded proofs in left-handed Kleene Algebra~\cite{Das18}. 

    The goal of the present paper is to study the non-well-founded proof theory of $\gkat$. This is interesting in its own right, for instance because, as we will see, it has some striking differences with Kleene Algebra. Moreover, we hope it opens up new avenues for exploring the completeness of algebraic proof systems for $\gkat$, through the translation of our cyclic proofs.
    \newpage
    \subsubsection{Outline} Our paper is structured as follows.
    \begin{itemize}
		\item In Section \ref{sec:prelim} we introduce preliminary material: the syntax of $\gkat$ and its language semantics.
		\item Section \ref{sec:thesgkatsystem} introduces our non-well-founded proof system $\mathsf{SGKAT}$ for $\gkat$. 
		\item In Section \ref{sec:sgkatsoundess} we show that (possibly infinitary) proofs in $\mathsf{SGKAT}$ are sound. That is, the interpretation of each derivable sequent - a $\gkat$-\emph{inequality} - is true in the language model (which means that a certain \emph{inclusion} of languages holds). 
		\item In Section \ref{sec:finite-stateness} we show that proofs are \emph{finite-state}: each proof contains only finitely many distinct sequents. More precisely, by employing a more fine-grained analysis than in~\cite{DasPous17}, we give a quadratic bound on the number of distinct sequents occurring in a proof, in terms of the size of its endsequent. It follows that the subset of cyclic proofs proves exactly the same sequents as the set of all non-well-founded proofs.
        \item Section 6 deals with completeness and complexity. We first use a proof-search procedure to show that $\mathsf{SGKAT}$ is complete: every sequent whose interpretation is valid in the language model, can be derived. We then show that this proof-search procedure runs in $\mathsf{coNLOGSPACE}$. This gives an $\mathsf{NLOGSPACE}$ upper bound on the complexity of the language inclusion problem for $\gkat$-expressions.
	\end{itemize}

    \subsubsection{Our contributions} Our paper closely follows the treatment of Kleene Algebra in~\cite{DasPous17}. Nevertheless, we make the following original contributions:
    \begin{itemize}
    \item Structure of sequents: we devise a form of sequents bespoke to $\gkat$, by labelling the sequents by sets of atoms. This is similar to how the appropriate automata for $\gkat$ are not simply $\kat$-automata. In contrast to Kleene Algebra, it turns out that we do not need to extend our sequents to hypersequents in order to obtain completeness for the fragment of cyclic proofs.

    \item Soundness argument: our modest contribution here is the notion of priority of rules and the fact that our rules are all invertible when they have priority. The soundness argument for finite proofs is, of course, slightly different, because our rules are different. (The step from the soundness of finite proofs, towards the soundness of infinite proofs, is completely analogous to that of~\cite{DasPous17}.)

    \item Regularity: this concerns showing that every proof contains only finitely many distinct sequents. As in~\cite{DasPous17}, our argument views each expression in a proof as a subexpression of an expression in the proof’s root. A modest contribution is that our argument is made more formal by considering these expressions as nodes in a syntax tree. More importantly, the bound on the number of distinct cedents we obtain is sharper: where in~\cite{DasPous17} it is exponential in the size of the syntax tree, our bound is linear (yielding a quadratic bound on the number of sequents). 
    
    \item Completeness: the structure of the argument is identical to that in~\cite{DasPous17}, but the details differ due to the different rules and different type of sequents. This for instance shows in our proof of Lemma \ref{lem:right-productivity} (which is analogous to Lemma 20 in~\cite{DasPous17}), where we make crucial use of the set of atoms annotating a sequent.
    
    \item Complexity: our complexity argument is necessarily different because it applies to a different system and is designed to give a different upper bound.
    \end{itemize}
	Due to space limitations several proofs are only sketched or omitted entirely. Full versions of these proofs can be found in the extended version of this paper~\cite{full}. 
	\section{Preliminaries}
	\label{sec:prelim}
	\subsection{Syntax}
	The language of $\gkat$ has two sorts, namely \emph{programs}\index{program} and a subset thereof consisting of \emph{tests}\index{test}. It is built from a finite and non-empty set $T$ of \emph{primitive tests}\index{test!primitive} and a non-empty set $\Sigma$ of \emph{primitive programs}\index{program!primitive}, where $T$ and $\Sigma$ are disjoint. For the rest of this paper we fix such sets $T$ and $\Sigma$. We reserve the letters $t$ and $p$ to refer, respectively, to arbitrary primitive tests and primitive programs. The first of the following grammars defines the \emph{tests}, and the second the \emph{expressions}.
	\begin{align*}
	b,c ::= 0 \mid 1 \mid t \mid \overline b \mid b \lor c \mid b \cdot c&&
	e,f ::= b \mid p \mid e \cdot f \mid e +_b f \mid e^{(b)},
	\end{align*}
	where $t \in T$ and $p \in \Sigma$. Intuitively, the operator $+_b$ stands for the \texttt{if-then-else} construct, and the operator $(-)^{(b)}$ stands for the \texttt{while} loop. Note that  the tests are simply propositional formulas. It is convention to use $\cdot$ instead of $\land$ for conjunction. As usual, we often omit $\cdot$ for syntactical convenience, \emph{e.g.} by writing $pq$ instead of $p \cdot q$. 
	\begin{example}
	\label{exmp:program}
	The idea of $\gkat$ is to model imperative programs. For instance, the expression $(p +_b q)^{(a)}$ represents the following imperative program: 
	\begin{center}
		\begin{BVerbatim}
		while a do (if b then p else q)
		\end{BVerbatim} 
		\phantom{aaaaaaaaaaa}
	\end{center}
	\end{example}
	\begin{remark}
	\label{rem:katgkatsyn}
	As mentioned in the introduction, $\gkat$ is a fragment of Kleene Algebra with Tests, or $\kat$~\cite{KozenKAT97}\index{$\kat$}. The syntax of $\kat$ is the same as that of $\gkat$, but with unrestricted union $+$ instead of guarded union $+_b$, and unrestricted iteration $(-)^*$ instead of the while loop operator $(-)^{(b)}$. The embedding $\varphi$ of $\gkat$ into $\kat$ acts on guarded union and guarded iteration as follows, and commutes with all other operators: $\varphi(e +_b f) = b \cdot \varphi(e) + \overline b \cdot \varphi(f)$, and  $\varphi(e^{(b)}) = (b \cdot \varphi(e))^* \cdot \overline b$.
	\end{remark}
	\subsection{Semantics}
	There are several kinds of semantics for $\gkat$. In~\cite{Smolka20}, a \emph{language} semantics, a \emph{relational} semantics, and a \emph{probabilistic} semantics are given. In this paper we are only concerned with the language semantics, which we shall now describe.  

	We denote by $\mathsf{At}$\index{$\mathsf{At}$} the set of \emph{atoms}\index{atom} of the free Boolean algebra generated by $T = \{t_1, \ldots t_n\}$. That is, $\mathsf{At}$ consists of all tests of the form $c_1 \cdots c_n$, where $c_i \in \{t_i, \overline{t_i}\}$ for each $1 \leq i \leq n$. Lowercase Greek letters $(\alpha, \beta, \gamma, \ldots)$ will be used to denote elements of $\mathsf{At}$. A \emph{guarded string}\index{guarded string} is an element of the regular set $\mathsf{At} \cdot (\Sigma \cdot \mathsf{At})^*$. That is, a string of the form $\alpha_1 p_1 \alpha_2 p_2 \cdots \alpha_n p_n \alpha_{n+1}$. We will interpret expressions as languages (formally just sets) of guarded strings. The sequential composition operator $\cdot$ is interpreted by the \emph{fusion product} $\diamond$, given by $L \diamond K := \{x \alpha y \mid x\alpha \in L \text{ and } \alpha y \in K\}$. For the interpretation of $+_b$, we define for every set of atoms $B \subseteq \mathsf{At}$ the following operation of \emph{guarded union} on languages: $L +_B K := (B \diamond L) \cup (\overline B \diamond K)$, where $\overline B$ is $\mathsf{At} \setminus B$. For the interpretation of $(-)^{(b)}$, we stipulate:
	\begin{align*}
	L^0 := \mathsf{At} && L^{n+1} := L^{n} \diamond L && L^B := \bigcup_{n \geq 0} (B \fusion L)^n \diamond \overline B 
	\end{align*}
	Finally, the semantics of $\gkat$ is inductively defined as follows:
	\begin{align*}
	&\int{b} := \{\alpha \in \mathsf{At} : \alpha \leq b\} &&\int{p} := \{\alpha p \beta: \alpha, \beta \in \mathsf{At}\}  &&\int{e \cdot f} := \int{e} \diamond \int{f}  \\
	& \int{e +_b f} := \int{e} +_{\int{b}} \int{f}  && \int{e^{(b)}} := \int{e}^{\int{b}}
	\end{align*}
	Note that the interpretation of $\cdot$ between tests is the same whether they are regarded as tests or as programs, \emph{i.e.} $\int{b} \cap \int{c} = \int{b} \diamond \int{c}$.
	\begin{remark}
		While the semantics of expressions is explicitly defined, the semantics of tests is derived implicitly through the free Boolean algebra generated by $T$. It is standard in the $\gkat$ literature to address the Boolean content in this manner.
	\end{remark}
	\begin{example}
	In a guarded string, atoms can be thought of as states of a machine, and programs as executions. For instance, in case of the guarded string $\alpha p \beta$, the machine starts in state $\alpha$, then executes program $p$, and ends in state $\beta$. Let us briefly check which guarded strings of, say, the form $\alpha p \beta q \gamma$ belong to the interpretation $\int{(p +_b q)^{(a)}}$ of the program of Example \ref{exmp:program}. First, we must have $\alpha \leq a$, for otherwise we would not enter the loop. Moreover, we have $\alpha \leq b$, for otherwise $q$ rather than $p$ would be executed. Similarly, we find that $\beta \leq a, \overline b$. Since the loop is exited after two iterations, we must have $\gamma \leq \overline a$. Hence, we find
	\[
	\alpha p \beta q \gamma \in \int{(p +_b q)^{(a)}} \Leftrightarrow \alpha \leq a, b \text{ and } \beta \leq a, \overline{b} \text{ and } \gamma \leq \overline{a}. 
	\]
	\end{example}
	We state two simple facts that will be useful later on. 
	\begin{lemma}
    For any two languages $L, K$ of guarded strings, and primitive program $p$, we have:
    \begin{enumerate}[label = (\roman*)]
        \item $L^{n + 1} = L \fusion L^n$; \ \ \ \ (ii) $\int{p} \fusion L = \int{p} \fusion K$ implies $L = K$. 
    \end{enumerate}
	\end{lemma}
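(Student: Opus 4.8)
Both claims reduce to elementary properties of the fusion product on languages of guarded strings, so the plan is to isolate those first. Unwinding the definition $L \fusion K = \{x\alpha y : x\alpha \in L,\ \alpha y \in K\}$ and using that every guarded string alternates atoms and primitive programs and both begins and ends with an atom — so that the atom at which two strings are glued is uniquely determined — one obtains that $\fusion$ is associative and that $\mathsf{At}$ is a two-sided unit, i.e.\ $\mathsf{At} \fusion L = L = L \fusion \mathsf{At}$, for every language $L$ of guarded strings. These are standard; the only point to watch is that the unit laws genuinely rely on the restriction to guarded strings, which is exactly our setting.

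For (i), I would induct on $n$. When $n = 0$, the definitions give $L^{1} = L^{0} \fusion L = \mathsf{At} \fusion L = L$, and likewise $L \fusion L^{0} = L \fusion \mathsf{At} = L$, so $L^{1} = L \fusion L^{0}$. For the step, assume $L^{n+1} = L \fusion L^{n}$; then $L^{n+2} = L^{n+1} \fusion L = (L \fusion L^{n}) \fusion L = L \fusion (L^{n} \fusion L) = L \fusion L^{n+1}$, the third equality being associativity and the outer ones unfolding the definition of $(-)^{m+1}$.

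For (ii), I would first simplify the left-hand side: since $x\alpha \in \int{p}$ exactly when $x\alpha = \beta p \alpha$ for some atom $\beta$, and every $z \in L$ begins with an atom, it follows that $\int{p} \fusion L = \{\beta p z : \beta \in \mathsf{At},\ z \in L\}$ (and the strings so produced are again guarded). Now assume $\int{p} \fusion L = \int{p} \fusion K$. Given $z \in L$, fix any atom $\beta$ (recall $\mathsf{At} \neq \emptyset$); then $\beta p z \in \int{p} \fusion L = \int{p} \fusion K$, so $\beta p z = \beta' p z'$ with $\beta' \in \mathsf{At}$ and $z' \in K$. By unique readability of guarded strings — the decomposition $\alpha_1 p_1 \cdots \alpha_n p_n \alpha_{n+1}$ is unique — this forces $z = z'$, so $z \in K$; hence $L \subseteq K$, and $K \subseteq L$ by the symmetric argument, giving $L = K$. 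I do not expect any genuine obstacle here: all the work is in the bookkeeping just described, and the substantive ingredients (associativity and unit laws for $\fusion$, unique readability of guarded strings) are entirely routine.
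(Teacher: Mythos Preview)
Your proposal is correct and follows essentially the same approach as the paper: both rely on associativity of $\fusion$ and the fact that $\mathsf{At}$ is a unit for (i), and on the explicit description of $\int{p} \fusion L$ for (ii). The only differences are stylistic---you spell out an induction for (i) where the paper does a one-line rewriting, and for (ii) the paper picks the witnessing string $\gamma p \gamma y$ (reusing the initial atom of $\gamma y \in L$) rather than invoking unique readability with an arbitrary $\beta$.
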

	\begin{remark}
	\label{rmk:determinacy}
	The fact that $\gkat$ models deterministic programs is reflected in the fact that sets of guarded strings arising as interpretations of $\gkat$-expressions satisfy a certain \emph{determinacy property}\index{determinacy property}. Namely, for every $x \alpha y$ and $x \alpha z$ in $L$, either $y$ and $z$ are both empty, or both begin with the same primitive program. We refer the reader to~\cite{Smolka20} for more details.
	\end{remark}
	\begin{remark}
	The language semantics of $\gkat$ is the same as that of $\kat$ (see~\cite{KozenKAT97}), in the sense that $\int{e} = \int{\varphi(e)}$, where $\varphi$ is the embedding from Remark \ref{rem:katgkatsyn}, the semantic brackets on the right-hand side denote the standard interpretation in $\kat$, and $e$ is any $\gkat$-expression.
	\end{remark}
	\section{The non-well-founded proof system $\mathsf{SGKAT}^\infty$}
	\label{sec:thesgkatsystem}
	In this section we commence our proof-theoretical study of $\gkat$. We will present a cyclic sequent system for $\gkat$, inspired by the cyclic sequent system for Kleene Algebra presented in~\cite{DasPous17}. In passing, we will compare our system to the latter.
	\begin{definition}[Sequent]
	\label{defn:sequent}
	A \emph{sequent}\index{sequent} is a triple $(\Gamma, A, \Delta)$, written $\Gamma \Rightarrow_A \Delta$, where $A \subseteq \mathsf{At}$ and $\Gamma$ and $\Delta$ are (possibly empty) lists of $\gkat$-expressions.
	\end{definition}
	 The list on the left-hand side of a sequent is called its \emph{antecedent}\index{antecendent}, and the list on the right-hand side its \emph{succedent}\index{succedent}. In general we refer to lists of expressions as \emph{cedents}.  The symbol $\epsilon$ refers to the empty cedent. 
     \remark{As the system in~\cite{DasPous17} only deals with Kleene Algebra, it does not include tests. We choose the deal with the tests present in $\gkat$ by augmenting each sequent by a set of atoms. This tucks away the Boolean content, as is usual in the $\gkat$ literature, allowing us to omit propositional rules.
    \begin{definition}[Validity]
	 We say that a sequent $e_1, \ldots, e_n \Rightarrow_A f_1, \ldots, f_m$ is \emph{valid}\index{valid} whenever $A \diamond \llbracket e_1 \cdots e_n \rrbracket \subseteq \llbracket f_1 \cdots f_n \rrbracket$. 
	 \end{definition}
	 We often abuse notation writing $\int{\Gamma}$ instead of $\int{e_1 \cdots e_n}$, where $\Gamma = e_1, \ldots, e_n$. 
	 	 	 {
	 \FrameSep0pt
		 \begin{framed}
		 		\begin{enumerate}[align = left, leftmargin=1.3cm, itemsep = 0.5cm]
			\item[\textbf{Left logical rules}]
			\begin{align*}
				\AxiomC{$\Gamma \Rightarrow_{A \restriction b} \Delta$}
				\RightLabel{\footnotesize $b$-$l$}
				\UnaryInfC{$b, \Gamma \Rightarrow_A \Delta$}
				\DisplayProof
				&&
				\AxiomC{$e ,\Gamma \Rightarrow_{A \restriction b} \Delta$}
				\AxiomC{$f ,\Gamma \Rightarrow_{A \restriction \overline b} \Delta$}
				\RightLabel{\footnotesize $+_b$-$l$}
				\BinaryInfC{$e +_b f, \Gamma \Rightarrow_A \Delta$}
				\DisplayProof
			\end{align*}
			\begin{align*}
				\AxiomC{$e, g, \Gamma \Rightarrow_A \Delta$}
				\RightLabel{\footnotesize $\cdot$-$l$}
				\UnaryInfC{$e \cdot g, \Gamma \Rightarrow_A \Delta$}
				\DisplayProof
				&&
				\AxiomC{$e , e^{(b)},\Gamma \Rightarrow_{A \restriction b} \Delta$}
				\AxiomC{$\Gamma \Rightarrow_{A \restriction \overline b} \Delta$}
				\RightLabel{\footnotesize $(b)$-$l$}
				\BinaryInfC{$e^{(b)}, \Gamma \Rightarrow_A \Delta$}
				\DisplayProof
			\end{align*}
			\item[\textbf{Right logical rules}]
				\begin{align*}
				\AxiomC{$\Gamma \Rightarrow_A \Delta$}
				\RightLabel{\footnotesize $b$-$r$}
				\LeftLabel{$(\dagger)$}
				\UnaryInfC{$\Gamma \Rightarrow_A b, \Delta$}
				\DisplayProof
				&&
				\AxiomC{$\Gamma \Rightarrow_{A \restriction b} e, \Delta$}
				\AxiomC{$\Gamma \Rightarrow_{A \restriction \overline b} f, \Delta$}
				\RightLabel{\footnotesize $+_b$-$r$}
				\BinaryInfC{$\Gamma \Rightarrow_A e +_b f, \Delta$}
				\DisplayProof
			\end{align*}
			\begin{align*}
				\AxiomC{$\Gamma \Rightarrow_A   e, f, \Delta$}
				\RightLabel{\footnotesize $\cdot$-$r$}
				\UnaryInfC{$\Gamma \Rightarrow_A e \cdot f, \Delta$}
				\DisplayProof
				&&
				\AxiomC{$\Gamma \Rightarrow_{A \restriction b} e, e^{(b)}, \Delta$}
				\AxiomC{$\Gamma \Rightarrow_{A \restriction \overline b} \Delta$}
				\RightLabel{\footnotesize $(b)$-$r$}
				\BinaryInfC{$\Gamma \Rightarrow_A e^{(b)}, \Delta$}
				\DisplayProof
			\end{align*}
			\item[\textbf{Axioms and modal rules}]
			\begin{align*}
				\AxiomC{}
				\RightLabel{\footnotesize $\mathsf{id}$}
				\UnaryInfC{$\epsilon \Rightarrow_A \epsilon$}
				\DisplayProof
				&&
				\AxiomC{}
				\RightLabel{\footnotesize $\mathsf{\bot}$}
				\UnaryInfC{$\Gamma \Rightarrow_\emptyset \Delta$}
				\DisplayProof
				&&
				\AxiomC{$\Gamma \Rightarrow_\mathsf{At} \Delta$}
				\RightLabel{\footnotesize $\mathsf{k}$}
				\UnaryInfC{$p, \Gamma \Rightarrow_A p, \Delta$}
				\DisplayProof
				&&
				\AxiomC{$\Gamma \Rightarrow_\mathsf{At} 0$}
				\RightLabel{\footnotesize $\mathsf{k}_0$}
				\UnaryInfC{$p, \Gamma \Rightarrow_A \Delta$}
				\DisplayProof
			\end{align*}
		\end{enumerate}
		\end{framed}
	\vspace{-10pt}
		\captionof{figure}{The rules of $\mathsf{SGKAT}$. The side condition $(\dagger)$ requires that $A \restriction b = A$.}
		\label{fig:sgkat}
	}
		 \begin{example}
	 \label{exmp:validsgkatsequent}
	 An example of a valid sequent is given by $(cp)^{(b)} \Rightarrow_\mathsf{At} (p (cp +_b 1))^{(b)}$. The antecedent denotes guarded strings $\alpha_1 p \alpha_2 p \cdots \alpha_n p \alpha_{n+1}$ where $\alpha_i \leq b, c$ for each $1 \leq i \leq n$, and $\alpha_{n+1} \leq \overline b$. The succedent denotes such strings where $\alpha_{i} \leq c$ is only required for those $1 \leq i \leq n$ where $i$ is even.
	 \end{example}
	 \begin{remark}
	 Like the sequents for Kleene Algebra in~\cite{DasPous17}, our sequents express language \emph{inclusion}, rather than language equivalence. For Kleene Algebra this difference is insignificant, as the two notions are interdefinable using unrestricted union: $\int{e} \subseteq \int{f} \Leftrightarrow \int{e + f}  = \int{f}$. For $\gkat$, however, it is not clear how to define language inclusion in terms of language equivalence. As a result, an advantage of axiomatising language inclusion rather than language equivalence, is that the while-operator can be axiomatised as a \emph{least} fixed point, eliminating the need for a \emph{strict productivity} requirement as is present in the axiomatisation in~\cite{Smolka20}. 
	 \end{remark}
	 Given a set of atoms $A$ and a test $b$, we write $A \restriction b$ for $A \fusion \int{b}$, \emph{i.e.} the set of atoms $\{\alpha \in A : \alpha \leq b\}$. The rules of $\mathsf{SGKAT}$\index{$\mathsf{SGKAT}$} are given in Figure \ref{fig:sgkat}. Importantly, the rules are always applied to the leftmost expression in a cedent. As a result, we have the following lemma, that later will be used in the completeness proof. 
    \begin{lemma}
    \label{lem:atmostone}
    Let $\Gamma \Rightarrow_A \Delta$ be a sequent, and let $\mathsf{r}$ be any rule of $\mathsf{SGKAT}$. Then there is at most one rule instance of $\mathsf{r}$ with conclusion $\Gamma \Rightarrow_A \Delta$.
    \end{lemma}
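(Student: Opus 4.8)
The plan is to proceed by a case analysis on the rule $\mathsf{r}$, exploiting the fact recorded just above that every rule of $\mathsf{SGKAT}$ acts on the \emph{leftmost} expression of a cedent. Concretely, an instance of a left logical rule has its principal formula at the head of the antecedent, an instance of a right logical rule has it at the head of the succedent, an instance of $\mathsf{k}$ has a primitive program at the head of both the antecedent and the succedent, and an instance of $\mathsf{k}_0$ has a primitive program at the head of the antecedent. So, for a fixed $\mathsf{r}$, the conclusion $\Gamma \Rightarrow_A \Delta$ already determines which expression is decomposed and where. As a preliminary I would record the (routine) unique readability of $\gkat$-expressions: if an expression is of the form $e \cdot f$ then $e$ and $f$ are functions of it, and likewise the components $e, b, f$ of $e +_b f$ and the components $e, b$ of $e^{(b)}$ are functions of the expression.

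With this in hand, each case is a direct reading-off. For $b$-$l$, $+_b$-$l$, $\cdot$-$l$, $(b)$-$l$, the head of $\Gamma$ in the conclusion is respectively of the form $b$, $e +_b f$, $e \cdot g$, $e^{(b)}$; by unique readability this fixes the test $b$ and the relevant subexpressions, the tail of $\Gamma$ is the context $\Gamma$ of the schema, and $A$ and $\Delta$ are fixed outright. Since $A \restriction b$ and $A \restriction \overline b$ are themselves functions of $A$ and $b$, the premise(s) displayed in Figure \ref{fig:sgkat} are thereby uniquely determined, so there is at most one instance. The right logical rules are handled symmetrically, reading off the principal formula from the head of $\Delta$; the side condition $(\dagger)$ of $b$-$r$ does not affect the count, as it merely decides whether there is one instance with the given conclusion or none. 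For $\mathsf{id}$ and $\bot$ there is nothing to decompose and there are no premises, so a conclusion of the prescribed shape ($\epsilon \Rightarrow_A \epsilon$, respectively $\Gamma \Rightarrow_\emptyset \Delta$) admits exactly one instance. For $\mathsf{k}$, matching $p, \Gamma \Rightarrow_A p, \Delta$ forces $p$ to be the head of the antecedent, which must coincide with the head of the succedent, and $\Gamma, \Delta$ to be the respective tails, so the premise $\Gamma \Rightarrow_{\mathsf{At}} \Delta$ is determined; $\mathsf{k}_0$ is analogous, with premise $\Gamma \Rightarrow_{\mathsf{At}} 0$.

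I do not anticipate a genuine obstacle: the lemma amounts to the observation that each rule schema is ``conclusion-driven''. The one point worth a remark is that the grammar of tests overlaps with that of expressions — a conjunction $b \cdot c$ of tests is also an expression of the form $e \cdot f$ — so that a single sequent may be the conclusion of instances of \emph{several different} rules (e.g.\ of both $b$-$l$ and $\cdot$-$l$); but this is consistent with the lemma, which concerns one fixed rule $\mathsf{r}$ at a time. Likewise, the emptiness and primitivity requirements in the axioms and modal rules, and the side condition on $b$-$r$, only restrict the class of conclusions to which $\mathsf{r}$ is applicable, and never create a second instance.
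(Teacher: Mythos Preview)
Your proposal is correct and is exactly the argument the paper has in mind: the lemma is stated in the paper without proof, immediately after the remark that ``the rules are always applied to the leftmost expression in a cedent'', and your case analysis simply spells out that observation. Your note about the overlap between the grammars of tests and expressions (so that, e.g., both $b$-$l$ and $\cdot$-$l$ may apply to the same sequent) is a nice clarification that the paper leaves implicit.
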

    
	 \begin{remark}
	 Following~\cite{DasPous17}, we call $\mathsf{k}$ a `modal' rule. The reason is simply that it looks like the rule $\mathsf{k}$ (sometimes called $\mathsf{K}$ or $\lb$) in the standard sequent calculus for basic modal logic. Our system also features a second modal rule, called $\mathsf{k}_0$. Like $\mathsf{k}$, this rule adds a primitive program $p$ to the antecedent of the sequent. Since the premiss of $\mathsf{k_0}$ entails that $\int{\Gamma} = \int{0}$, the antecedent of its conclusion will denote the empty language, and is therefore included in any succedent $\Delta$.  
	\end{remark}
	\begin{remark}
	\label{rmk:symmetry}
	Note that the rules of $\mathsf{SGKAT}$ are highly symmetric. Indeed, the only rules that behave differently on the left than on the right, are the $b$-rules and $\mathsf{k}_0$. Note that $b$-$l$ changes the set of atoms, while $b$-$r$ uses a side condition. The asymmetry of $\mathsf{k}_0$ is clear: the succedent of the premiss has a $0$, whereas the antecedent does not. A third asymmetry will be introduced in Definition \ref{def:sgkatproofsfairness}, with a condition on infinite branches that is sensitive to $(b)$-$l$ but not to $(b)$-$r$. 
\end{remark}
	\begin{remark}
	\label{rmk:earlyterm}
	The authors of~\cite{SchmidKK021} study a variant of $\gkat$ that omits the so-called \emph{early termination axiom}, which equates all programs that eventually fail. They give a denotational model of this variant in the form of certain kinds of trees. We conjecture that omitting the rule $\mathsf{k}_0$ from our system will make it sound and complete with respect to this denotational model. 
\end{remark}
An \emph{$\mathsf{SGKAT}^\infty$-derivation}\index{derivation!$\mathsf{SGKAT}^\infty$-} is a (possibly infinite) tree generated by the rules of $\mathsf{SGKAT}$. Such a derivation is said to be \emph{closed} if every leaf is an axiom.
\begin{definition}[Proof]
\label{def:sgkatproofsfairness}
A closed $\mathsf{SGKAT}^\infty$-derivation is said to be an \emph{$\mathsf{SGKAT}^\infty$-proof}\index{proof!$\mathsf{SGKAT}^\infty$-} if every infinite branch is \emph{fair}\index{proof!$\mathsf{SGKAT}^\infty$-!fair} for $(b)$-$l$, \emph{i.e.} contains infinitely many applications of the rule $(b)$-$l$.  
\end{definition}
We write $\mathsf{SGKAT} \vdash^\infty \Gamma \Rightarrow_A \Delta$ if there is an $\mathsf{SGKAT}^\infty$-proof of $\Gamma \Rightarrow_A \Delta$. 
\newpage
\begin{example}
    Not every $\mathsf{SGKAT}^\infty$-derivation is a proof. Consider for instance the following derivation, where $(\bullet)$ indicates that the derivation repeat itself.
    	{
	 \FrameSep0pt
		 \begin{framed}
		\begin{align*}
        \AxiomC{$p \Rightarrow_{\mathsf{At} \restriction b} 1, 1^{(b)}, p$ \ $(\bullet)$}
        \AxiomC{}
        \RightLabel{\footnotesize $\mathsf{\bot}$}
        \UnaryInfC{$p \Rightarrow_{\emptyset} p$}
        \LeftLabel{\footnotesize $(b)$-$r$}
        \BinaryInfC{$p \Rightarrow_{\mathsf{At} \restriction b} 1^{(b)}, p$}
        \LeftLabel{\footnotesize $1$-$r$}
        \UnaryInfC{$p \Rightarrow_{\mathsf{At} \restriction b} 1, 1^{(b)}, p$ \ $(\bullet)$}
        \AxiomC{}
        \RightLabel{\footnotesize $\mathsf{id}$}
        \UnaryInfC{$\epsilon \Rightarrow_{\mathsf{At}} \epsilon$}
        \RightLabel{\footnotesize $\mathsf{k}$}
        \UnaryInfC{$p \Rightarrow_{\mathsf{At} \restriction \overline{b}} p$}
        \RightLabel{\footnotesize $(b)$-$r$}
        \BinaryInfC{$p \Rightarrow_{\mathsf{At}} 1^{(b)}, p$}
  		\RightLabel{\footnotesize $\cdot$-$r$}
		\UnaryInfC{$p \Rightarrow_{\mathsf{At}} 1^{(b)} \cdot p$}
        \DisplayProof
    	\end{align*}
	\end{framed}
	\vspace{-10pt}
		\captionof{figure}{An $\mathsf{SGKAT}^\infty$-derivation that is not a proof.\\[10pt]}
		\label{fig:derivationthatisnotaproof}
	}
\end{example}
\begin{example}
	\label{example:sgkatproof}
	Let $\Delta_1 := (p (cp +_b 1))^{(b)}$ and $\Delta_2 := cp   +_b 1, \Delta_1$. The following proof $\Pi_1$ is an example $\mathsf{SGKAT}^\infty$-proof of the sequent of Example \ref{exmp:validsgkatsequent}. We again use $(\bullet)$ to indicate that the proof repeats itself at this leaf and, for the sake of readability, omit branches that can be closed immediately by an application of $\bot$.
	{
	 \FrameSep0pt
		 \begin{framed}
		\begin{align*}
		\small
		\AxiomC{$ (c p)^{(b)} \Rightarrow_{\mathsf{At}} \Delta_1$ \ $(\bullet)$}
		\LeftLabel{\footnotesize $\mathsf{k}$}
		\UnaryInfC{$p, (cp)^{(b)} \Rightarrow_{\mathsf{At} \restriction{bc}} p, \Delta_1$}
		\LeftLabel{\footnotesize $c$-$r$}
		\UnaryInfC{$p, (cp)^{(b)} \Rightarrow_{\mathsf{At} \restriction bc} c, p, \Delta_1$}
		\LeftLabel{\footnotesize $\cdot$-$r$}
		\UnaryInfC{$p, (cp)^{(b)} \Rightarrow_{\mathsf{At} \restriction bc} cp, \Delta_1$}
		\LeftLabel{\footnotesize $+_b$-$r$}
		\UnaryInfC{$p, (cp)^{(b)} \Rightarrow_{\mathsf{At} \restriction bc} \Delta_2$}
		\LeftLabel{\footnotesize $c$-$l$}
		\UnaryInfC{$c, p, (cp)^{(b)} \Rightarrow_{\mathsf{At} \restriction b} \Delta_2$}
		\LeftLabel{\footnotesize $\cdot$-$l$}
		\UnaryInfC{$cp, (cp)^{(b)} \Rightarrow_{\mathsf{At} \restriction b} \Delta_2$}
		\AxiomC{}
		\RightLabel{\footnotesize $\mathsf{id}$}
		\UnaryInfC{$\epsilon \Rightarrow_{\mathsf{At} \restriction \overline b} \epsilon$}
		\RightLabel{\footnotesize $(b)$-$r$}
		\UnaryInfC{$\epsilon \Rightarrow_{\mathsf{At} \restriction \overline b} \Delta_1$}
		\RightLabel{\footnotesize $1$-$r$}
		\UnaryInfC{$\epsilon \Rightarrow_{\mathsf{At} \restriction \overline b} 1, \Delta_1$}
		\RightLabel{\footnotesize $+_b$-$r$}
		\UnaryInfC{$\epsilon \Rightarrow_{\mathsf{At} \restriction \overline b} \Delta_2$}
		\RightLabel{\footnotesize $(b)$-$l$}
		\BinaryInfC{$(cp)^{(b)} \Rightarrow_\mathsf{At} \Delta_2$}
		\LeftLabel{\footnotesize $\mathsf{k}$}
		\UnaryInfC{$p, (c p)^{(b)}  \Rightarrow_{\mathsf{At} \restriction bc} p, (cp +_b 1), \Delta_1$}
		\LeftLabel{\footnotesize $\cdot$-$r$}
		\UnaryInfC{$p, (cp)^{(b)} \Rightarrow_{\mathsf{At} \restriction bc}  p (cp +_b 1), \Delta_1$}
		\LeftLabel{\footnotesize $(b)$-$r$}
		\UnaryInfC{$p, (cp)^{(b)} \Rightarrow_{\mathsf{At} \restriction bc} \Delta_1$}
		\LeftLabel{$c$-$l$}
		\UnaryInfC{$c, p, (cp)^{(b)} \Rightarrow_{\mathsf{At} \restriction b} \Delta_1$}
		\LeftLabel{$\cdot$-$l$}
		\UnaryInfC{$cp, (cp)^{(b)} \Rightarrow_{\mathsf{At} \restriction b} \Delta_1$}
		\AxiomC{}
		\RightLabel{\footnotesize $\mathsf{id}$}
		\UnaryInfC{$\epsilon \Rightarrow_{\mathsf{At} \restriction \overline b} \epsilon$}
		\RightLabel{\footnotesize $(b)$-$r$}
		\UnaryInfC{$\epsilon \Rightarrow_{\mathsf{At} \restriction \overline b} \Delta_1$}
		\RightLabel{\footnotesize $(b)$-$l$}
		\BinaryInfC{$(cp)^{(b)} \Rightarrow_{\mathsf{At}} \Delta_1$ \ $(\bullet)$}
		\DisplayProof
	\end{align*}
	\end{framed}
	\vspace{-10pt}
		\captionof{figure}{The $\mathsf{SGKAT}^\infty$-proof $\Pi_1$.\\[10pt]}
		\label{fig:proofp1}
	}

	To illustrate the omission of branches that can be immediately closed by an application of $\bot$, let us write out the two applications of $+_b$-$r$ in $\Pi_1$.
	\begin{align*}
	\AxiomC{$\epsilon \Rightarrow_{\mathsf{At} \restriction bc} cp, \Delta_1$}
	\AxiomC{}
	\RightLabel{\footnotesize $\bot$}
	\UnaryInfC{$\epsilon \Rightarrow_{\emptyset} 1, 	\Delta_1$}
	\RightLabel{\footnotesize $+_b$-$r$}
	\BinaryInfC{$\epsilon \Rightarrow_{\mathsf{At} \restriction bc} \Delta_2$}
	\DisplayProof
	&&
	\AxiomC{}
	\LeftLabel{\footnotesize $\bot$}
	\UnaryInfC{$\epsilon \Rightarrow_{\emptyset} cp, \Delta_1$}
	\AxiomC{$\epsilon \Rightarrow_{\mathsf{At} \restriction \overline b} 1, \Delta_1$}
	\RightLabel{\footnotesize $+_b$-$r$}
	\BinaryInfC{$\epsilon \Rightarrow_{\mathsf{At} \restriction \overline b} \Delta_2$}
	\DisplayProof
	\end{align*}
	It can also be helpful to think of the set of atoms as \emph{selecting} one of the premisses.
\end{example}
We close this section with a useful definition and a lemma.
\begin{definition}[Exposure]
A list $\Gamma$ of expressions is said to be \emph{exposed}\index{exposed} if it is either empty or begins with a primitive program. 
\end{definition}
Recall that the sets of primitive tests and primitive programs are disjoint. Hence an exposed list $\Gamma$ cannot start with a test. The following easy lemma will be useful later on.
\begin{lemma}
\label{lem:exposed}
Let $\Gamma$ and $\Delta$ be exposed lists of expressions. Then:
\begin{enumerate}[label = (\roman*)]
\item $\alpha x \in \int{\Gamma} \Leftrightarrow \beta x \in \int{\Gamma}$ for all $\alpha, \beta \in \mathsf{At}$
\item $\Gamma \Rightarrow_{\mathsf{At}} \Delta$ is valid if and only if $\Gamma \Rightarrow_A \Delta$ is valid for some $A \not= \emptyset$. 
\end{enumerate}
\end{lemma}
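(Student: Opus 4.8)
The plan is to reduce everything to one observation: prefixing a language by a primitive program erases all information about the leading atom of its strings. Concretely, for any language $L$ of guarded strings I would first record the identity $\int{p} \diamond L = \{\gamma p w : \gamma \in \mathsf{At},\ w \in L\}$, obtained by unfolding the definitions of $\diamond$ and of $\int{p} = \{\gamma p \delta : \gamma,\delta \in \mathsf{At}\}$: in $\{u\zeta v : u\zeta \in \int{p},\ \zeta v \in L\}$ the condition $u\zeta \in \int{p}$ forces $u = \gamma p$ and $\zeta = \delta$, after which the continuation $\zeta v \in L$ places no constraint on $\gamma$. I would also record the two elementary facts $\mathsf{At} \diamond L = L$ (so $\mathsf{At}$ is a left unit for $\diamond$) and $A \diamond L = \{\alpha y : \alpha \in A,\ \alpha y \in L\}$ for any $A \subseteq \mathsf{At}$.

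For item (i), I would split on whether the exposed list $\Gamma$ is empty. If it is, then $\int{\Gamma} = \mathsf{At}$, so $\alpha x \in \int{\Gamma}$ holds exactly when $x = \epsilon$, a condition independent of $\alpha$, and the equivalence is immediate (this case also silently covers $x = \epsilon$ in general). If $\Gamma = p,\Gamma'$, then applying the identity above with $L = \int{\Gamma'}$ gives $\int{\Gamma} = \{\gamma p w : \gamma \in \mathsf{At},\ w \in \int{\Gamma'}\}$; hence $\alpha x \in \int{\Gamma}$ iff $x$ has the form $p w$ with $w \in \int{\Gamma'}$, again a condition not mentioning $\alpha$, and (i) follows.

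For item (ii), the forward direction is trivial: $\mathsf{At}$ is non-empty (it has $2^{|T|} \geq 1$ elements), so $A = \mathsf{At}$ itself witnesses the existential. For the converse, suppose $A \diamond \int{\Gamma} \subseteq \int{\Delta}$ with $A \neq \emptyset$; since $\mathsf{At} \diamond \int{\Gamma} = \int{\Gamma}$, it suffices to show $\int{\Gamma} \subseteq \int{\Delta}$. Given $w \in \int{\Gamma}$, write $w = \gamma y$ with $\gamma \in \mathsf{At}$, and fix some $\alpha \in A$ (using $A \neq \emptyset$). By (i) applied to $\Gamma$ we get $\alpha y \in \int{\Gamma}$, whence $\alpha y \in A \diamond \int{\Gamma} \subseteq \int{\Delta}$ because $\alpha \in A$; then (i) applied to the exposed list $\Delta$ transfers this back from $\alpha$ to $\gamma$, giving $w = \gamma y \in \int{\Delta}$. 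I do not anticipate a real obstacle: the only care needed is the bookkeeping of $\diamond$ and checking that the degenerate cases $\Gamma = \epsilon$ and $\Delta = \epsilon$ are covered, which they are via the $x = \epsilon$ instance of (i). The genuine content of the lemma sits in (i); once that is in hand, (ii) is just shuttling a witness atom into $\int{\Gamma}$ and back out through $\int{\Delta}$.
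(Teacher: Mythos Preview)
Your proposal is correct and follows essentially the same approach as the paper: a case split on $\Gamma = \epsilon$ versus $\Gamma = p,\Gamma'$ for item (i), using the explicit description $\int{p}\diamond L = \{\gamma p w : \gamma \in \mathsf{At},\ w \in L\}$, and then for item (ii) shuttling an arbitrary atom to one in $A$ and back via two applications of (i). The only cosmetic difference is that you spell out the preliminary identities about $\diamond$ a bit more explicitly than the paper does.
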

\section{Soundness} 
\label{sec:sgkatsoundess}
In this section we prove that $\mathsf{SGKAT}^\infty$ is sound. We will first prove that \emph{well-founded} (that is, finite) $\mathsf{SGKAT}^\infty$-proofs are sound. The following straightforward facts will be useful in the soundness proof.
\begin{lemma}
\label{lem:soundnessaux}
For any set $A$ of atoms, test $b$, and cedent $\Theta$, we have:
\begin{enumerate}[label = (\roman*)]
	\item $\int{e +_b f, \Theta} = (\int{b} \fusion \int{e, \Theta}) \cup (\int{\overline b} \fusion \int{f, \Theta})$;
	\item $\int{e^{(b)}, \Theta} = (\int{b} \fusion \int{e, e^{(b)}, \Theta}) \cup (\int{\overline{b}} \fusion \int{\Theta})$. 
\end{enumerate}
\end{lemma}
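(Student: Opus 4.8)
The plan is to unfold both sides using the definitions of the semantics together with a handful of elementary facts about the fusion product $\fusion$. I will use throughout that $\fusion$ is associative and distributes over arbitrary unions on both sides (all immediate from the definition $L \fusion K = \{x\alpha y : x\alpha \in L,\ \alpha y \in K\}$), that $\mathsf{At}$ is a two-sided unit for $\fusion$ on languages of guarded strings, and the observation that $\overline{\int{b}} = \int{\overline b}$, i.e.\ an atom fails to lie below $b$ exactly when it lies below $\overline b$; this last point is where the ``Boolean content'' hidden in the set of atoms is used, and it holds because $\mathsf{At}$ is the atom set of the free Boolean algebra on $T$. (The set $A$ mentioned in the statement plays no role and can be ignored.)

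For part (i): by the clauses for $\cdot$ and $+_b$ together with $\overline{\int b} = \int{\overline b}$ we get $\int{e +_b f} = (\int{b} \fusion \int{e}) \cup (\int{\overline b} \fusion \int{f})$. Fusing on the right with $\int{\Theta}$ and using right-distributivity over union and associativity yields
\[
\int{e +_b f, \Theta} \;=\; \bigl(\int{b} \fusion \int{e} \fusion \int{\Theta}\bigr) \cup \bigl(\int{\overline b} \fusion \int{f} \fusion \int{\Theta}\bigr),
\]
which is the claimed identity after re-contracting $\int{e} \fusion \int{\Theta} = \int{e, \Theta}$ and likewise for $f$.

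For part (ii): the real content is the fixed-point unfolding $\int{e}^{\int{b}} = \int{\overline b} \cup \bigl(\int{b} \fusion \int{e} \fusion \int{e^{(b)}}\bigr)$; once this is established, fusing on the right with $\int{\Theta}$ and distributing gives (ii) exactly as in (i). Write $B = \int{b}$ and $L = \int{e}$, so $\int{e^{(b)}} = L^B = \bigcup_{n \geq 0} (B \fusion L)^n \fusion \overline B$. The $n=0$ summand is $\mathsf{At} \fusion \overline B = \overline B$. For $n \geq 1$, part (i) of the auxiliary lemma stated above (with $B \fusion L$ in the role of $L$) gives $(B \fusion L)^n = (B \fusion L) \fusion (B \fusion L)^{n-1}$, so factoring $B \fusion L$ out of the union on the left and re-indexing gives $\bigcup_{n \geq 1}(B \fusion L)^n \fusion \overline B = (B \fusion L) \fusion L^B$. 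Hence $L^B = \overline B \cup \bigl(B \fusion L \fusion L^B\bigr)$, which is the displayed unfolding after translating back via $\overline B = \int{\overline b}$ and $L^B = \int{e^{(b)}}$.

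The only mildly delicate point is the index-shift in part (ii): the $n=0$ term must be split off first, and the distributivity of $\fusion$ over the infinite union must be applied on the correct side (pulling the factor out on the left), which is why part (i) of the auxiliary lemma, rather than the definition $M^{n+1} = M^n \fusion M$, is the version of the power law that is needed there. Everything else is routine bookkeeping with the definitions.
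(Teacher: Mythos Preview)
Your proof is correct and follows essentially the same approach as the paper: both arguments unfold the definitions, use distributivity of $\fusion$ over union, the identity $\overline{\int{b}} = \int{\overline b}$, and for part (ii) split off the $n=0$ summand and invoke the power law $M^{n+1} = M \fusion M^n$ (Lemma~\ref{lem:fusion_n_symm}(i)) to factor out $\int{b} \fusion \int{e}$. The only cosmetic difference is that you first derive the fixed-point unfolding of $\int{e^{(b)}}$ in isolation and then fuse with $\int{\Theta}$, whereas the paper carries $\int{\Theta}$ along throughout the calculation; your observation that the set $A$ in the hypothesis is irrelevant is also correct.
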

We prioritise the rules of $\mathsf{SGKAT}$ in order of occurrence in Figure \ref{fig:sgkat}, reading left-to-right, top-to-bottom. Hence, each left logical rule is of higher priority than each right logical rule, which is of higher priority than each axiom or modal rule. 
Recall that a rule is \emph{sound} if the validity of all its premisses implies the validity of its conclusion. Conversely, a rule is \emph{invertible}\index{invertible} if the validity of its conclusion implies the validity of all of its premisses. 

We say that a rule application \emph{has priority} of there is no higher-priority rule with the same conclusion. Conveniently, the following proposition entails that every rule instance which has priority is invertible. This will aid our proof search procedure in Section \ref{sec:sgkatcompleteness} .
\begin{proposition}
	\label{prop:wellfoundedsoundness}
	Every rule of $\mathsf{SGKAT}$ is sound. Moreover, every rule is invertible except for $\mathsf{k}$ and $\mathsf{k}_0$, which are invertible whenever they have priority. 
\end{proposition}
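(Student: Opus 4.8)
The plan is to verify soundness and invertibility rule by rule, grouping the rules by their behaviour. Soundness amounts to showing that if the premisses are valid, so is the conclusion; invertibility is the converse. For the logical rules, the key observation is that validity of $e_1,\dots,e_n \Rightarrow_A f_1,\dots,f_m$ is the statement $A \diamond \int{e_1\cdots e_n} \subseteq \int{f_1\cdots f_m}$, and both sides decompose in a way that matches the premisses exactly. Concretely, for $+_b$-$l$ and $+_b$-$r$ I would invoke Lemma~\ref{lem:soundnessaux}(i) together with the identity $A \diamond (\int{b} \diamond L) = (A\restriction b) \diamond L$ (and the disjointness of $\int{b}$ and $\int{\overline b}$ on atoms) to see that $A \diamond \int{e +_b f, \Gamma}$ is the disjoint union of $(A\restriction b)\diamond \int{e,\Gamma}$ and $(A\restriction \overline b)\diamond\int{f,\Gamma}$; a union is contained in $\int{\Delta}$ iff each part is, which is precisely the two premisses. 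The same bookkeeping with Lemma~\ref{lem:soundnessaux}(ii) handles $(b)$-$l$ and $(b)$-$r$. For $\cdot$-$l$, $\cdot$-$r$ there is nothing to do since $\int{e\cdot g,\Gamma} = \int{e,g,\Gamma}$ and similarly on the right. For $b$-$l$ we use $A \diamond \int{b,\Gamma} = (A\restriction b)\diamond\int{\Gamma}$, and for $b$-$r$ the side condition $A\restriction b = A$ is exactly what makes $\int{\Delta}$ and $b,\Delta$ interchangeable on the relevant atoms; this gives soundness and invertibility of both $b$-rules. The axioms $\mathsf{id}$ and $\bot$ are immediate (for $\bot$, $\emptyset \diamond L = \emptyset$), and they are vacuously invertible as they have no premisses.

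The interesting rules are the modal ones. For $\mathsf{k}$, soundness: if $\int{\Gamma} \subseteq \int{\Delta}$ (the premiss, since the label there is $\mathsf{At}$), then $A \diamond \int{p,\Gamma} = A\diamond \int{p}\diamond\int{\Gamma} \subseteq A\diamond\int{p}\diamond\int{\Delta} = A\diamond\int{p,\Delta}$. For invertibility under priority, I would argue contrapositively: suppose $\Gamma\Rightarrow_{\mathsf{At}}\Delta$ is not valid, so there is a guarded string $\alpha x \in \int{\Gamma}\setminus\int{\Delta}$. Since $\mathsf{k}$ has priority, no left or right logical rule applies to $p,\Gamma\Rightarrow_A p,\Delta$, which forces $\Gamma$ and $\Delta$ to be exposed, and also forces $A \neq \emptyset$ (else $\bot$ would apply). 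Using Lemma~\ref{lem:exposed}(i), membership of $\alpha x$ in $\int{\Gamma}$ and non-membership in $\int{\Delta}$ are insensitive to the leading atom, so for any $\alpha_0 \in A$ we can produce $\alpha_0 p \alpha x \in A\diamond\int{p,\Gamma}$; and this string is not in $A\diamond\int{p,\Delta}$, because any string there has the form $\alpha_0 p y$ with $\alpha x \in \int{\Delta}$ — contradiction. Hence the conclusion of $\mathsf{k}$ is not valid either, establishing invertibility. For $\mathsf{k}_0$: soundness because the premiss $\Gamma\Rightarrow_{\mathsf{At}} 0$ forces $\int{\Gamma} = \emptyset$, hence $A\diamond\int{p,\Gamma} = \emptyset \subseteq \int{\Delta}$; and invertibility under priority is the same kind of argument — priority forces $\Gamma$ exposed and $A\neq\emptyset$, and if $\int{\Gamma}$ were nonempty one would again manufacture, via Lemma~\ref{lem:exposed}(i), an element of $A\diamond\int{p,\Gamma}$ not in $\int{\Delta}$ (here crucially one checks $\Delta$ cannot contain that string, since otherwise... — this needs a short argument that when $\mathsf{k}_0$ has priority, $\int{\Delta}$ cannot `cover' such strings; more carefully, I would show the contrapositive of $\mathsf{k}_0$'s invertibility via Lemma~\ref{lem:exposed}(ii) and the determinacy remark).

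I expect the main obstacle to be the invertibility of $\mathsf{k}_0$ under priority: unlike $\mathsf{k}$, here the succedent of the premiss is the fixed expression $0$ rather than something read off the conclusion, so I cannot simply pull a counterexample string back and forth. The argument has to show that if $\Gamma\Rightarrow_{\mathsf{At}} 0$ fails — i.e. $\int{\Gamma}\neq\emptyset$ — then $p,\Gamma\Rightarrow_A\Delta$ fails too, and this requires carefully using that $\mathsf{k}_0$ having priority pins down the shape of $\Gamma$, $\Delta$ and $A$ (exposed, exposed, nonempty) enough that determinacy of $\gkat$-languages (Remark~\ref{rmk:determinacy}) forbids $\int{\Delta}$ from absorbing the offending guarded string. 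I would isolate this as a small lemma about exposed cedents and handle it before assembling the case analysis. Everything else is the routine unfolding described above, organised as: (1) the product/test identities for labels, (2) the branching rules via Lemma~\ref{lem:soundnessaux}, (3) the $b$-rules, (4) the axioms, (5) $\mathsf{k}$, (6) $\mathsf{k}_0$.
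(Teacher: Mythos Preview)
Your treatment of the logical rules and axioms is correct and matches the paper's approach. For $\mathsf{k}$, your contrapositive argument works, but note that priority of $\mathsf{k}$ does \emph{not} force $\Gamma$ and $\Delta$ to be exposed: the leftmost expressions of the antecedent and succedent of the conclusion are already $p$, so no left or right logical rule can apply regardless of what $\Gamma$ and $\Delta$ look like. Priority only gives you $A \neq \emptyset$. Fortunately your argument does not actually need exposedness of $\Gamma$ or $\Delta$: from $\alpha x \in \int{\Gamma} \setminus \int{\Delta}$ and any $\alpha_0 \in A$ you get $\alpha_0 p \alpha x \in A \diamond \int{p, \Gamma}$ directly, and this cannot lie in $\int{p, \Delta}$ since any string there has the form $\beta p \gamma y$ with $\gamma y \in \int{\Delta}$. (The paper instead uses Lemma~\ref{lem:exposed} applied to the exposed cedent $p,\Gamma$, together with the cancellation property $\int{p} \diamond L \subseteq \int{p} \diamond K \Rightarrow L \subseteq K$.)

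The genuine gap is in your handling of $\mathsf{k}_0$. Your claim that priority forces $\Delta$ to be exposed is false: if $\Delta$ begins with a test $b$ such that $A \restriction b \neq A$, then the side condition of $b$-$r$ fails, so $b$-$r$ does \emph{not} apply and $\mathsf{k}_0$ can still have priority. This case must be handled separately, and it is not clear how determinacy (Remark~\ref{rmk:determinacy}) would help: determinacy constrains pairs of strings in a $\gkat$-language sharing a common prefix, whereas what you need is to argue that a single string $\alpha_0 p \beta x$ with $\alpha_0 \in A$ cannot belong to $\int{\Delta}$. The paper instead proceeds by a direct case analysis on the leftmost expression of $\Delta$ --- either $\Delta = \epsilon$, or it begins with a test $b$ with $A \restriction b \neq A$, or it begins with a primitive program $q \neq p$ --- and in each case reads off the shape of strings in $\int{\Delta}$ to see they cannot match $\alpha_0 p \beta x$ for the relevant $\alpha_0$. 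You should replace the appeal to determinacy with this case analysis.
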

\begin{proof}[sketch]
We treat two illustrative cases. For the rule $+_b$-$r$, we find 
\begin{align*}
	&A \fusion \int{\Gamma} \subseteq \int{e +_b f} \fusion \int{\Delta} \\\
	&\ \  \Leftrightarrow A \fusion \int{\Gamma} \subseteq (\int{b} \fusion \int{e, \Delta}) \cup (\int{\overline b} \fusion \int{f, \Delta}) \\
	&\ \ \ \ \Leftrightarrow A \restriction b \fusion \int{\Gamma} \subseteq \int{e, \Delta} \text{ or } A \restriction \overline b \subseteq \int{f, \Delta},
	\end{align*}
	where the first equivalence holds due to Lemma \ref{lem:soundnessaux}.(ii), and the second due to $A \fusion \int{\Gamma} = (\int{b} \fusion A \fusion \int{\Gamma}) \cup (\int{\overline b} \fusion A \fusion \int{\Gamma})$ and Lemma \ref{lem:soundnessaux}.(i).
 
  The other rule we will treat is $\mathsf{k}$. Suppose first that some application of $\mathsf{k}$ does \emph{not} have priority. The only rule of higher priority than $\mathsf{k}$ which can have a conclusion of the form $p, \Gamma \Rightarrow_A p, \Delta$ is $\bot$. In this case $A = \emptyset$, which means that the conlusion must be valid. Hence any application of $\mathsf{k}$ that does not have priority is vacuously sound. It need, however, not be invertible, as the following rule instance demonstrates
	\[
	\AxiomC{$1 \Rightarrow_\mathsf{At} 0$}
	\LeftLabel{$\mathsf{k}$}
	\UnaryInfC{$p, 1 \Rightarrow_\emptyset p, 0$}
	\DisplayProof
	\]

	Next, suppose that some application of $\mathsf{k}$ does have priority. This means that the set $A$ of atoms in the conclusion $p, \Gamma \Rightarrow_A p, \Delta$ is \emph{not} empty. We will show that under this restriction the rule is both sound and invertible. Let $\alpha \in A$. We have
	\begin{align*}
	A \fusion \int{p, \Gamma} \subseteq \int{p, \Delta} &\Leftrightarrow A \fusion \int{p} \fusion \int{\Gamma} \subseteq \int{p} \fusion \int{\Delta} &\text{(seq. int.)}\\
	&\Leftrightarrow {\alpha} \fusion \int{p} \fusion \int{\Gamma} \subseteq \int{p} \fusion \int{\Delta} &\text{($\alpha \in A$, Lem. \ref{lem:exposed})}\\
	&\Leftrightarrow \int{p} \fusion \int{\Gamma} \subseteq \int{p} \fusion \int{\Delta} &\text{(Lem. \ref{lem:exposed})} \\
	&\Leftrightarrow \int{\Gamma} \subseteq \int{\Delta}, &\text{($\dagger$)}
	\end{align*}
	as required. The step marked by $\dagger$ is the following property of guarded languages: $\int{p} \fusion L = \int{p} \fusion K$ implies $L = K$.
\end{proof}
Proposition \ref{prop:wellfoundedsoundness} entails that all finite proofs are sound. We will now extend this result to non-well-founded proofs, closely following the treatment in~\cite{DasPous17}. We first recursively define a syntactic abbreviation: $[e^{(b)}]^0 := \overline b$ and $[e^{(b)}]^{n + 1} := be[e^{(b)}]^n$. 
\begin{lemma}
	\label{lem:finapprox}
	For every $n \in \mathbb{N}$: if we have $\mathsf{SGKAT} \vdash^\infty e^{(b)}, \Gamma \Rightarrow_A \Delta$, then we also have $\mathsf{SGKAT} \vdash^\infty [e^{(b)}]^n, \Gamma \Rightarrow_A \Delta$.
\end{lemma}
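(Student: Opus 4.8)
I will prove, by induction on $n$, the following strengthening with an arbitrary left context: for all lists $\Gamma_1, \Gamma_2$, every cedent $\Delta$ and every $A \subseteq \mathsf{At}$, if $\mathsf{SGKAT} \vdash^\infty \Gamma_1, e^{(b)}, \Gamma_2 \Rightarrow_A \Delta$ then $\mathsf{SGKAT} \vdash^\infty \Gamma_1, [e^{(b)}]^n, \Gamma_2 \Rightarrow_A \Delta$, where throughout the proof I read $[e^{(b)}]^0$ as the one-element list $\overline b$ and $[e^{(b)}]^{n+1}$ as the list $b, e, [e^{(b)}]^n$. The original statement is the case $\Gamma_1 = \epsilon$, $\Gamma_2 = \Gamma$ (afterwards folding the leading list back into a single expression with finitely many applications of $\cdot$-$l$). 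The extra context $\Gamma_1$ is genuinely needed: in the inductive step $[e^{(b)}]^{n+1}$ contains $[e^{(b)}]^n$ buried behind the prefix $b, e$, so the induction hypothesis can only be invoked after absorbing $b, e$ into the left context. Note that a measure on $\Gamma_1$ will not let us reduce to the case $\Gamma_1 = \epsilon$, because $(b)$-$l$ applied to a while-expression inside $\Gamma_1$ unfolds (and duplicates) its body; this is why the argument below is carried out coinductively on proof trees rather than by structural induction.

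The one new ingredient is a proof-theoretic (not merely semantic) form of invertibility of $(b)$-$l$, used \emph{deep} inside the antecedent. \emph{Claim:} if $\mathsf{SGKAT} \vdash^\infty \Gamma_1, e^{(b)}, \Gamma_2 \Rightarrow_A \Delta$, then both $\mathsf{SGKAT} \vdash^\infty \Gamma_1, b, e, e^{(b)}, \Gamma_2 \Rightarrow_A \Delta$ and $\mathsf{SGKAT} \vdash^\infty \Gamma_1, \overline b, \Gamma_2 \Rightarrow_A \Delta$. I would prove the Claim by a (pair of essentially identical) corecursive transformations of the given proof $\Pi$. The key structural observation is that the rules of $\mathsf{SGKAT}$ only ever act on the leftmost antecedent expression or on the succedent, so the distinguished occurrence of $e^{(b)}$, together with $\Gamma_2$, is untouched by every rule applied in $\Pi$ until the prefix before it has been completely consumed. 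One walks down $\Pi$ from the root: whenever the root rule acts strictly to the left of, or to the right of, the distinguished $e^{(b)}$ — a left rule on the head of $\Gamma_1$, a right rule, or $\mathsf{k}$/$\mathsf{k}_0$ on a primitive program at the head of $\Gamma_1$ — one re-applies that same rule with the block $b, e$ (resp.\ $\overline b$) inserted just before $e^{(b)}$, and recurses into each premise; the atom-set bookkeeping and the side condition of $b$-$r$ are insensitive to this insertion. The recursion terminates in exactly two ways: if $\Pi$ reaches $\bot$ (forcing the atom set to be $\emptyset$) one re-applies $\bot$; and if $\Pi$ applies $(b)$-$l$ to the distinguished $e^{(b)}$ — which forces $\Gamma_1 = \epsilon$, say the sequent being $e^{(b)}, \Gamma_2 \Rightarrow_{A'} \Delta'$ — one grafts on top the subproof of $\Pi$'s left premise $e, e^{(b)}, \Gamma_2 \Rightarrow_{A' \restriction b} \Delta'$ under a single application of $b$-$l$ (resp.\ the subproof of the right premise $\Gamma_2 \Rightarrow_{A' \restriction \overline b} \Delta'$ under a single application of $b$-$l$ introducing $\overline b$).

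It remains to check the transformed derivation is an $\mathsf{SGKAT}^\infty$-proof. Closedness is easy: its leaves are either $\bot$-leaves produced by the transformation, or leaves of a grafted subproof of $\Pi$; no $\mathsf{id}$-leaf can occur while $e^{(b)}$ is still present, since the antecedent is then non-empty. The delicate point — and the main obstacle — is fairness: every infinite branch must contain infinitely many applications of $(b)$-$l$. An infinite branch of the transformed derivation either eventually enters a grafted subproof of $\Pi$, in which case its tail is an infinite branch of a genuine $\mathsf{SGKAT}^\infty$-proof and hence fair; or it never does, in which case it is in step-by-step, rule-for-rule correspondence with an infinite branch of $\Pi$ (the transformation having only changed the \emph{form} of each rule instance, never which rule is applied), and therefore inherits the infinitely many applications of $(b)$-$l$ of that branch. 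This establishes the Claim.

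Finally, the induction on $n$ is immediate from the Claim. For $n = 0$, the second conclusion of the Claim is precisely $\mathsf{SGKAT} \vdash^\infty \Gamma_1, \overline b, \Gamma_2 \Rightarrow_A \Delta$, i.e.\ $\mathsf{SGKAT} \vdash^\infty \Gamma_1, [e^{(b)}]^0, \Gamma_2 \Rightarrow_A \Delta$. For the step, given $\mathsf{SGKAT} \vdash^\infty \Gamma_1, e^{(b)}, \Gamma_2 \Rightarrow_A \Delta$, the first conclusion of the Claim yields $\mathsf{SGKAT} \vdash^\infty (\Gamma_1, b, e), e^{(b)}, \Gamma_2 \Rightarrow_A \Delta$; applying the induction hypothesis with left context $\Gamma_1, b, e$ gives $\mathsf{SGKAT} \vdash^\infty \Gamma_1, b, e, [e^{(b)}]^n, \Gamma_2 \Rightarrow_A \Delta$, which is exactly $\mathsf{SGKAT} \vdash^\infty \Gamma_1, [e^{(b)}]^{n+1}, \Gamma_2 \Rightarrow_A \Delta$.
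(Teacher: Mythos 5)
Your proof is correct. The paper arrives at the same conclusion via a closely related but differently organised transformation: it first reduces to the case where the root rule of $\pi$ is $(b)$-$l$, and then, in the inductive step on $n$, rewrites the left subproof $\pi_1$ of $e, e^{(b)}, \Gamma \Rightarrow_{A \restriction b} \Delta$ by replacing the final segment $e^{(b)}, \Gamma$ of every antecedent by $[e^{(b)}]^n, \Gamma$, cutting each branch at the first point where the antecedent becomes exactly $[e^{(b)}]^n, \Gamma$ and closing those leaves by the induction hypothesis on $n$. The induction hypothesis is thus invoked \emph{inside} the transformation, precisely when the tracked occurrence becomes leftmost. You instead factor out an $n$-independent one-step deep inversion of $(b)$-$l$ (your Claim), which obliges you to generalise the statement over an arbitrary left context $\Gamma_1$ and to read $[e^{(b)}]^n$ as a list (folded back with $\cdot$-$l$ at the end), but in return makes the outer induction on $n$ entirely routine and concentrates all the delicate bookkeeping in one reusable lemma. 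Both arguments ultimately rest on the same structural facts: rules act only on the leftmost antecedent expression, so a non-leftmost occurrence of $e^{(b)}$ is carried along unchanged until it surfaces and can only be consumed by $(b)$-$l$; and the transformation sends each rule instance to an instance of the same rule, so the fairness condition is inherited. Your handling of these points, in particular the case split on whether an infinite branch of the transformed derivation eventually enters a grafted subproof, is sound.
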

We let the \emph{while-height}\index{while-height} $\wh(e)$\index{$\wh$} be the maximal nesting of while loops in a given expression $e$. Formally, 
\begin{itemize}
	\item $\wh(b) = \wh(p) = 0;$ \hfill $-$ $\wh(e \cdot f) = \wh(e +_b f) = \max\{\wh(e), \wh(f)\};$
	\item $\wh(e^{(b)}) = \wh(e) + 1.$
\end{itemize}
Given a list $\Gamma$, the \emph{weighted while-height}\index{weighted while-height} $\wwh(\Gamma)$\index{$\wwh$} of $\Gamma$ is defined to be the multiset $[\wh(e) : e \in \Gamma]$. We order such multisets using the Dershowitz–Manna ordering (for linear orders): we say that $N < M$ if and only if $N \not= M$ and for the greatest $n$ such that $N(n) \not= M(n)$, it holds that $N(n) < M(n)$. 

Note that in any $\mathsf{SGKAT}$-derivation the weighted while-height of the antecedent does not increase when reading bottom-up. Moreover, we have:
\begin{lemma}
	\label{lem:strictdescdm}
	$\wwh([e^{(b)}]^n, \Gamma) < \wwh(e^{(b)}, \Gamma)$ for every $n \in \mathbb{N}$. 
\end{lemma}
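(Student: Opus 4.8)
The plan is to unfold the definition of $[e^{(b)}]^n$ and compare while-heights directly. Write $k := \wh(e^{(b)}) = \wh(e) + 1$. First I would handle the base case $n = 0$: here $[e^{(b)}]^0 = \overline b$, which is a test, so $\wh(\overline b) = 0$, and $\wwh([e^{(b)}]^0, \Gamma) = [0] + \wwh(\Gamma)$ whereas $\wwh(e^{(b)}, \Gamma) = [k] + \wwh(\Gamma)$ with $k \geq 1$; in the Dershowitz–Manna order the two multisets differ, and the greatest index at which they differ is $k$, where the left side has a strictly smaller count (one fewer copy of $k$, and possibly some extra copies of $0$ which sit below $k$). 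Hence the strict inequality holds.

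For the inductive step $n+1$, recall $[e^{(b)}]^{n+1} = b\,e\,[e^{(b)}]^n$. When this list is inserted into $\Gamma$ in place of $e^{(b)}$, the resulting antecedent is the concatenation $b, e, [e^{(b)}]^n, \Gamma$ (after the relevant $\cdot$-$l$ decompositions, or simply reading the multiset of the list $be[e^{(b)}]^n$ followed by $\Gamma$). Its weighted while-height is therefore $[\wh(b)] + [\wh(e)] + \wwh([e^{(b)}]^n) + \wwh(\Gamma) = [0] + [\wh(e)] + \wwh([e^{(b)}]^n) + \wwh(\Gamma)$. The key observation is that every while-height appearing here is strictly below $k$: we have $\wh(b) = 0 < k$, $\wh(e) = k - 1 < k$, and by a straightforward side induction (or an auxiliary claim) every element of the multiset $\wwh([e^{(b)}]^n)$ is at most $\wh(e) = k-1 < k$, since $[e^{(b)}]^n$ is built only from $b$, $e$, and nested copies of $[e^{(b)}]^{n-1}$. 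Meanwhile $\wwh(e^{(b)}, \Gamma) = [k] + \wwh(\Gamma)$. So the two multisets agree on the $\Gamma$-part and on all indices $> k$; at index $k$ the left side has strictly fewer occurrences (zero contributed by the replaced block versus one on the right, and $\wwh(\Gamma)$ contributes equally to both); and all the extra elements on the left lie at indices $< k$. By the definition of the Dershowitz–Manna ordering recalled in the excerpt, this gives $\wwh([e^{(b)}]^{n+1}, \Gamma) < \wwh(e^{(b)}, \Gamma)$.

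The main obstacle, such as it is, is making precise the auxiliary claim that all while-heights occurring in $[e^{(b)}]^n$ are bounded by $\wh(e)$; this is an easy induction on $n$ using $\wh(b) = 0$ and the recursive shape $[e^{(b)}]^{n+1} = b\,e\,[e^{(b)}]^n$, but it is the one place where one must be careful not to accidentally reintroduce a copy of the while-loop $e^{(b)}$ itself (the syntactic abbreviation deliberately does not). Once that is in hand, the comparison is a purely mechanical application of the ordering's definition, splitting on whether $\wwh(\Gamma)$ happens to contain copies of $k$ or larger, none of which affects the index $k$ witness. I would phrase the final comparison once, abstractly: if $M$ and $N$ are multisets with $N = M - \{k\} + P$ where every element of $P$ is $< k$ and $k \in M$, then $N < M$.
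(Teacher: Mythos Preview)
Your argument is essentially the paper's: set $k := \wh(e^{(b)})$, observe that every while-height contributed by $[e^{(b)}]^n$ lies strictly below $k$, and conclude via the abstract comparison $N = M - \{k\} + P$ with $P$ below $k$. Two small points, however.

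First, a technical slip: $[e^{(b)}]^{n+1} = b e [e^{(b)}]^n$ is a \emph{single} expression (sequential composition), not a list. Hence $\wwh([e^{(b)}]^{n+1}, \Gamma) = [\wh(b e [e^{(b)}]^n)] + \wwh(\Gamma)$, a multiset that differs from $\wwh(\Gamma)$ by \emph{one} element, namely $\max(0,\wh(e),\wh([e^{(b)}]^n))$, not by three elements $[0] + [\wh(e)] + \wwh([e^{(b)}]^n)$ as you write. Your parenthetical ``after the relevant $\cdot$-$l$ decompositions'' does not apply to the statement as given. Fortunately this error is harmless: the single added element is still $< k$, so your abstract comparison still fires.

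Second, your outer induction on $n$ is dead weight: in the step for $n+1$ you never invoke the hypothesis $\wwh([e^{(b)}]^n,\Gamma) < \wwh(e^{(b)},\Gamma)$, only the auxiliary bound $\wh([e^{(b)}]^n) \leq \wh(e)$. The paper accordingly dispenses with the outer induction and argues directly for arbitrary $n$: one shows $\wwh([e^{(b)}]^n)(k) = 0 < 1 = \wwh(e^{(b)})(k)$, hence the two sides differ at index $k$; and any index $l$ where the left side exceeds the right must satisfy $l < k$. That is precisely your final paragraph, minus the detour.
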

Finally, we can prove the soundness theorem using induction on $\wwh(\Gamma)$. 
\begin{theorem}[Soundness]
	\label{thm:soundness}
	If $\mathsf{SGKAT} \vdash^\infty \Gamma \Rightarrow_A \Delta$, then $A \diamond \int{\Gamma} \subseteq \int{\Delta}$.
\end{theorem}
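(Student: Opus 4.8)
The plan is to bootstrap the well-founded soundness of Proposition~\ref{prop:wellfoundedsoundness} to arbitrary (possibly infinite) $\mathsf{SGKAT}^\infty$-proofs by a well-founded induction on $\wwh(\Gamma)$, ordered by the Dershowitz--Manna order, which is well-founded because $\mathbb{N}$ is. Note that for soundness only the \emph{soundness} half of Proposition~\ref{prop:wellfoundedsoundness} is needed (invertibility is reserved for proof search in Section~\ref{sec:sgkatcompleteness}): soundness of a rule is precisely the statement that if its conclusion is invalid then at least one premiss is invalid.

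So I would argue by contradiction: suppose $\mathsf{SGKAT} \vdash^\infty \Gamma \Rightarrow_A \Delta$ via a proof $\Pi$ but $A \diamond \int{\Gamma} \not\subseteq \int{\Delta}$, and assume the theorem already holds for every antecedent of strictly smaller weighted while-height. Starting from the root of $\Pi$ and repeatedly passing to an invalid premiss, one never reaches an axiom (axioms are valid), so this traces out an infinite branch $\beta$ of $\Pi$ consisting entirely of invalid sequents. Since $\Pi$ is a proof, $\beta$ is fair for $(b)$-$l$, hence contains an application of $(b)$-$l$; fix one, with conclusion $e^{(b)}, \Sigma \Rightarrow_{A'} \Theta$. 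This sequent lies on $\beta$, so it is invalid: $A' \diamond \int{e^{(b)}, \Sigma} \not\subseteq \int{\Theta}$.

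Now I would pass to a finite approximant. Unwinding the definition of $L^B$, an easy induction on $n$ gives $\int{[e^{(b)}]^n} = (\int{b} \fusion \int{e})^n \fusion \int{\overline b}$, hence $\int{e^{(b)}} = \bigcup_{n \geq 0} \int{[e^{(b)}]^n}$. Distributing $\fusion$ over this union, invalidity of $e^{(b)}, \Sigma \Rightarrow_{A'} \Theta$ yields an $n$ with $A' \diamond \int{[e^{(b)}]^n, \Sigma} \not\subseteq \int{\Theta}$, i.e.\ $[e^{(b)}]^n, \Sigma \Rightarrow_{A'} \Theta$ is invalid. On the other hand, the subtree of $\Pi$ above $e^{(b)}, \Sigma \Rightarrow_{A'} \Theta$ is again an $\mathsf{SGKAT}^\infty$-proof, so Lemma~\ref{lem:finapprox} gives $\mathsf{SGKAT} \vdash^\infty [e^{(b)}]^n, \Sigma \Rightarrow_{A'} \Theta$. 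Finally, Lemma~\ref{lem:strictdescdm} gives $\wwh([e^{(b)}]^n, \Sigma) < \wwh(e^{(b)}, \Sigma)$, and since the weighted while-height of the antecedent does not increase along a branch read from the root, $\wwh(e^{(b)}, \Sigma) \leq \wwh(\Gamma)$; hence $\wwh([e^{(b)}]^n, \Sigma) < \wwh(\Gamma)$. The induction hypothesis then declares $[e^{(b)}]^n, \Sigma \Rightarrow_{A'} \Theta$ valid, contradicting the previous step. Therefore $\Gamma \Rightarrow_A \Delta$ is valid.

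The crux, and the step I expect to require the most care, is precisely the passage from infinitary to well-founded reasoning: controlling the infinite descent created by nested while-loops. Three ingredients make it work: fairness, which guarantees that on every infinite branch some while-loop is eventually unfolded; the semantic identity $\int{e^{(b)}} = \bigcup_n \int{[e^{(b)}]^n}$, which lets one replace an unfolded loop by a finite approximant while \emph{preserving a counterexample}; and the Dershowitz--Manna order together with Lemma~\ref{lem:strictdescdm}, which certifies that this replacement strictly decreases the induction measure. (Equivalently one may avoid the contradiction: cut $\Pi$ at every $(b)$-$l$ application; by fairness no infinite branch survives, so by König's lemma this prefix is finite, and Proposition~\ref{prop:wellfoundedsoundness} reduces validity of the root to validity of the finitely many $(b)$-$l$-conclusions on the frontier, each handled by the finite-approximation argument and the induction hypothesis.)
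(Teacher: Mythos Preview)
Your proposal is correct and uses essentially the same approach as the paper: well-founded induction on $\wwh(\Gamma)$ under the Dershowitz--Manna order, combined with Lemma~\ref{lem:finapprox}, Lemma~\ref{lem:strictdescdm}, Proposition~\ref{prop:wellfoundedsoundness}, and the identity $\int{e^{(b)}} = \bigcup_n \int{[e^{(b)}]^n}$. The only difference is presentational: the paper argues directly---it takes the finite frontier $\mathcal{B}$ of first $(b)$-$l$-conclusions (finite by K\H{o}nig and fairness), then shows each such conclusion valid by applying the induction hypothesis to \emph{every} $n$ and taking the union---whereas your main argument is the contrapositive, tracing one infinite invalid branch and contradicting the induction hypothesis for a \emph{single} $n$; you yourself note the direct variant parenthetically, and the two are interchangeable.
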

\begin{proof}
	We prove this by induction on $\wwh(\Gamma)$. Given a proof $\pi$ of $\Gamma \Rightarrow_A \Delta$, let $\mathcal{B}$ contain for each infinite branch of $\pi$ the node of least depth to which a rule $(b)$-$l$ is applied. Note that $\mathcal{B}$ must be finite, for otherwise, by K\H{o}nig's Lemma, the proof $\pi$ cut off along $\mathcal{B}$ would have an infinite branch that does not satisfy the fairness condition. 

	Note that Proposition~\ref{prop:wellfoundedsoundness} entails that of every finite derivation with valid leaves the conclusion is valid. Hence, it suffices to show that each of the nodes in $\mathcal{B}$ is valid. To that end, consider an arbitrary such node labelled $e^{(b)}, \Gamma' \Rightarrow_{A'} \Delta'$ and the subproof $\pi'$ it generates. By Lemma \ref{lem:finapprox}, we have that $[e^{(b)}]^{n}, \Gamma' \Rightarrow_{A'} \Delta'$ is provable for every $n$. Lemma \ref{lem:strictdescdm} gives $\wwh([e^{(b)}]^{n}, \Gamma') < \wwh(e^{(b)}, \Gamma') \leq \wwh(\Gamma)$, and thus we may apply the induction hypothesis to obtain
	\[
	A' \diamond \llbracket [e^{(b)}]^{n} \rrbracket \diamond \llbracket \Gamma \rrbracket \subseteq \llbracket \Delta \rrbracket
	\]
	for every $n \in \mathbb{N}$. Then by 
	\begin{align*}
		\bigcup_{n}  (A' \diamond \llbracket [e^{(b)}]^{n} \rrbracket \diamond \llbracket \Gamma \rrbracket) = A' \diamond	\bigcup_{n} (\llbracket [e^{(b)}]^{n} \rrbracket)  \diamond \llbracket \Gamma \rrbracket = A' \diamond \llbracket e \rrbracket^{\llbracket b \rrbracket} \diamond \llbracket \Gamma \rrbracket,
	\end{align*}
	we obtain that $e^{(b)}, \Gamma' \Rightarrow_{A'} \Delta'$ is valid, as required.
\end{proof}
\section{Regularity}
\label{sec:finite-stateness}
Before we show that $\mathsf{SGKAT}^\infty$ is not only sound, but also complete, we will first show that every $\mathsf{SGKAT}^\infty$-proof is \emph{finite-state}, \emph{i.e.} that it contains at most finitely many distinct sequents. 

The results of this section crucially depend on the fact that we are only applying rules to the leftmost expressions of cedents. Indeed, otherwise one could easily create infinitely many distinct sequents by simply unravelling the same while loop $e^{(b)}$ infinitely often. 

Our treatment differs from that in~\cite{DasPous17} in two major ways. First, we formalise the notion of (sub)occurrence using the standard notion of a \emph{syntax tree}. Secondly, and more importantly, we obtain a quadratic bound on the number of distinct sequents occurring in a proof, rather than an exponential one. In fact, we will show that the number of distinct antecedents (succedents) is \emph{linear} in the size of the syntax tree of the antecedent (succedent) of the root. We will do this by showing that each leftmost expression of a cedent in the proof (given as node of the syntax tree of a root cedent) can only occur in the proof as the leftmost expresson of that \emph{unique} cedent. 
\begin{definition}
The \emph{syntax tree} $(T_e, l_e)$\index{syntax tree}\index{$T_e$} of an expression $e$ is a well-founded, labelled and ordered tree, defined by the following induction on $e$. 
\begin{itemize}
	\item If $e$ is a test or primitive program, its syntax tree only has a root node $\rho$, with label $l_e(\rho) := e$. 
	\item If $e = f_1 \circ f_2$ where $\circ = \cdot$ or $\circ = +_b$, its syntax tree again has a root node $\rho$ with label $l_e(\rho) = e$, and with two outgoing edges. The first edge connects $\rho$ to $(T_{f_1}, l_{f_1})$, the second edge connects it to $(T_{f_2}, l_{f_2})$.
	\item If $e = f^{(b)}$, its syntax tree again has a root node $\rho$ with label $l_e(\rho) = e$, but now with just one outgoing edge. This edge connects $\rho$ to $(T_{f}, l_f)$. 
\end{itemize}
\end{definition}
\begin{definition}
An \emph{$e$-cedent} is a list of nodes in the syntax tree of $e$. The \emph{realisation} of an $e$-cedent $u_1, \ldots, u_n$ is the cedent $l_e(u_1), \ldots, l_e(u_n)$. 
\end{definition}
Given the leftmost expression of a cedent, we will now explicitly define the cedent that it must be the leftmost expression of.
\begin{definition}
Let $u$ be a node in the syntax tree of $e$. We define the $e$-cedent $\mathsf{tail}(u)$ inductively as follows: 
\begin{itemize}
	\item For the root $\rho$ of $T_e$, we set $\mathsf{tail}(\rho)$ to be the empty list $\epsilon$.
	\item For every node $u$ of $T_e$, we define $\mathsf{tail}$ on its children by a case distinction on the main connective $\mathsf{mc}$ of $u$:
	\begin{itemize}
		\item if $\mathsf{mc} = \cdot$, let $u_1$ and $u_2$ be, respectively, the first and second child of $u$. We set $\mathsf{tail}(u_1) := u_2, \mathsf{tail}(u)$ and $\mathsf{tail}(u_2) := \mathsf{tail}(u)$. 
		\item if $\mathsf{mc} = +_b$, let $u_1$ and $u_2$ again be its first and second child. We set $\mathsf{tail}(u_1) := \mathsf{tail}(u_2) := \mathsf{tail}(u)$.
		\item if $\mathsf{mc} = (-)^{(b)}$, let $v$ be the single child of $u$. We set $\mathsf{tail}(v) := u, \mathsf{tail}(u)$.
	\end{itemize}
\end{itemize}
An $e$-cedent is called $\mathsf{tail}$-generated if it is empty or of the form $u, \mathsf{tail}(u)$ for some node $u$ in the syntax tree of $e$. 
\end{definition}
\begin{example}
Below is the syntax tree of $(p(p +_b 1))^{(b)}$ and a calculation of $\mathsf{tail}(u_3)$. \\[0.4cm]
\begin{minipage}{\linewidth/3}
	\begin{center}
		\begin{tikzpicture}
			\node {$u_1$} [grow'=up]
			child {node {$u_2$} 
				child {node {$u_3$}}
				child {node {$u_4$}
				child {node {$u_5$}}
				child {node {$u_6$}}}};
		\end{tikzpicture}
	\end{center}
\end{minipage}
\begin{minipage}{\linewidth/4}
\begin{align*}
l(u_1) &= (p(p +_b 1))^{(b)} \\
l(u_2) &= p(p +_b 1) \\
l(u_3) &= p \\
l(u_4) &= p +_b 1 \\
l(u_5) &= p \\
l(u_6) &= 1 
\end{align*}
\end{minipage}
\begin{minipage}{\linewidth/3}
\begin{align*}
\mathsf{tail}(u_3) &= u_4, \mathsf{tail}(u_2) \\
				   &= u_4, u_1, \mathsf{tail}(u_1) \\
				   &= u_4, u_1
\end{align*}
\end{minipage}
\end{example}
The following lemma embodies the key idea for the main result of this section: every leftmost expression is the leftmost expression of a unique cedent.
\begin{lemma}
\label{lem:realisationoftailgenerated}
Let $\pi$ be an $\mathsf{SGKAT^\infty}$-derivation of a sequent of the form $e \Rightarrow_A f$. Then every antecedent in $\pi$ is the realisation of a $\mathsf{tail}$-generated $e$-sequent, and every succedent is the realisation of a $\mathsf{tail}$-generated $f$-sequent or $0$-sequent.
\end{lemma}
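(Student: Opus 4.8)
\textbf{Proof plan for Lemma \ref{lem:realisationoftailgenerated}.}
The plan is to prove this by induction on the structure (or length) of $\pi$, showing the slightly stronger statement that every \emph{sequent} occurring in $\pi$ has the form $\Gamma \Rightarrow_B \Delta$ where $\Gamma$ is the realisation of a $\mathsf{tail}$-generated $e$-cedent and $\Delta$ is the realisation of a $\mathsf{tail}$-generated $f$-cedent or a $\mathsf{tail}$-generated $0$-cedent. The base case is the root $e \Rightarrow_A f$ itself: here $e$ is the realisation of the $e$-cedent $\rho_e, \mathsf{tail}(\rho_e) = \rho_e$ (since $\mathsf{tail}(\rho_e) = \epsilon$), and likewise $f$ is the realisation of the $f$-cedent $\rho_f$, so both are $\mathsf{tail}$-generated. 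For the inductive step, I would fix a sequent $\Gamma \Rightarrow_B \Delta$ in $\pi$ satisfying the claim, assume a rule $\mathsf{r}$ is applied to it, and check for each rule of Figure \ref{fig:sgkat} that the premiss(es) again satisfy the claim.

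The heart of the argument is the rule-by-rule case analysis, and here the definition of $\mathsf{tail}$ has been tailored precisely so that each rule instance corresponds to walking from a node $u$ to its children in the syntax tree. If $\Gamma = l(u), \mathsf{tail}(u)$ is $\mathsf{tail}$-generated, then applying a left logical rule to its leftmost expression $l(u)$ produces an antecedent that is exactly the realisation of the $\mathsf{tail}$-generated cedent obtained by moving to the children of $u$: for $\cdot$-$l$ with $\mathsf{mc}(u) = \cdot$ and children $u_1, u_2$, the new antecedent is $l(u_1), l(u_2), \mathsf{tail}(u) = l(u_1), \mathsf{tail}(u_1)$, using $\mathsf{tail}(u_1) = u_2, \mathsf{tail}(u)$; for $+_b$-$l$ both premisses have antecedent $l(u_i), \mathsf{tail}(u) = l(u_i), \mathsf{tail}(u_i)$ since $\mathsf{tail}(u_i) = \mathsf{tail}(u)$; for $(b)$-$l$ with $\mathsf{mc}(u) = (-)^{(b)}$ and single child $v$, the left premiss has antecedent $l(v), l(u), \mathsf{tail}(u) = l(v), \mathsf{tail}(v)$ using $\mathsf{tail}(v) = u, \mathsf{tail}(u)$, and the right premiss has antecedent $\mathsf{tail}(u)$, which is $\mathsf{tail}$-generated by hypothesis; for $b$-$l$ the antecedent simply shrinks to $\mathsf{tail}(u)$, again $\mathsf{tail}$-generated. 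The right logical rules are treated symmetrically for the succedent, noting that $0$ is an atomic expression whose syntax tree is a single node $\rho_0$ with $\mathsf{tail}(\rho_0) = \epsilon$, so the $0$-cedent case is self-contained and closed under the (vacuously applicable) right rules. For the modal rules $\mathsf{k}$ and $\mathsf{id}$ and $\bot$: $\mathsf{id}$ and $\bot$ are leaves, so there is nothing to check; $\mathsf{k}$ strips the leading $p$ from both sides, and since $p$ is atomic its syntax-tree node $u$ is a leaf, whence $\Gamma = p, \mathsf{tail}(u)$ forces the premiss antecedent to be $\mathsf{tail}(u)$, which is $\mathsf{tail}$-generated; similarly on the right, with the subtlety that if $\Delta$ was a $0$-cedent it cannot have had leading $p$, so $\mathsf{k}$ is inapplicable, and if $\Delta = p, \mathsf{tail}(u)$ is a $\mathsf{tail}$-generated $f$-cedent the premiss succedent $\mathsf{tail}(u)$ is $\mathsf{tail}$-generated; finally $\mathsf{k}_0$ replaces the succedent by the single expression $0$, which is the realisation of the $\mathsf{tail}$-generated $0$-cedent $\rho_0$, while the antecedent loses its leading $p$ exactly as in the $\mathsf{k}$ case.

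The main obstacle is not mathematical depth but bookkeeping: one must set up the definitions so that every rule instance is genuinely forced to move along syntax-tree edges rather than introducing fresh structure, and in particular one must handle the $0$-cedent on the right uniformly (it arises only through $\mathsf{k}_0$, and once present it is stable because no right rule can be applied to $0$ except vacuously, and $\bot$ or $\mathsf{id}$ may close the branch). I would also remark explicitly, perhaps as a corollary, that combined with the fact that there are only finitely many atoms, Lemma \ref{lem:realisationoftailgenerated} bounds the number of distinct antecedents by the number of nodes in $T_e$ and the number of distinct succedents by the number of nodes in $T_f$ plus one, so the total number of distinct sequents in $\pi$ is at most $2^{|T|} \cdot |T_e| \cdot (|T_f| + 1)$ — in particular finite — which is the regularity statement this section is aiming for.
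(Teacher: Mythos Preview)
Your approach is essentially the paper's: bottom-up induction on $\pi$, checking each rule. However, you gloss over one point that the paper isolates and proves explicitly. In several cases --- the rule $b$-$l$, the right premiss of $(b)$-$l$, the right premiss of $(b)$-$r$, and the modal rule $\mathsf{k}$ --- the new cedent is exactly $\mathsf{tail}(u)$ for some node $u$, and you assert this is ``$\mathsf{tail}$-generated by hypothesis'' or ``again $\mathsf{tail}$-generated''. But your induction hypothesis on $\pi$ only tells you that the conclusion cedent $u, \mathsf{tail}(u)$ is $\mathsf{tail}$-generated, which is trivially so by definition; it does \emph{not} tell you that $\mathsf{tail}(u)$ itself has the required form (empty or $v, \mathsf{tail}(v)$ for some $v$).

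The paper fills this by first proving, as a separate claim by induction on the syntax tree of $e$, that $\mathsf{tail}(u)$ is a $\mathsf{tail}$-generated $e$-cedent for every node $u$. The argument is short --- for instance in the $\cdot$ case with children $u_1, u_2$ one has $\mathsf{tail}(u_1) = u_2, \mathsf{tail}(u) = u_2, \mathsf{tail}(u_2)$, while $\mathsf{tail}(u_2) = \mathsf{tail}(u)$ falls back on the induction hypothesis --- but it is not vacuous, and your sketch should make it explicit before the rule-by-rule analysis.
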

\begin{proof}
    We first prove the following claim.
\begin{quote}
Let $e$ be an expression and let $u$ be a node in its syntax tree. Then $\mathsf{tail}(u)$ is a $\mathsf{tail}$-generated $e$-sequent. 
\end{quote}
We prove this by induction on the syntax tree of $e$. For the root $\rho$, we have $\mathsf{tail}(\rho) = \epsilon$, which is $\mathsf{tail}$-generated by definition. Now suppose that the thesis holds for some arbitrary node $u$ in the syntax tree of $e$. We will show that the thesis holds for the children of $u$ by a case distinction on the main connective $\mathsf{mc}$ of $u$.
\begin{itemize}
	\item $\mathsf{mc} = \cdot$. Let $u_1$ and $u_2$ be the first and second child of $u$, respectively. We have $\mathsf{tail}(u_1) = u_2, \mathsf{tail}(u) = u_2, \mathsf{tail}(u_2)$, which is $\mathsf{tail}$-generated by definition. Moreover, we have that $\mathsf{tail}(u_2) = \mathsf{tail}(u)$ is $\mathsf{tail}$-generated by the induction hypothesis.

	\item $\mathsf{mc} = +_b$. Then for each child $v$ of $u$, we have $\mathsf{tail}(v) = \mathsf{tail}(u)$ and thus we can again invoke the induction hypothesis. 

	\item $\mathsf{mc} = (-)^{(b)}$. Then for the single child $v$ of $u$, it holds that $\mathsf{tail}(v) = u, \mathsf{tail}(u)$, which is $\mathsf{tail}$-generated by definition. 
\end{itemize}
Using this claim, the lemma follows by bottom-up induction on $\pi$. For the base case, note that $e$ and $f$ are realisations of the roots of their respective syntax trees. Such a root $\rho$ is $\mathsf{tail}$-generated, since $\rho = \rho, \epsilon = \rho, \mathsf{tail}(\rho)$. The induction step follows by direct inspection of the rules of $\mathsf{SGKAT}$. 
\end{proof}
The number of realisations of $\mathsf{tail}$-generated $e$-sequents is clearly linear in the size of the syntax tree of $e$, for every expression $e$. Hence we obtain:
\begin{corollary}
	\label{cor:anyfrugal}
	The number of distinct sequents in an $\mathsf{SGKAT}^\infty$-proof of $e \Rightarrow_A f$ is quadratic in $|T_e| + |T_f|$. 
\end{corollary}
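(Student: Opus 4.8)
The statement to prove is Corollary~\ref{cor:anyfrugal}: the number of distinct sequents in an $\mathsf{SGKAT}^\infty$-proof of $e \Rightarrow_A f$ is quadratic in $|T_e| + |T_f|$. The plan is to combine Lemma~\ref{lem:realisationoftailgenerated} with a simple counting argument, so the work is mostly bookkeeping. First I would observe that any sequent occurring in the proof has the form $\Gamma \Rightarrow_B \Delta$ where, by Lemma~\ref{lem:realisationoftailgenerated}, $\Gamma$ is the realisation of a $\mathsf{tail}$-generated $e$-cedent and $\Delta$ is the realisation of a $\mathsf{tail}$-generated $f$-cedent or $0$-cedent. So it suffices to bound (a) the number of realisations of $\mathsf{tail}$-generated $e$-cedents, (b) the same for $f$ (and $0$), and (c) the number of distinct annotating sets $B$ that can appear, and then multiply.

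For (a): a $\mathsf{tail}$-generated $e$-cedent is, by definition, either the empty list or of the form $u, \mathsf{tail}(u)$ for some node $u$ of the syntax tree $T_e$. Since $\mathsf{tail}(u)$ is a function of $u$, there is at most one $\mathsf{tail}$-generated $e$-cedent per node of $T_e$, plus the empty one, giving at most $|T_e| + 1$ many; their realisations are therefore at most $|T_e| + 1$ as well. Likewise there are at most $|T_f| + 1$ realisations of $\mathsf{tail}$-generated $f$-cedents, and the syntax tree of $0$ is a single node so it contributes only the cedents $\epsilon$ and $0$ — a constant. Hence the number of distinct antecedents is $O(|T_e|)$ and the number of distinct succedents is $O(|T_f|)$. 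For (c) I would note that the annotating set of any sequent in the proof is obtained from the root set $A$ by finitely many restrictions $(-)\restriction b$; but in fact all the bounds here are stated in terms of $|T_e| + |T_f|$ only, and one should check how the annotation is handled. The cleanest route — matching the phrasing of the corollary — is to regard the number of relevant annotation sets as bounded independently of $|T_e| + |T_f|$ (it depends only on the fixed $T$), or alternatively to simply absorb a constant factor; either way the product of an $O(|T_e|)$ factor, an $O(|T_f|)$ factor, and a constant is $O((|T_e| + |T_f|)^2)$, which is the claimed quadratic bound.

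Putting it together: the map sending a sequent $\Gamma \Rightarrow_B \Delta$ occurring in the proof to the triple $(\Gamma, B, \Delta)$ is injective by definition, $\Gamma$ ranges over a set of size $\le |T_e| + 1$, $\Delta$ over a set of size $\le |T_f| + 2$, and $B$ over a set of fixed size, so the total count is at most $(|T_e|+1)(|T_f|+2) \cdot c = O((|T_e| + |T_f|)^2)$. The main (and essentially only) obstacle is the treatment of the annotation set $B$: one has to be careful that these do not blow up, and the honest thing is to point out that for the purposes of the stated bound they contribute only a constant factor, since $T$ is fixed throughout the paper; the genuinely substantive content is entirely in Lemma~\ref{lem:realisationoftailgenerated}, and the corollary is then a one-line consequence together with the linear bound on realisations of $\mathsf{tail}$-generated cedents noted just before its statement.
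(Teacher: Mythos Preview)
Your proposal is correct and follows essentially the same approach as the paper: the corollary is deduced directly from Lemma~\ref{lem:realisationoftailgenerated} together with the linear bound on the number of $\mathsf{tail}$-generated cedents, treating the atom-set component as a constant factor since $T$ is fixed. Your argument is in fact more explicit than the paper's (which leaves the atom-set issue implicit), but the underlying reasoning is the same.
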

Note that the above lemma and corollary can easily be generalised to arbitrary (rather than singleton) cedents, by rewriting each cedent $e_1, \ldots, e_n$ as $e_1 \cdots e_n$. 

Recall that a non-well-founded tree is \emph{regular} if it contains only finitely many pairwise non-isomorphic subtrees. The following corollary follows by a standard argument in the literature (see \emph{e.g}~\cite[Corollary I.2.23]{Rooduijn2024}). 
\begin{corollary}
\label{cor:regcompleteness}
If $\Gamma \Rightarrow_A \Delta$ has an $\mathsf{SGKAT}^\infty$-proof, then it has a regular one.
\end{corollary}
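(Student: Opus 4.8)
The plan is to reduce the statement to a finite setting via Corollary~\ref{cor:anyfrugal} and then invoke a by-now-standard finite-memory determinacy argument. Fix an $\mathsf{SGKAT}^\infty$-proof $\pi$ of $\Gamma \Rightarrow_A \Delta$; write $\Gamma \Rightarrow_A \Delta$ as $e \Rightarrow_A f$ with $e = e_1 \cdots e_n$ and $f = f_1 \cdots f_m$ as in the remark following Lemma~\ref{lem:realisationoftailgenerated}. By Corollary~\ref{cor:anyfrugal} the set $S$ of sequents occurring in $\pi$ is finite. Since $\mathsf{SGKAT}$-rules have branching at most $2$, a proof is regular as soon as the subtree rooted at each node is determined by (a finite amount of data computable from) that node's sequent; so it suffices to produce a proof of $e \Rightarrow_A f$ that is finite-state in this stronger, finite-memory sense.

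First I would set up the \emph{proof-search game} $G$ on the finite arena whose positions are the sequents in $S$. From a position $s$, \emph{Prover} chooses a rule $\mathsf{r}$ of $\mathsf{SGKAT}$ such that the (by Lemma~\ref{lem:atmostone} unique) instance of $\mathsf{r}$ with conclusion $s$ exists and has all its premisses in $S$; then \emph{Opponent} chooses one of those premisses as the next position. If $s$ is an axiom, Prover instead wins immediately (formally, turn axioms into winning sinks). Prover wins an infinite play iff the rule $(b)$-$l$ is chosen along it infinitely often. By construction, a winning strategy for Prover from $e \Rightarrow_A f$ unfolds precisely into an $\mathsf{SGKAT}^\infty$-proof of $e \Rightarrow_A f$ using only sequents from $S$ (the win condition is exactly the fairness condition of Definition~\ref{def:sgkatproofsfairness}), and conversely $\pi$ itself is, after pruning, the unfolding of such a strategy; hence Prover wins $G$ from $e \Rightarrow_A f$.

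Finally, the winning condition ``$(b)$-$l$ infinitely often'' is a B\"uchi condition, so $G$ is (a special case of) a parity game on a finite arena. By the classical positional-determinacy theorem for parity games, the winning player — here Prover — has a memoryless winning strategy $\sigma$ from $e \Rightarrow_A f$. Unfolding $\sigma$ yields an $\mathsf{SGKAT}^\infty$-proof of $e \Rightarrow_A f$ in which the subtree rooted at any node depends only on the sequent labelling that node; as there are only finitely many such sequents, this proof is regular, which is what we wanted. (Equivalently, one may phrase the last two paragraphs via nonemptiness of B\"uchi tree automata: the set of $\mathsf{SGKAT}^\infty$-proofs over the finite sequent-alphabet $S$ is B\"uchi-recognizable, and every nonempty $\omega$-regular tree language contains a regular tree.) The only genuinely delicate point — and the reason one routes through finite-memory determinacy rather than arguing directly — is that naively selecting one rule instance per provable sequent and gluing need not respect the $(b)$-$l$-fairness condition; the game reformulation is exactly what lets the $\omega$-regular branch condition and the tree-isomorphism notion of regularity be handled simultaneously.
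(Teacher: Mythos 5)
Your argument is correct, and it is essentially the argument the paper has in mind: the paper does not spell out a proof but defers to ``a standard argument in the literature'', which is precisely the finite-state-plus-$\omega$-regular-branch-condition extraction you carry out via the proof-search game and positional determinacy of B\"uchi/parity games (equivalently, via nonemptiness of B\"uchi tree automata). The one point you gloss over --- that the B\"uchi condition is on the chosen rules (edges) rather than on positions --- is a routine technicality handled by enriching positions with the last rule applied, and does not affect the argument.
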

We define a \emph{cyclic $\mathsf{SGKAT}$-proof} as a regular $\mathsf{SGKAT}^\infty$-proof. Cyclic proofs can be equivalently described using finite trees with back edges, but this is not needed for the purposes of the present paper. 
\section{Completeness and complexity}
\label{sec:sgkatcompleteness}
In this section we prove the completeness of $\mathsf{SGKAT}^\infty$. Our argument uses a proof search procedure, which we will show to induce a $\mathsf{NLOGSPACE}$ decision procedure for the language inclusion problem of $\gkat$ expressions. The material in this section is again inspired by~\cite{DasPous17}, but requires several modifications to treat the tests present in $\gkat$. 

First note the following fact.
\begin{lemma}
	\label{lem:sgkatvalidconclusionsome}
	Any valid sequent is the conclusion of some rule application. 
\end{lemma}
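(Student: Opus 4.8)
The plan is to prove Lemma~\ref{lem:sgkatvalidconclusionsome} by a case analysis on the shape of a valid sequent $\Gamma \Rightarrow_A \Delta$, showing that in every case at least one rule of $\mathsf{SGKAT}$ can be applied with this sequent as its conclusion. First I would dispose of the degenerate case $A = \emptyset$, which is immediately the conclusion of $\bot$. So assume $A \neq \emptyset$. Now I would split on whether the antecedent $\Gamma$ is exposed (empty or beginning with a primitive program) and, similarly, whether $\Delta$ is exposed.

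If $\Gamma$ is \emph{not} exposed, then it begins with a compound expression or a test: if it begins with a test $b$, apply $b$-$l$; if it begins with $e \cdot g$, apply $\cdot$-$l$; if it begins with $e +_b f$, apply $+_b$-$l$; if it begins with $e^{(b)}$, apply $(b)$-$l$. All four of these left logical rules can be applied no matter what the rest of the sequent looks like, so we are done in this case. Symmetrically, if $\Delta$ is not exposed and does not begin with a test, one of $\cdot$-$r$, $+_b$-$r$, $(b)$-$r$ applies. The one delicate sub-case on the right is when $\Delta$ begins with a test $b$: the rule $b$-$r$ carries the side condition $(\dagger)$, namely $A \restriction b = A$. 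Here I would use validity: since $A \diamond \int{\Gamma} \subseteq \int{b} \diamond \int{\Delta'}$ (writing $\Delta = b, \Delta'$) and $\int{b} \diamond (-)$ only retains guarded strings starting with an atom $\leq b$, and since $A \neq \emptyset$ and — using that $\Gamma$ can be taken exposed (else we already handled it) so $\int{\Gamma}$ contains a string starting with any atom of $A$ by Lemma~\ref{lem:exposed}(i) — every atom in $A$ must satisfy $\alpha \leq b$, i.e. $A \restriction b = A$, so $(\dagger)$ holds and $b$-$r$ applies.

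It remains to handle the case where both $\Gamma$ and $\Delta$ are exposed and $A \neq \emptyset$. If $\Gamma = \epsilon$: were $\Delta$ nonempty, say $\Delta = p, \Delta'$, then $A \diamond \int{\epsilon} = A \subseteq \int{p} \diamond \int{\Delta'}$, which is impossible since the right-hand side contains no atoms while $A \neq \emptyset$; hence $\Delta = \epsilon$ and the sequent is the conclusion of $\mathsf{id}$. If $\Gamma = p, \Gamma'$: if $\Delta = \epsilon$, then $A \diamond \int{p,\Gamma'} \subseteq \int{\epsilon} = \mathsf{At}$, forcing $\int{p, \Gamma'}$ to contain no string of length $\geq 1$, which is impossible (it always contains some $\alpha p \beta \cdots$); so actually this cannot happen — wait, more carefully: $A \diamond \int{p,\Gamma'}$ consists of guarded strings of length $\geq 3$, none of which lie in $\mathsf{At}$, yet $A \neq \emptyset$ and $\int{p,\Gamma'} \neq \emptyset$, so the sequent is not valid, contradiction; thus $\Delta \neq \epsilon$, say $\Delta = q, \Delta'$. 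If $p = q$, apply $\mathsf{k}$. If $p \neq q$, then since any string $\alpha p x \in A \diamond \int{p,\Gamma'}$ must lie in $\int{q} \diamond \int{\Delta'}$ and hence begin with $q$, we need $A \diamond \int{p, \Gamma'} = \emptyset$; as $A \neq \emptyset$ this forces $\int{\Gamma'} = \int{0}$, so the premiss $\Gamma' \Rightarrow_{\mathsf{At}} 0$ of $\mathsf{k}_0$ is valid, but we only need the rule to be \emph{applicable}, which it is. This exhausts all cases.

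The main obstacle I anticipate is the bookkeeping in the fully-exposed case, specifically ruling out the spurious combinations (empty antecedent vs.\ nonempty exposed succedent, and mismatched primitive programs) cleanly using validity together with Lemma~\ref{lem:exposed}; and making precise the argument for the side condition $(\dagger)$ on $b$-$r$, where one must be a little careful to reduce to an exposed antecedent first (or argue directly that $\int{\Gamma}$ is nonempty and that $A \diamond \int{\Gamma} \subseteq \int{b}\diamond\int{\Delta'}$ already forces $\alpha \leq b$ for every $\alpha \in A$). Everything else is a routine inspection of the rule list in Figure~\ref{fig:sgkat}.
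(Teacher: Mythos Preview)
Your approach is direct case analysis, whereas the paper argues by contraposition, and the paper's argument is considerably shorter because it exploits one observation you partly miss: the rule $\mathsf{k}_0$ has an \emph{arbitrary} succedent in its conclusion, so \emph{any} sequent whose antecedent begins with a primitive program is automatically the conclusion of an instance of $\mathsf{k}_0$. The paper therefore never needs to reason about validity in order to decide which rule applies; it only uses validity at the very end, to rule out the single residual shape $\epsilon \Rightarrow_A p, \Theta$.

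Your argument has a genuine gap in two places, both stemming from the tacit assumption that $\int{\Gamma}\neq\emptyset$ for exposed $\Gamma$. First, in the sub-case where $\Gamma$ is exposed and $\Delta = b,\Delta'$, you claim validity forces $A\restriction b = A$ because ``$\int{\Gamma}$ contains a string starting with any atom of $A$''. Lemma~\ref{lem:exposed}(i) only gives closure under changing the initial atom; it does not give non-emptiness. For instance, with $\Gamma = p,0$ and $b$ any non-trivial test, the sequent $p,0 \Rightarrow_{\mathsf{At}} b,1$ is valid but $(\dagger)$ fails, so $b$-$r$ is \emph{not} applicable. Second, in the case $\Gamma = p,\Gamma'$, $\Delta = \epsilon$, you assert ``$\int{p,\Gamma'}\neq\emptyset$'' and conclude the sequent is invalid; again $\Gamma' = 0$ provides a valid counterexample. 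In both situations the fix is exactly the observation you do make in your final case: simply apply $\mathsf{k}_0$, which is available whenever the antecedent starts with a primitive program, independently of $\Delta$ and of validity. Once you route all $\Gamma = p,\Gamma'$ cases through $\mathsf{k}_0$ first, the remaining case analysis (only $\Gamma=\epsilon$ left in the fully-exposed situation) collapses to the paper's argument.
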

Note that in the following lemma $A$ and $B$ may be distinct. 
\begin{lemma}
\label{lem:right-productivity}
Let $\pi$ be a derivation using only right logical rules and containing a branch of the form:
\begin{equation}
\tag{*}
\label{eqn:branch}
\AxiomC{$\Gamma \Rightarrow_B e^{(b)}, \Delta$}
\noLine
\UnaryInfC{$\vdots$}
\RightLabel{\footnotesize $(b)$-$r$}
\UnaryInfC{$\Gamma \Rightarrow_A e^{(b)}, \Delta$}
\DisplayProof
\end{equation}
such that (1) $\Gamma \Rightarrow_A e^{(b)}, \Delta$ is valid, and (2) every succedent on the branch has $e^{(b)}, \Delta$ as a final segment. Then $\Gamma \Rightarrow_B 0$ is valid.
\end{lemma}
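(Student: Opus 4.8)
The plan is to analyze the branch (\ref{eqn:branch}) segment by segment, tracking what happens to the first component $e^{(b)}$ of the succedent. Since only right logical rules are applied and every succedent on the branch ends in $e^{(b)}, \Delta$, each rule application acts on some expression to the left of $e^{(b)}, \Delta$ — that is, it decomposes the (possibly empty) prefix before $e^{(b)}$ — or else it is the displayed application of $(b)$-$r$ acting on $e^{(b)}$ itself. I would first argue that along the portion of the branch strictly below the displayed $(b)$-$r$ instance, the succedent is of the form $\Theta, e^{(b)}, \Delta$ where $\Theta$ gets decomposed into $\epsilon$; more precisely, reading bottom-up from $\Gamma \Rightarrow_A e^{(b)}, \Delta$ up to $\Gamma \Rightarrow_B e^{(b)}, \Delta$, the antecedent $\Gamma$ is untouched (no right logical rule changes the antecedent) while the prefix before $e^{(b)}$ is repeatedly broken down; but since at both ends the prefix is empty, the net effect on the set of atoms is determined purely by the restrictions $A \restriction c$ imposed by $+_b$-$r$ and $(b)$-$r$ instances and the side condition of $b$-$r$.

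Next, the key point: the displayed $(b)$-$r$ step at the bottom has premisses $\Gamma \Rightarrow_{A \restriction b} e, e^{(b)}, \Delta$ and $\Gamma \Rightarrow_{A \restriction \overline b} \Delta$, and the branch continues into the \emph{left} premiss, whose succedent $e, e^{(b)}, \Delta$ does \emph{not} have $e^{(b)}, \Delta$ as a final segment unless we first strip off the copy of $e$ — but wait, re-reading the hypothesis: the branch segment shown is \emph{above} the $(b)$-$r$ at the bottom, so the branch goes from $\Gamma \Rightarrow_A e^{(b)}, \Delta$ \emph{upward} through some right logical rules to $\Gamma \Rightarrow_B e^{(b)}, \Delta$, and all succedents in between end in $e^{(b)}, \Delta$. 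I would then use invertibility: by Proposition \ref{prop:wellfoundedsoundness}, every right logical rule is invertible, so validity propagates upward along the branch, giving that $\Gamma \Rightarrow_B e^{(b)}, \Delta$ is valid. Now apply invertibility of $(b)$-$r$ once more at the top: validity of $\Gamma \Rightarrow_B e^{(b)}, \Delta$ yields validity of $\Gamma \Rightarrow_{B \restriction \overline b} \Delta$. Separately, I would chase the set of atoms down the branch: because the prefix before $e^{(b)}, \Delta$ starts and ends empty, and each right logical rule only ever \emph{shrinks} the atom set (via $\restriction$) or leaves it fixed, one gets $B \subseteq A$; more importantly, I want to show $B \restriction b = \emptyset$, i.e. $B \subseteq \overline b$, because the branch contains the displayed $(b)$-$r$ application whose \emph{lower} occurrence has atom set... — here I need to look carefully at which occurrence of $(b)$-$r$ is displayed.

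Let me restate the real mechanism. The displayed $(b)$-$r$ instance sits on the branch with conclusion $\Gamma \Rightarrow_A e^{(b)}, \Delta$ and the branch continues upward through its \emph{left} premiss $\Gamma \Rightarrow_{A \restriction b} e, e^{(b)}, \Delta$; for the hypothesis "every succedent on the branch has $e^{(b)}, \Delta$ as a final segment" to be consistent, the right logical rules above must decompose the new leading $e$ down to $\epsilon$, arriving at $\Gamma \Rightarrow_B e^{(b)}, \Delta$ with the \emph{same} $e^{(b)}, \Delta$ tail. So the content of the lemma is: unfolding $e^{(b)}$ once on the right, then decomposing the resulting $e$-prefix back to nothing, forces the "remainder" $\Gamma \Rightarrow_B 0$ to be valid. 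I would prove $\Gamma \Rightarrow_B 0$ is valid, i.e. $B \diamond \int{\Gamma} \subseteq \int{0} = \emptyset$, i.e. $B \diamond \int{\Gamma} = \emptyset$, by the following argument: by repeated invertibility up the branch (Proposition \ref{prop:wellfoundedsoundness}), validity of $\Gamma \Rightarrow_A e^{(b)}, \Delta$ pushes through the $(b)$-$r$ and the subsequent right logical rules to give validity of $\Gamma \Rightarrow_B e^{(b)}, \Delta$ again; but the prefix $e$ introduced by the unfolding was decomposed away, and crucially the atom set $B$ arises from $A \restriction b$ further restricted by the $\restriction c$'s of intermediate $+_b$-$r$/$(b)$-$r$ steps and unchanged by $b$-$r$ (side condition) and $\cdot$-$r$. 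Every guarded string $\alpha x \in B \diamond \int{\Gamma}$ would, by validity, lie in $\int{e^{(b)}, \Delta}$, hence (by Lemma \ref{lem:soundnessaux}.(ii)) either $\alpha \in \overline b$ and $\alpha x \in \int{\Delta}$, or $\alpha \in b$ and $\alpha x \in \int{e, e^{(b)}, \Delta}$. But $B \subseteq A \restriction b \subseteq b$, so the first case is impossible; tracing the decomposition of the unfolded $e$-prefix down to $\epsilon$ (using invertibility of each right logical rule in reverse and the fact that $\int{e, e^{(b)}, \Delta} = \int{e} \diamond \int{e^{(b)}, \Delta}$ while the antecedent never changes) shows $\alpha x \in \int{e^{(b)}, \Delta}$ but with the $\alpha$-indexed instance now requiring $\alpha \in \overline b$ — contradicting $\alpha \in b$ — unless $B \diamond \int{\Gamma} = \emptyset$. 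Hence $\Gamma \Rightarrow_B 0$ is valid.

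The main obstacle I anticipate is making precise the bookkeeping of "the unfolded copy of $e$ is decomposed back to $\epsilon$" — one must show that traversing the right logical rules from $\Gamma \Rightarrow_{A\restriction b} e, e^{(b)}, \Delta$ up to $\Gamma \Rightarrow_B e^{(b)}, \Delta$ (with the $e^{(b)}, \Delta$ tail preserved throughout) corresponds exactly to a derivation witnessing $e, e^{(b)}, \Delta$-membership collapsing to $e^{(b)}, \Delta$-membership under the atom restriction from $A \restriction b$ to $B$, and that this is precisely where $B \subseteq b$ clashes with the $\overline b$ exit condition of $e^{(b)}$. The set of atoms annotating the sequents is exactly the tool that makes this work, which is why the authors flag this lemma as the place where that annotation is used crucially. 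Once the membership analysis above is in place, concluding validity of $\Gamma \Rightarrow_B 0$ is immediate from $\int{0} = \emptyset$.
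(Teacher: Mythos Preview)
Your sketch correctly isolates several ingredients --- invertibility pushes validity from $\Gamma \Rightarrow_A e^{(b)}, \Delta$ up to $\Gamma \Rightarrow_B e^{(b)}, \Delta$, and $B \subseteq A \restriction b \subseteq \int{b}$ since right logical rules only shrink the atom set --- but the step you call ``tracing the decomposition'' is where the real content lies, and as written it is circular. From $\alpha x \in B \diamond \int{\Gamma} \subseteq \int{e^{(b)}, \Delta}$ and $\alpha \leq b$ you correctly extract $\alpha x \in \int{e, e^{(b)}, \Delta}$, but ``tracing back through the branch'' just returns you to $\alpha x \in \int{e^{(b)}, \Delta}$, the fact you started with; no contradiction emerges. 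The problem is that in a factorisation $\alpha x = \alpha y \beta z$ with $\alpha y\beta \in \int{e}$ and $\beta z \in \int{e^{(b)}, \Delta}$, nothing you have written forces $\beta \in B$ or $|y| > 0$, so there is neither an iteration nor a descent. What is actually needed is the semantic fact $B \diamond \int{e^{(b)}} = \emptyset$; your argument presupposes it without proving it. (It \emph{can} be established directly: show by backward induction along the branch that for every $\alpha \in B$ and every prefix $\Theta$ appearing before $e^{(b)}, \Delta$, any $\alpha w \in \int{\Theta}$ has $w = \epsilon$; specialising to $\Theta = e$ and then inducting on $n$ in $\int{e^{(b)}} = \bigcup_n (\int{b}\diamond\int{e})^n \diamond \int{\overline b}$ finishes. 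But neither induction appears in your sketch.)

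The paper takes a rather different route that sidesteps this semantic bookkeeping. It exploits the left/right symmetry of the logical rules (Remark~\ref{rmk:symmetry}): replaying the right-rule decompositions along branch~(\ref{eqn:branch}) as the corresponding \emph{left} rules produces a derivation with root $e^{(b)} \Rightarrow_B 0$. Because $B$ is exactly the intersection of all atom restrictions taken along~(\ref{eqn:branch}), every side branch of this mirrored derivation has empty atom set and closes by $\bot$, while the main branch loops back to $e^{(b)} \Rightarrow_B 0$ through an application of $(b)$-$l$; this is a genuine cyclic $\mathsf{SGKAT}^\infty$-proof. Soundness (Theorem~\ref{thm:soundness}) then yields $B \diamond \int{e^{(b)}} = \emptyset$, and combining with invertibility gives $B \diamond \int{\Gamma} \subseteq B \diamond \int{e^{(b)}} \diamond \int{\Delta} = \emptyset$. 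The trick is to delegate the infinite-descent content to the already-proven soundness theorem rather than reproduce it here.
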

\begin{proof}
    We claim that $e^{(b)} \Rightarrow_B 0$ is provable. We will show this by exploiting the symmetry of the left and right logical rules of $\mathsf{SGKAT}$ (cf. Remark~\ref{rmk:symmetry}). Since on the branch (\ref{eqn:branch}) every rule is a right logical rule, and $e^{(b)}, \Delta$ is preserved throughout, we can construct a derivation $\pi'$ of $e^{(b)} \Rightarrow_B 0$ from $\pi$ by applying the analogous left logical rules to $e^{(b)}$. Note that the set of atoms $B$ precisely determines the branch (\ref{eqn:branch}), in the sense that for every leaf $\Gamma \Rightarrow_C \Theta$ of $\pi$ it holds that $C \cap B = \emptyset$. Hence, as the root of $\pi'$ is $e^{(b)} \Rightarrow_B 0$, every branch of $\pi'$ except for the one corresponding to (\ref{eqn:branch}) can be closed directly by an application of $\bot$. The branch corresponding to (\ref{eqn:branch}) is of the form
\begin{equation*}
\tag{*}
\AxiomC{$e^{(b)} \Rightarrow_B 0$}
\noLine
\UnaryInfC{$\vdots$}
\RightLabel{\footnotesize $(b)$-$l$}
\UnaryInfC{$e^{(b)} \Rightarrow_B 0$}
\DisplayProof
\end{equation*}
and can thus be closed by a back edge. The resulting finite tree with back edges clearly represents an $\mathsf{SGKAT}^\infty$-proof. 

Now by soundness, we have $B \diamond \llbracket e^{(b)} \rrbracket = \emptyset$. Moreover, by the invertibility of the right logical rules and hypothesis (1), we get 
\[
B \diamond \llbracket \Gamma \rrbracket \subseteq B \diamond \llbracket e^{(b)}\rrbracket \diamond \llbracket \Delta \rrbracket = \emptyset,
\]
as required.
\end{proof}
\begin{lemma}
\label{lem:sgkatmonotone-r}
Let $(\Gamma_n \Rightarrow_{A_n} \Delta_n)_{n \in \omega}$ be an infinite branch of some $\mathsf{SGKAT}^\infty$-derivation on which the rule $(b)$-$r$ is applied infinitely often. Then there are $n, m$ with $n < m$ such that the following hold:
\begin{enumerate}[label = (\roman*), leftmargin=2.5em]
\item the sequents $\Gamma_n \Rightarrow_{A_n} \Delta_n$ and $\Gamma_m \Rightarrow_{A_m} \Delta_m$ are equal;
\item the sequent $\Gamma_n \Rightarrow_{A_n} \Delta_n$ is the conclusion of $(b)$-$r$ in $\pi$;
\item for every $i \in [n, m)$ it holds that $\Delta_n$ is a final segment of $\Delta_i$.
\end{enumerate}
\end{lemma}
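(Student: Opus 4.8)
The plan is to combine finite-state-ness with two structural observations about $\mathsf{SGKAT}$: that the rule $\mathsf{k}_0$ cannot occur on such a branch at all, and that the succedent evolves like a stack, so that a fixed "buried" suffix of it can be destroyed only in one very specific way. First I would show that $\mathsf{k}_0$ is never applied on the branch. An application of $\mathsf{k}_0$ turns the succedent into the one-element list $0$, and on an infinite branch the only rules that can then be applied are the left logical rules and further applications of $\mathsf{k}_0$ (possibly with an intervening $0$-$r$) — none of which is $(b)$-$r$, and none of which ever again places a while-expression at the head of the succedent. Since the conclusion of $(b)$-$r$ requires a while-expression at the head of its succedent, $(b)$-$r$ could never be applied past the first $\mathsf{k}_0$, contradicting the hypothesis that it is applied infinitely often.

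Next I would establish a \emph{suffix-preservation} claim. Reading the branch away from the root, the succedent is either left untouched (left logical rules), has its leftmost expression deleted ($b$-$r$, $\mathsf{k}$, and the $\overline b$-premiss of $(b)$-$r$), or has its leftmost expression replaced by a non-empty list ($\cdot$-$r$, $+_b$-$r$, and the $b$-premiss of $(b)$-$r$); the one genuinely disruptive rule $\mathsf{k}_0$ has already been excluded. In particular a while-expression at the head of the succedent can be removed only by the $\overline b$-premiss of $(b)$-$r$, and doing so destroys a final segment $\Delta^\star$ of the succedent that begins with that while-expression only when $\Delta^\star$ is the \emph{entire} succedent (otherwise a copy of $\Delta^\star$ remains buried further down). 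Hence, if the succedent at some node ends in a list $\Delta^\star$ whose first element is a while-expression, then $\Delta^\star$ stays a final segment of the succedent along the branch at least up to and including the first later node at which the succedent equals $\Delta^\star$ exactly and $(b)$-$r$ passes to its $\overline b$-premiss.

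Finally I would pigeonhole. The derivation contains only finitely many distinct sequents (Section \ref{sec:finite-stateness}), so since $(b)$-$r$ is applied at infinitely many positions, some sequent $S = \Gamma \Rightarrow_A e^{(b)}, \Delta$ is the conclusion of $(b)$-$r$ at infinitely many of them; set $\Delta^\star := e^{(b)}, \Delta$. Let $\mathcal{T}$ be the finite, non-empty set of sequents with succedent $\Delta^\star$ that are $(b)$-$r$-conclusions infinitely often; the finitely many sequents with succedent $\Delta^\star$ outside $\mathcal{T}$ occur only finitely often, so fix $N_0$ beyond which every $(b)$-$r$-conclusion with succedent $\Delta^\star$ lies in $\mathcal{T}$, and enumerate $t_0 < t_1 < \cdots$ all positions $\geq N_0$ at which a $(b)$-$r$-conclusion with succedent $\Delta^\star$ occurs. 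By the suffix-preservation claim, between $t_l$ and $t_{l+1}$ the suffix $\Delta^\star$ cannot be lost — a loss would require, strictly between them, a node whose succedent is exactly $\Delta^\star$ and at which $(b)$-$r$ is applied, i.e.\ another $t_{l'}$ — so $\Delta^\star$ is a final segment of $\Delta_i$ for all $i \in [t_l, t_{l+1}]$. One more application of the pigeonhole principle yields a fixed $S^* \in \mathcal{T}$ occurring at infinitely many $t_l$; taking $n < m$ to be two consecutive such occurrences gives (i) and (ii) immediately, while (iii) follows by chaining the intervals $[t_l, t_{l+1}]$ that tile $[n, m]$, on each of which $\Delta_n = \Delta^\star$ is a final segment.

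I expect the suffix-preservation claim to be the main obstacle: one must check, by a careful case analysis over the right logical and modal rules, that the \emph{only} way the branch can lose a while-headed final segment of its succedent is the pinpointed one — in particular that removing a while-expression that is not itself the head of the whole succedent still leaves a copy of $\Delta^\star$ buried further down — and this relies crucially on $\mathsf{k}_0$ being unavailable. Everything else is routine pigeonholing made possible by the finiteness results of Section \ref{sec:finite-stateness}.
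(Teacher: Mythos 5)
Your exclusion of $\mathsf{k}_0$ and your suffix-preservation claim are both correct (and are essentially the paper's first observation, stated there much more tersely), but the pigeonholing step has a genuine gap. You argue that $\Delta^\star$ cannot be lost on $[t_l, t_{l+1}]$ because a loss "would require, \emph{strictly between} them, a node whose succedent is exactly $\Delta^\star$ and at which $(b)$-$r$ is applied". The loss can, however, occur at $t_l$ itself: $t_l$ is by definition a node with succedent exactly $\Delta^\star$ to which $(b)$-$r$ is applied, and the branch may pass to its $\overline{b}$-premiss there --- precisely the one destructive event your own suffix-preservation claim isolates. Moreover the succedent can genuinely return to $\Delta^\star$ at $t_{l+1}$ without $\Delta^\star$ ever being a final segment in between, by re-unfolding an \emph{enclosing} while loop: with $f = (e^{(b)} \cdot h)^{(c)}$ a branch can cycle through the succedents $e^{(b)}, h, f \leadsto h, f \leadsto f \leadsto e^{(b)} \cdot h, f \leadsto e^{(b)}, h, f$, where every $(b)$-$r$ with succedent $\Delta^\star = e^{(b)}, h, f$ takes the $\overline{b}$-premiss and the intermediate succedents $h, f$ and $f$ do not end in $\Delta^\star$. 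If your pigeonhole happens to select this $\Delta^\star$ (nothing in your argument prevents it), property (iii) fails for every pair of consecutive occurrences, and no rearrangement of the intervals repairs it.

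The fix is to constrain \emph{which} succedent you pigeonhole, and this is exactly what the paper does: using finite-stateness, choose $k$ so that every $\Delta_i$ with $i \geq k$ recurs cofinally, let $l$ be the minimal length of these succedents, and show that some succedent of length exactly $l$ has a while-expression at its head (otherwise $(b)$-$r$ could not be applied cofinally). Taking $\Delta^\star$ of this minimal length $l$ eliminates your problematic case for free: at any node $i \geq k$ with succedent exactly $\Delta^\star$, the $\overline{b}$-premiss of $(b)$-$r$ would produce a succedent of length $l - 1 < l$, contradicting minimality, so the $b$-premiss is always taken there and your suffix-preservation argument then runs unobstructed on all of $[n, \infty)$. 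In the counterexample above this amounts to choosing $\Delta^\star = f$ rather than $e^{(b)}, h, f$. With that single change of target your overall plan --- exclude $\mathsf{k}_0$, stack discipline on the succedent, pigeonhole for an exact sequent repetition --- coincides with the paper's proof.
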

\begin{proof}
First note that $\mathsf{k}_0$ is not applied on this branch, because if it were then there could not be infinitely many applications of $(b)$-$r$.

Since the proof is finite-state (cf. Corollary \ref{cor:anyfrugal}), there must be a $k \geq 0$ be such that every $\Delta_i$ with $i \geq k$ occurs infinitely often on the branch above. Denote by $|\Delta|$ the length of a given list $\Delta$ and let $l$ be minimum of $\{|\Delta_i| : i \geq k\}$. In other words, $l$ is the minimal length of the $\Delta_i$ with $i \geq k$. 

To prove the lemma, we first claim that there is an $n \geq k$ such that $|\Delta_n| = l$ and the leftmost expression in $\Delta_n$ is of the form $e^{(b)}$ for some $e$. Suppose, towards a contradiction, that this is not the case. Then there must be a $u \geq k$ such that $|\Delta_u| = l$ and the leftmost expression in $\Delta_u$ is \emph{not} of the form $e^{(b)}$ for any $e$. Note that $(b)$-$r$ is the only rule apart from $\mathsf{k}_0$ that can increase the length of the succedent (when read bottom-up). It follows that for no $w \geq u$ the leftmost expression in $\Delta_w$ is of the form $e^{(b)}$, contradicting the fact that $(b)$-$r$ is applied infinitely often.

Now let $n \geq k$ be such that $|\Delta_n| = l$ and the leftmost expression of $\Delta_n$ is $e^{(b)}$. Since the rule $(b)$-$r$ must at some point after $\Delta_n$ be applied to $e^{(b)}$, we may assume without loss of generality that $\Gamma_n \Rightarrow_{A_n} \Delta_n$ is the conclusion of an application of $(b)$-$r$. By the pigeonhole principle, there must be an $m > n$ such that $\Gamma_n \Rightarrow_{A_n} \Delta_n$ and $\Gamma_m \Rightarrow_{A_m} \Delta_m$ are the same sequents. We claim that these sequents satisfy the three properties above. Properties (i) and (ii) directly hold by construction. Property (iii) follows from the fact that $\Delta_n$ is of minimal length and has $e^{(b)}$ as leftmost expression. 
\end{proof}
With the above lemmas in place, we are ready for the completeness proof.
\begin{theorem}[Completeness]
\label{thm:nonregcompleteness}
Every valid sequent is provable in $\mathsf{SGKAT}^\infty$.
\end{theorem}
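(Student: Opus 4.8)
The standard recipe for completeness of a cyclic system is: build a (possibly infinite) proof-search tree for the target sequent, and show that every branch either terminates in an axiom or is an infinite branch satisfying the progress condition. The challenge is to exhibit a proof-search strategy that (a) only ever produces valid sequents — so nothing gets stuck, by Lemma \ref{lem:sgkatvalidconclusionsome} — and (b) makes every infinite branch fair for $(b)$-$l$, i.e.\ applies $(b)$-$l$ infinitely often. The strategy I would use follows the priority order of Figure \ref{fig:sgkat}: at each sequent that is not an axiom, apply the highest-priority rule instance available. By Proposition \ref{prop:wellfoundedsoundness}, every rule that has priority is invertible, so applying it to a valid sequent yields valid premisses; hence, starting from the valid endsequent, the entire search tree consists of valid sequents, and by Lemma \ref{lem:sgkatvalidconclusionsome} no leaf is a non-axiom dead end. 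So the resulting derivation is closed. (One must check the premiss-selection for the binary rules is handled correctly — as noted after Example \ref{example:sgkatproof}, the set of atoms ``selects'' a premiss; the invertibility statement in Proposition \ref{prop:wellfoundedsoundness} is exactly what licenses keeping only the valid premisses, or rather: both premisses stay valid, but branches whose atom-set is empty close immediately by $\bot$.)

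The remaining and genuinely substantial task is the progress argument: show that any infinite branch of this priority-driven search tree contains infinitely many applications of $(b)$-$l$. I would argue by contradiction: suppose some infinite branch has only finitely many $(b)$-$l$ applications, and pass to a tail of the branch with none. On such a tail, the left logical rules other than $(b)$-$l$ strictly shrink the antecedent (in the appropriate well-founded measure — each consumes the leftmost connective without reintroducing a while-loop at the front), so after finitely many steps the antecedent becomes exposed (empty or headed by a primitive program) and stays exposed until a $\mathsf{k}$ or $\mathsf{k}_0$ is applied. If $\mathsf{k}_0$ is ever applied, the branch thereafter proves a sequent with antecedent denoting $\emptyset$, which is handled separately (it cannot be infinite-and-fair-failing in a problematic way, cf.\ the opening remark of the proof of Lemma \ref{lem:sgkatmonotone-r}). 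So on the tail we must have infinitely many $\mathsf{k}$ applications, and between consecutive $\mathsf{k}$'s only right logical rules act; in particular $(b)$-$r$ must be applied infinitely often. This is exactly the hypothesis of Lemma \ref{lem:sgkatmonotone-r}: it gives indices $n < m$ with $\Gamma_n \Rightarrow_{A_n} \Delta_n = \Gamma_m \Rightarrow_{A_m} \Delta_m$, with $\Gamma_n \Rightarrow_{A_n} \Delta_n$ the conclusion of $(b)$-$r$ applied to some leftmost $e^{(b)}$ in $\Delta_n$, and with $\Delta_n$ a final segment of every $\Delta_i$ in between.

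Now I invoke Lemma \ref{lem:right-productivity}: the segment of the branch from $n$ up to $m$ uses only right logical rules, every succedent has $e^{(b)},\Delta'$ (where $\Delta_n = e^{(b)},\Delta'$) as a final segment, the sequents at $n$ and $m$ coincide and the one at $n$ is valid and a conclusion of $(b)$-$r$; so $\Gamma_n \Rightarrow_{A_n} 0$ is valid, i.e.\ $A_n \diamond \int{\Gamma_n} = \emptyset$. But the antecedent $\Gamma_n$ is exposed (we are on a tail past all $(b)$-$l$'s, and left logical rules have already reduced it), so by Lemma \ref{lem:exposed}(ii) either $A_n = \emptyset$ — impossible, since then the search would have applied $\bot$ rather than continued — or $\int{\Gamma_n} = \emptyset$, meaning $\Gamma_n$ is equivalent to $0$; but then the priority-driven search, after exposing $\Gamma_n$, would be in a position where $\mathsf{k}_0$ applies (its premiss $\Gamma' \Rightarrow_{\mathsf{At}} 0$ is valid), and a well-chosen tie-break — or simply the observation that any such $\Gamma_n$ leads to the antecedent denoting $\emptyset$, contradicting that we passed to a tail avoiding $\mathsf{k}_0$ — yields the contradiction. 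Either way the infinite branch cannot avoid $(b)$-$l$, so the derivation is a genuine $\mathsf{SGKAT}^\infty$-proof.

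**Main obstacle.** The delicate point is coordinating the proof-search strategy with Lemmas \ref{lem:right-productivity} and \ref{lem:sgkatmonotone-r}: one must be careful that the priority discipline really does force the ``only right logical rules between $\mathsf{k}$'s'' structure, that the atom-sets along such a segment are nonempty (so $\bot$ doesn't fire and truncate the branch prematurely), and that the case where some $\int{\Gamma_i}$ becomes $\emptyset$ is routed through $\mathsf{k}_0$ (or $\bot$) rather than producing a spurious unfair infinite branch. I expect this bookkeeping — rather than any single hard inequality — to be where the real work lies, and it is precisely the place where, as the authors note, the presence of the atom annotation on sequents is used crucially.
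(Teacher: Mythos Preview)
Your overall architecture is sound and you identify exactly the right auxiliary lemmas, but the pure priority-driven strategy does not produce a fair derivation, and neither of your two suggested fixes closes the gap.

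The claim ``on the tail we must have infinitely many $\mathsf{k}$ applications'' is false; in fact the opposite holds. On a tail with no $(b)$-$l$, every left logical rule and every modal rule strictly decreases the symbol count of the antecedent, so there can be only \emph{finitely} many such applications; from some point on, only right logical rules are applied. Concretely, take the valid sequent $p, 0 \Rightarrow_{\mathsf{At}} 1^{(1)}$ (valid since $\int{p,0} = \int{p} \fusion \int{0} = \emptyset$). The antecedent is already exposed, and the priority order forces $(1)$-$r$ (whose $\overline{1}$-premiss has empty atom set and closes by $\bot$), then $1$-$r$ on the other premiss, returning to the original sequent. This loops forever with zero applications of $(b)$-$l$, $\mathsf{k}$, or $\mathsf{k}_0$. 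Your ``well-chosen tie-break'' cannot help, since right rules have strictly higher priority than $\mathsf{k}_0$ --- there is no tie to break; and the observation that $\int{\Gamma_n} = \emptyset$ yields no contradiction with anything the priority discipline actually enforces.

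What the paper does is \emph{not} a pure priority search. Its stage (2) applies right rules while running an explicit loop check: once a repetition satisfying the conditions of Lemma \ref{lem:sgkatmonotone-r} is detected, the procedure deliberately exits to stage (3) and applies $\mathsf{k}_0$, even though right rules are still applicable and $\mathsf{k}_0$ therefore does not have priority. Consequently the validity of that $\mathsf{k}_0$-premiss cannot be read off Proposition \ref{prop:wellfoundedsoundness}; it is established by a direct semantic argument, combining Lemma \ref{lem:right-productivity} with exposedness of $\Gamma$ (and $A \neq \emptyset$) to conclude $\int{\Gamma} = \emptyset$, whence $\Gamma \Rightarrow_{\mathsf{At}} 0$ is valid. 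With this intervention in place every stage except stage (1) terminates, and then the symbol-count argument you sketch does yield fairness. The missing ingredient in your plan is precisely this loop-detection-triggered override of the priority order.
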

\begin{proof}
Given a valid sequent, we do a bottom-up proof search with the following strategy. Throughout the procedure all leaves remain valid, in most cases by an appeal to invertibility. 
\begin{enumerate}
	\item Apply left logical rules as long as possible. If this stage terminates, it will be at a leaf of the form $\Gamma \Rightarrow_A \Delta$, where $\Gamma$ is exposed. We then go to stage (2). If left logical rules remain applicable, we stay in this stage (1) forever and create an infinite branch. 
	
	\item Apply right logical rules until one of the following happens:
	\begin{enumerate}
		\item We reach a leaf at which no right logical rule can be applied. This means that the leaf must be a valid sequent of the form $\Gamma \Rightarrow_A \Delta$ such that $\Gamma$ is exposed, and $\Delta$ is either exposed or begins with a test $b$ such $A \restriction b \not= A$. We go to stage (4).
		\item If (a) does not happen, then at some point we must reach a valid sequent of the $\Gamma\Rightarrow_A e^{(b)}, \Delta$ which together with an ancestor satisfies properties (i) - (iii) of Lemma \ref{lem:sgkatmonotone-r}. In this case Lemma \ref{lem:right-productivity} is applicable. Hence we must be at a leaf of the form $\Gamma \Rightarrow_A  e^{(b)}, \Delta$ such that $e^{(b)} \Rightarrow_A 0$ is valid. We then go to stage (3). 
	\end{enumerate}
	Since at some point either (a) or (b) must be the case, stage (2) always terminates. 
	
	\item We are at a valid leaf of the form $\Gamma \Rightarrow_A e^{(b)}, \Delta$, where $\Gamma$ is exposed. If $A = \emptyset$, we apply $\bot$. Otherwise, if $A \not= \emptyset$, we use the validity of $\Gamma \Rightarrow_A e^{(b)}, \Delta$ and $e^{(b)} \Rightarrow_A 0$ to find:
	\[
	A \fusion \int{\Gamma} \subseteq A \fusion \int{e^{(b)}} \fusion \int{\Delta} = \emptyset.
	\] 
	We claim that $\int{\Gamma} = \emptyset$. Indeed, suppose towards a contradiction that $\alpha x \in \int{\Gamma}$. By the exposedness of $\Gamma$ and item (i) of Lemma \ref{lem:exposed}, we would have $\beta x \in \int{\Gamma}$ for some $\beta \in A$, contradicting the statement above. Therefore, the sequent $\Gamma \Rightarrow_\mathsf{At} 0$ is valid. We apply the rule $\mathsf{k}_0$ and loop back to stage (1). 
	
	Stage (3) only comprises a single step and thus always terminates. 
	
	\item Let $\Gamma \Rightarrow_A \Delta$ be the current leaf. By construction $\Gamma \Rightarrow_A \Delta$ is valid, $\Gamma$ is exposed, and $\Delta$ is either exposed or begins with a test $b$ such that $A \restriction b \not= A$. Note that only rules $\mathsf{id}$, $\bot$, $\mathsf{k}$, and $\mathsf{k_0}$ can be applicable. By Lemma \ref{lem:sgkatvalidconclusionsome}, at least one of them must be applicable. If $\mathsf{id}$ is applicable, apply $\mathsf{id}$. If $\bot$ is applicable, apply $\bot$. If $\mathsf{k}$ is applicable, apply $\mathsf{k}$ and loop back to stage (1). Note that this application of $\mathsf{k}$ will have priority and is therefore invertible. 

	Finally, suppose that only $\mathsf{k}_0$ is applicable. We claim that, by validity, the list $\Gamma$ is not $\epsilon$. Indeed, since $A$ is non-empty, and $\Delta$ either begins with a primitive program $p$ or a test $b$ such that $A \restriction b \not= A$, the sequent
	\[
	\epsilon \Rightarrow_A \Delta
	\]
	must be invalid. Hence $\Gamma$ must be of the form $p, \Theta$. We apply $\mathsf{k}_0$, which has priority and thus is invertible, and loop back to stage (1). 
	
	Similarly to stage (3), stage (4) only comprises a single step and thus always terminates.
\end{enumerate}

We claim that the constructed derivation is fair for $(b)$-$l$. Indeed, every stage except stage (1) terminates. Therefore, every infinite branch must either eventually remain in stage (1), or pass through stages (3) or (4) infinitely often. Since $\mathsf{k}$ and $\mathsf{k}_0$ shorten the antecedent, and no left logical rule other than $(b)$-$l$ lengthens it, such branches must be fair.
\end{proof} 
By Corollary \ref{cor:regcompleteness} we obtain that the subset of cyclic $\mathsf{SGKAT}$-proofs is also complete. 
\begin{corollary}
Every valid sequent has a regular $\mathsf{SGKAT}^\infty$-proof. 
\end{corollary}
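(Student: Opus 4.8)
The plan is to deduce this corollary by combining the non-well-founded completeness theorem just proved (Theorem~\ref{thm:nonregcompleteness}) with the regularity result from the previous section (Corollary~\ref{cor:regcompleteness}). First I would take an arbitrary valid sequent $\Gamma \Rightarrow_A \Delta$. By Theorem~\ref{thm:nonregcompleteness} it has some $\mathsf{SGKAT}^\infty$-proof $\pi$, which need not be regular. Then I would invoke Corollary~\ref{cor:regcompleteness}, which states precisely that whenever a sequent has an $\mathsf{SGKAT}^\infty$-proof, it has a regular one. Chaining these two facts yields the claim.

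There are essentially two ways to present this. The cleaner one is the two-line deduction above: validity gives provability by completeness, and provability gives regular provability by the finite-state analysis. The alternative is to observe that the proof search procedure described in the proof of Theorem~\ref{thm:nonregcompleteness} is itself deterministic modulo the choice of sequent (using Lemma~\ref{lem:atmostone} and the priority discipline), so it in fact produces a finite-state derivation directly; one could then note that a finite-state closed derivation unfolds from a finite tree with back edges and is therefore regular. I would favour the first presentation, since the heavy lifting has already been isolated into Corollary~\ref{cor:regcompleteness}, and there is no need to re-examine the search procedure.

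I do not expect any genuine obstacle here: the statement is a one-step corollary once both Theorem~\ref{thm:nonregcompleteness} and Corollary~\ref{cor:regcompleteness} are in hand. The only point worth a word of care is that ``cyclic $\mathsf{SGKAT}$-proof'' was defined as a synonym for ``regular $\mathsf{SGKAT}^\infty$-proof'', so the corollary is literally just the conjunction of the two preceding results applied in sequence; one should make sure the reader sees that $\pi$ being regular is not an extra hypothesis but an output of Corollary~\ref{cor:regcompleteness}.

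\begin{proof}
Let $\Gamma \Rightarrow_A \Delta$ be a valid sequent. By Theorem~\ref{thm:nonregcompleteness} it has an $\mathsf{SGKAT}^\infty$-proof. By Corollary~\ref{cor:regcompleteness}, any sequent with an $\mathsf{SGKAT}^\infty$-proof has a regular one. Hence $\Gamma \Rightarrow_A \Delta$ has a regular $\mathsf{SGKAT}^\infty$-proof, i.e.\ a cyclic $\mathsf{SGKAT}$-proof.
\end{proof}
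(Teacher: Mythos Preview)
Your proposal is correct and matches the paper's own reasoning: the paper simply remarks that the corollary follows by combining Theorem~\ref{thm:nonregcompleteness} with Corollary~\ref{cor:regcompleteness}, exactly as you do.
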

\begin{proposition}
The proof search procedure of Theorem \ref{thm:nonregcompleteness} runs in $\mathsf{coNLOGSPACE}$. Hence proof search, and thus also the language inclusion problem for $\gkat$-expressions, is in $\mathsf{NLOGSPACE}$.
\end{proposition}
\begin{proof}[sketch]
Assume without loss of generality that the initial sequent is of the form $e \Rightarrow_A f$. We non-deterministically search for a failing branch, at each iteration storing only the last sequent. By Lemma \ref{lem:realisationoftailgenerated} this can be done by storing two pointers to, respectively, the syntax trees $T_e$ and $T_f$, together with a set of atoms. The loop check of stage (2) can be replaced by a counter. Indeed, stage (2) must always hit a repetition after $|\mathsf{At}| \cdot |T_f|$ steps, where $m$ is the number of nodes in the syntax tree. After this repetition there must be a continuation that reaches a repetition to which Lemma \ref{lem:right-productivity} applies before this stage has taken $2 \cdot |\mathsf{At}| \cdot |T_f|$ steps in total. Finally, a global counter can be used to limit the depth of the search. Indeed, a failing branch needs at most one repetition (in stage (2), to which $\mathsf{k}_0$ is applied) and all other repetitions can be cut out. Hence if there is a failing branch, there must be one of size at most $4 \cdot |T_e| \cdot |\mathsf{At} | \cdot |T_f|$.
\end{proof}
\section{Conclusion and future work}
In this paper we have presented a non-well-founded proof system $\mathsf{SGKAT}^\infty$ for $\gkat$. We have shown that the system is sound and complete with respect to the language model. In fact, the fragment of \emph{regular} proofs is already complete, which means one can view $\mathsf{SGKAT}$ as a cyclic proof system. Our system is similar to the system for Kleene Algebra in~\cite{DasPous17}, but the deterministic nature of $\gkat$ allows us to use ordinary sequents rather than hypersequents. To deal with the tests of $\gkat$ every sequent is annotated by a set of atoms. Like in~\cite{DasPous17}, our completeness argument makes use of a proof search procedure. Here again the relative simplicity of $\gkat$ pays off: the proof search procedure induces an $\mathsf{NLOGSPACE}$ decision procedure, whereas that of Kleene Algebra is in $\mathsf{PSPACE}$. 

The most natural question for future work is whether our system could be used to prove the completeness of some (ordered)-algebraic axiomatisation of $\gkat$. We envision using the original $\gkat$ axioms (see~\cite[Figure 1]{Smolka20}), but basing it on \emph{inequational} logic rather than equational logic. This would allow one to use a \emph{least} fixed point rule of the form 
\[
\AxiomC{$eg +_b f \leq g$}
\UnaryInfC{$e^{(b)} f \leq g$}
\DisplayProof
\]
eliminating the need for a Salomaa-style side condition. We hope to be able to prove the completeness of such an inequational system by translating cyclic $\mathsf{SGKAT}$-proofs into well-founded proofs in the inequational system. This is inspired by the paper~\cite{Das18}, where a similar strategy is used to give an alternative proof of the left-handed completeness of Kleene Algebra.

Another relevant question is the exact complexity of the language inclusion problem for $\gkat$-expressions. We have obtained an upper bound of $\mathsf{NLOGSPACE}$, but do not know whether it is optimal.

Finally, it would be interesting to verify the conjecture in Remark \ref{rmk:earlyterm} above. \\

\begin{credits}
\ackname \ Jan Rooduijn thanks Anupam Das, Tobias Kapp\'e, Johannes Marti and Yde Venema for insightful discussions on the topic of this paper. Alexandra Silva wants to acknowledge Sonia Marin, who some years ago proposed a similar master project at UCL. We moreover thank the reviewers for their helpful comments, in particular for pointing out that our complexity result could be sharpened. Lastly, Jan Rooduijn is grateful for the inspiring four-week research visit at the Computer Science department of Cornell in the summer of 2022.
\end{credits}
\bibliographystyle{splncs04}
\bibliography{mybibliography}
\clearpage
\appendix
\setcounter{theorem}{0}
\setcounter{proposition}{0}
\setcounter{lemma}{0}
\setcounter{corollary}{0}
\section{Full proofs}
This appendix contains full versions of all the proofs that were either omitted or sketched in the body of the paper.
\setcounter{subsection}{1}
\subsection{... of Section \ref{sec:prelim}}
	\begin{lemma}
    \label{lem:fusion_n_symm}
    For any two languages $L, K$ of guarded strings, and primitive program $p$, we have:
    \begin{enumerate}[label = (\roman*)]
        \item $L^{n + 1} = L \fusion L^n$; \ \ \ \ (ii) $\int{p} \fusion L = \int{p} \fusion K$ implies $L = K$. 
    \end{enumerate}
	\end{lemma}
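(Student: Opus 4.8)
The statement has two parts, both proved by routine induction on natural numbers (and in case (i) also by unravelling the definition of fusion product).

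For part (i), I would prove $L^{n+1} = L \fusion L^n$ by induction on $n$. The base case $n = 0$ requires $L^1 = L \fusion L^0 = L \fusion \mathsf{At}$; since fusing with $\mathsf{At}$ on the right is the identity (every guarded string ends in some atom $\alpha$, and $\alpha \in \mathsf{At}$), this reduces to $L^1 = L \fusion L^0$, which holds by the definition $L^1 = L^0 \fusion L = \mathsf{At} \fusion L = L$. Actually the cleanest route is to first note that $\mathsf{At}$ is a two-sided unit for $\fusion$: $L \fusion \mathsf{At} = \mathsf{At} \fusion L = L$, directly from the definition of $\diamond$. Then associativity of $\fusion$ (also immediate from the definition, since $x\alpha y\beta z \in (L \fusion K) \fusion M$ iff $x\alpha \in L$, $\alpha y \beta \in K$, $\beta z \in M$, symmetrically for $L \fusion (K \fusion M)$) gives $L^{n+1} = L^n \fusion L = (L^{n-1} \fusion L) \fusion L = (L \fusion L^{n-1}) \fusion L = L \fusion (L^{n-1} \fusion L) = L \fusion L^n$ for $n \geq 1$, using the induction hypothesis $L^n = L \fusion L^{n-1}$.

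For part (ii), suppose $\int{p} \fusion L = \int{p} \fusion K$. Recall $\int{p} = \{\alpha p \beta : \alpha, \beta \in \mathsf{At}\}$. I would show $L \subseteq K$ (the reverse is symmetric). Take $\alpha y \in L$ with $\alpha \in \mathsf{At}$. Then for any atom $\gamma$, the string $\gamma p \alpha y$ lies in $\int{p} \fusion L$ (since $\gamma p \alpha \in \int{p}$ and $\alpha y \in L$), hence in $\int{p} \fusion K$. So $\gamma p \alpha y = x \delta z$ with $x\delta \in \int{p}$ and $\delta z \in K$. Since strings in $\int{p}$ have the form $\delta' p \delta''$, the only way to split $\gamma p \alpha y$ is $x\delta = \gamma p \alpha$ (so $\delta = \alpha$) and $\delta z = \alpha y$; hence $\alpha y \in K$. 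This gives $L \subseteq K$, and by symmetry $L = K$.

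\textbf{Main obstacle.} There is essentially no obstacle here — both facts are elementary manipulations of the fusion product. The only mild subtlety is being careful that $\int{p} \fusion L = \int{p} \fusion K$ is used with an arbitrary prefix atom $\gamma$ so that the decomposition in $\int{p}$ forces the split point; this relies on the fact that every element of $\int{p}$ has exactly one occurrence of the primitive program $p$ and hence a unique "middle atom", which makes the fusion decomposition unique.
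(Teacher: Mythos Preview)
Your proposal is correct and follows essentially the same approach as the paper. For (i), the paper also relies on $\mathsf{At}$ being a two-sided unit and associativity of $\fusion$, though it does a direct rewriting rather than an induction; for (ii), the paper uses the same idea but picks the prefix atom equal to the first atom of the given string (considering $\gamma p \gamma y$), which makes the unique-decomposition step implicit.
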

    \begin{proof}
	(i). Since $\mathsf{At}$ is the identity element for the fusion operator, we have
	\[
	L^{n + 1} = \mathsf{At} \fusion \underbrace{L \fusion \cdots \fusion L}_{\text{$n + 1$ times}} = L \fusion \mathsf{At} \fusion \underbrace{L \fusion \cdots \fusion L}_{\text{$n$ times}} = L \fusion L^n,
	\]
	as required. 

    (ii). Since $\int{p} = \{\alpha p \beta : \alpha, \beta \in \mathsf{At}\}$, we have
		\[
		\gamma y \in L \Leftrightarrow \gamma p \gamma y \in \int{p} \fusion L \Leftrightarrow \gamma p \gamma y \in \int{p} \fusion K \Leftrightarrow \gamma y \in K,
		\]
		as required. 
	\end{proof}
	\subsection{... of Section \ref{sec:thesgkatsystem}}
    \setcounter{lemma}{2}
	\begin{lemma}
Let $\Gamma$ and $\Delta$ be exposed lists of expressions. Then:
\begin{enumerate}[label = (\roman*)]
\item $\alpha x \in \int{\Gamma} \Leftrightarrow \beta x \in \int{\Gamma}$ for all $\alpha, \beta \in \mathsf{At}$
\item $\Gamma \Rightarrow_{\mathsf{At}} \Delta$ is valid if and only if $\Gamma \Rightarrow_A \Delta$ is valid for some $A \not= \emptyset$. 
\end{enumerate}
\end{lemma}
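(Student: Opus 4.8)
The key observation is that an exposed list $\Gamma$ is either empty or of the form $p, \Theta$ for a primitive program $p$, and in either case $\int{\Gamma}$ is a union of languages $\int{p} \diamond (\cdots)$ whose guarded strings do not constrain their first atom — this is precisely what the fusion product with $\int{p} = \{\alpha p \beta : \alpha, \beta \in \mathsf{At}\}$ achieves. So I would first prove (i) by a short case distinction: if $\Gamma = \epsilon$, then $\int{\Gamma} = \mathsf{At}$, and $\alpha x \in \mathsf{At}$ (which forces $x = \epsilon$) iff $\beta x \in \mathsf{At}$; if $\Gamma = p, \Theta$, then $\int{\Gamma} = \int{p} \diamond \int{\Theta} = \{\gamma p z : \gamma \in \mathsf{At},\ p z' \in \int{\Theta}\text{ for the appropriate split}\}$, and since the leading atom $\gamma$ ranges freely over $\mathsf{At}$ independently of the rest, membership of $\alpha x = \alpha p z$ does not depend on $\alpha$. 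Concretely, $\alpha p z \in \int{p} \diamond \int{\Theta}$ iff $\delta z \in \int{\Theta}$ where $\delta$ is the atom immediately following $p$ in the string $z = \delta z'$ — a condition independent of $\alpha$. Hence $\alpha x \in \int{\Gamma} \Leftrightarrow \beta x \in \int{\Gamma}$.

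For (ii), the direction from left to right is immediate: if $\Gamma \Rightarrow_{\mathsf{At}} \Delta$ is valid, i.e. $\mathsf{At} \diamond \int{\Gamma} \subseteq \int{\Delta}$, then for any nonempty $A$ we have $A \diamond \int{\Gamma} \subseteq \mathsf{At} \diamond \int{\Gamma} \subseteq \int{\Delta}$, so $\Gamma \Rightarrow_A \Delta$ is valid (this direction does not even use exposedness). For the converse, suppose $\Gamma \Rightarrow_A \Delta$ is valid for some $A \neq \emptyset$; pick $\alpha \in A$. I want to show $\mathsf{At} \diamond \int{\Gamma} \subseteq \int{\Delta}$. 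Take any $\beta x \in \mathsf{At} \diamond \int{\Gamma}$, so $\beta x \in \int{\Gamma}$ (noting $\mathsf{At}$ is the fusion identity). By part (i), since $\Gamma$ is exposed, $\alpha x \in \int{\Gamma}$, hence $\alpha x \in A \diamond \int{\Gamma}$, and by validity of $\Gamma \Rightarrow_A \Delta$ we get $\alpha x \in \int{\Delta}$. Now I invoke part (i) again, this time for the exposed list $\Delta$: $\alpha x \in \int{\Delta}$ implies $\beta x \in \int{\Delta}$. This gives $\beta x \in \int{\Delta}$, as required.

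\textbf{Main obstacle.} There is no real obstacle here; the statement is genuinely elementary once one unwinds the definition of the fusion product and uses the specific shape of $\int{p}$. The only point requiring a little care is making the case analysis in (i) precise — in particular handling the empty list and the edge case where the guarded string $x$ following $p$ is itself short — and then being careful in (ii) to apply (i) twice, once on the antecedent side to transfer a witness into $A \diamond \int{\Gamma}$ and once on the succedent side to transfer the conclusion back out of $\int{\Delta}$. The fusion identity fact ($\mathsf{At} \diamond L = L$), already used in the proof of Lemma \ref{lem:fusion_n_symm}, is the small ingredient that lets one pass freely between $\int{\Gamma}$ and $\mathsf{At} \diamond \int{\Gamma}$.
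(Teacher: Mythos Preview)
Your proposal is correct and follows essentially the same approach as the paper: the same case distinction on $\Gamma = \epsilon$ versus $\Gamma = p,\Theta$ for (i), and the same two applications of (i) (once for $\Gamma$, once for $\Delta$) together with a chosen atom in $A$ for the nontrivial direction of (ii). The only cosmetic difference is that you and the paper swap the roles of the names $\alpha$ and $\beta$.
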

\begin{proof}
For item (i), we make a case distinction on whether $\Gamma = \epsilon$ or $\Gamma = p, \Theta$ for some list $\Theta$. If $\Gamma = \epsilon$, the result follows immediately from the fact that $\int{\epsilon} = \mathsf{At}$. If $\Gamma = p, \Theta$, we have $\int{\Gamma} = \int{p} \fusion \int{\Theta} = \{\gamma p \delta y : \gamma \in \mathsf{At}, \delta y \in \int{\Theta}\}$, which also suffices. 

For item (ii), the only non-trivial implication is the one from right to left. So suppose $\Gamma \Rightarrow_A \Delta$ for some $A \not= \emptyset$. Let $\alpha \in \mathsf{At}$ and let $\beta \in A$ be arbitrary. We find the required:
\begin{align*}
\alpha x \in \int{\Gamma} &\Rightarrow \beta x \in \int{\Gamma} &\text{(item (i))} \\
						  &\Rightarrow \beta x \in \int{\Delta} &\text{($\beta \in A$, hypothesis)} \\
						  &\Rightarrow \alpha x \in \int{\Delta}. &\text{(item (i))}
\end{align*}
\end{proof}
\subsection{... of Section \ref{sec:sgkatsoundess}}
\begin{lemma}
Let $A$ be a set of atoms, let $b$ be a test, and let $\Theta$ be a list of expressions. We have:
\begin{enumerate}[label = (\roman*)]
	\item $\int{e +_b f, \Theta} = (\int{b} \fusion \int{e, \Theta}) \cup (\int{\overline b} \fusion \int{f, \Theta})$;
	\item $\int{e^{(b)}, \Theta} = (\int{b} \fusion \int{e, e^{(b)}, \Theta}) \cup (\int{\overline{b}} \fusion \int{\Theta})$. 
\end{enumerate}
\end{lemma}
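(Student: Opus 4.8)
The plan is to prove both identities by unfolding the semantic definitions given in Section~\ref{sec:prelim}, using the fact that the interpretation of a cedent $e_1,\dots,e_n$ is $\int{e_1}\diamond\cdots\diamond\int{e_n}$ together with associativity of the fusion product $\diamond$. The key observations are: (a) sequential composition in a cedent is interpreted by $\diamond$, so $\int{e +_b f,\Theta} = \int{e+_bf}\diamond\int{\Theta}$ and $\int{e^{(b)},\Theta} = \int{e^{(b)}}\diamond\int{\Theta}$; (b) $\diamond$ distributes over unions on both sides, i.e. $(L\cup K)\diamond M = (L\diamond M)\cup(K\diamond M)$; and (c) $\int{b}\diamond L = \int{b}\diamond K$ has the expected behaviour, and more specifically for a test $b$ one has $\int{b}\diamond(L\diamond M) = (\int{b}\diamond L)\diamond M$ by associativity.

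For item (i): I would start from the definition $\int{e+_bf} = \int{e}+_{\int{b}}\int{f} = (\int{b}\diamond\int{e})\cup(\overline{\int{b}}\diamond\int{f})$, noting that $\overline{\int{b}} = \int{\overline b}$ since the atoms below $\overline b$ are exactly the complement of those below $b$. Then $\int{e+_bf,\Theta} = \big((\int{b}\diamond\int{e})\cup(\int{\overline b}\diamond\int{f})\big)\diamond\int{\Theta}$, and distributing $\diamond\int{\Theta}$ over the union and reassociating gives $(\int{b}\diamond\int{e}\diamond\int{\Theta})\cup(\int{\overline b}\diamond\int{f}\diamond\int{\Theta}) = (\int{b}\diamond\int{e,\Theta})\cup(\int{\overline b}\diamond\int{f,\Theta})$, as required.

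For item (ii): I would use Lemma~\ref{lem:soundnessaux}-style reasoning, but the real content is the fixed-point unfolding of $\int{e^{(b)}} = \int{e}^{\int{b}} = \bigcup_{n\geq 0}(\int{b}\diamond\int{e})^n\diamond\overline{\int{b}}$. Splitting the union into the $n=0$ term, which is $\mathsf{At}\diamond\overline{\int{b}} = \int{\overline b}$, and the terms with $n\geq 1$, which by Lemma~\ref{lem:fusion_n_symm}(i) factor as $(\int{b}\diamond\int{e})\diamond(\int{b}\diamond\int{e})^{n-1}\diamond\overline{\int{b}}$, one recognises $\bigcup_{n\geq 1}(\dots) = (\int{b}\diamond\int{e})\diamond\int{e^{(b)}}$. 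Hence $\int{e^{(b)}} = (\int{b}\diamond\int{e}\diamond\int{e^{(b)}})\cup\int{\overline b}$; composing with $\int{\Theta}$ on the right, distributing over the union, and reassociating yields $\int{e^{(b)},\Theta} = (\int{b}\diamond\int{e,e^{(b)},\Theta})\cup(\int{\overline b}\diamond\int{\Theta})$.

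**The main obstacle** is essentially bookkeeping rather than conceptual: one must be careful that $\diamond$ genuinely distributes over infinite unions (it does, since $x\alpha y\in L\diamond K$ depends only on membership of $x\alpha$ and $\alpha y$) and that the associativity rearrangements are valid — in particular that $\overline{\int{b}}$ really equals $\int{\overline b}$, which is where the Boolean-algebra structure of atoms is silently used. Once these routine facts are in hand, both parts are short equational chains; I would present them as displayed \texttt{align*} computations with the distributivity step, the associativity step, and the fixed-point-unfolding step (for (ii)) annotated.
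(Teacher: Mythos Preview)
Your proposal is correct and follows essentially the same route as the paper: both parts are proved by unfolding the semantic definitions, using $\overline{\int{b}} = \int{\overline b}$, distributing $\diamond$ over (possibly infinite) unions, and---for item~(ii)---splitting off the $n=0$ term and reindexing via $L^{n+1} = L \diamond L^n$ to recover $\int{e^{(b)}}$ inside the sum. The only cosmetic difference is that you first derive the unfolding $\int{e^{(b)}} = (\int{b}\diamond\int{e}\diamond\int{e^{(b)}})\cup\int{\overline b}$ and then fuse with $\int{\Theta}$, whereas the paper carries $\int{\Theta}$ along throughout; also, your passing reference to ``Lemma~\ref{lem:soundnessaux}-style reasoning'' is self-referential, since that is the lemma you are proving.
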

\begin{proof}
Both items are shown by simply unfolding the definitions. We will use the fact $\fusion$ distributes over $\cup$. Note that $\cup$ is not the same as \emph{guarded} union, over which $\fusion$ is merely right-distributive. 

For the first item, we calculate
\begin{align*}
\int{e +_b f, \Theta}  &=  \int{e +_b f} \fusion \int{\Theta} &\text{(sequent interpretation)} \\
&= ((\int{b} \fusion \int{e}) \cup (\overline{\int{b}} \fusion \int{f}))\fusion \int{\Theta} &\text{(interpretation of $+_b$)} \\
&= (\int{b} \fusion \int{e} \fusion \int{\Theta}) \cup (\overline{\int{b}} \fusion \int{f} \fusion \int{\Theta})  &\text{($\fusion$ distributes over $\cup$)} \\
&= (\int{b} \fusion \int{e} \fusion \int{\Theta}) \cup (\int{\overline b} \fusion \int{f} \fusion \int{\Theta}) &\text{($\overline{\int{b}} = \int{\overline b}$)} \\
&= (\int{b} \fusion \int{e, \Theta}) \cup (\int{\overline b} \fusion \int{f,\Theta}). &\text{(sequent interpretation)} 
\end{align*}
For the second item, we have
\begin{align*}
\int{e^{(b)}, \Theta}  &= \int{e^{(b)}} \fusion \int{\Theta} &\text{(sequent int.)} \\
&= \bigcup_{n \geq 0} (\int{b} \fusion \int{e})^n \fusion \overline{\int{b}} \fusion \int{\Theta} &\text{(int. $-^{(b)}$)} \\
&= (\bigcup_{n \geq 1} (\int{b} \fusion \int{e})^n \cup \mathsf{At}) \fusion \overline{\int{b}} \fusion \int{\Theta} &\text{(split ${\bigcup}$)} \\
&= (\bigcup_{n \geq 1} (\int{b} \fusion \int{e})^n \fusion \overline{\int{b}} \fusion \int{\Theta}) \cup (\mathsf{At} \fusion \overline{\int{b}} \fusion \int{\Theta})  &\text{($\fusion$ dist. $\cup$)} \\
&= (\bigcup_{n \geq 1} (\int{b} \fusion \int{e})^n \fusion \int{\overline b} \fusion \int{\Theta}) \cup (\int{\overline b} \fusion \int{\Theta}) &\text{($\overline{\int{b}} = \int{\overline b}$)} \\
&= (\bigcup_{n \geq 0} \int{b} \fusion \int{e} \fusion (\int{b} \fusion \int{e})^{n} \fusion \int{\overline b} \fusion \int{\Theta}) \cup (\int{\overline b} \fusion \int{\Theta}) &\text{(Lem. \ref{lem:fusion_n_symm}.(i))} \\
&= (\int{b} \fusion \int{e} \fusion \bigcup_{n \geq 0} (\int{b} \fusion \int{e})^{n} \fusion \int{\overline b} \fusion \int{\Theta}) \cup (\int{\overline b} \fusion \int{\Theta}) &\text{($\fusion$ dist. $\bigcup$)} \\
&= (\int{b} \fusion \int{e} \fusion \int{e^{(b)}} \fusion \int{\Theta}) \cup (\int{\overline b} \fusion \int{\Theta}) &\text{(int. $-^{(b)}$)} \\
&= (\int{b} \fusion \int{e, e^{(b)}, \Theta}) \cup (\int{\overline b} \fusion \int{\Theta}), &\text{(sequent int.)}
\end{align*}
as required.
\end{proof}
\setcounter{lemma}{4}
\begin{proposition}
	Every rule of $\mathsf{SGKAT}$ is sound. Moreover, every rule is invertible except for $\mathsf{k}$ and $\mathsf{k}_0$, which are invertible whenever they have priority. 
\end{proposition}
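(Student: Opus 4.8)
The plan is to verify soundness and (conditional) invertibility rule by rule, organising the work by the structure of the rules in Figure~\ref{fig:sgkat}. For each rule I would write the validity of the conclusion as an inclusion of the form $A \fusion \int{\Gamma} \subseteq \int{\Delta_0} \fusion \int{\Delta}$ (where $\Delta_0$ is the principal expression) and the validity of each premiss likewise, and then exhibit a chain of equivalences (for the invertible rules) or implications (for the modal rules) connecting them. The auxiliary Lemma~\ref{lem:soundnessaux} handles the two fixed-point-like connectives $+_b$ and $(b)$ on the succedent side, and dually on the antecedent side (unfolding $\int{e +_b f, \Gamma}$ and $\int{e^{(b)}, \Gamma}$); Lemma~\ref{lem:exposed} and Lemma~\ref{lem:fusion_n_symm}.(ii) handle the modal rules.

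First I would dispatch the easy structural rules. For $\cdot$-$l$ and $\cdot$-$r$, validity of premiss and conclusion are literally the same inclusion, since $\int{e \cdot f} = \int{e} \fusion \int{f}$ and fusion is associative; so these are trivially sound and invertible. For $b$-$l$: the conclusion $b,\Gamma \Rightarrow_A \Delta$ is valid iff $A \fusion \int{b} \fusion \int{\Gamma} \subseteq \int{\Delta}$, and since $A \fusion \int{b} = A \restriction b$, this is exactly validity of the premiss $\Gamma \Rightarrow_{A \restriction b} \Delta$ — sound and invertible. For $b$-$r$ with the side condition $A \restriction b = A$: here $A \fusion \int{\Gamma} \subseteq \int{b} \fusion \int{\Delta}$; using $A = A \restriction b$ we get $A \fusion \int{\Gamma} = \int{b} \fusion A \fusion \int{\Gamma} \subseteq \int{b}\fusion\int{\Delta}$ iff $A \fusion \int{\Gamma} \subseteq \int{\Delta}$ (one direction is monotonicity of $\int{b}\fusion(-)$ composed with $\int{b}\subseteq\mathsf{At}$, the other is immediate), giving soundness and invertibility. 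The branching rules $+_b$-$l$, $+_b$-$r$, $(b)$-$l$, $(b)$-$r$ all follow the pattern already shown for $+_b$-$r$ in the sketch: split $A \fusion \int{\Gamma} = (\int b \fusion A \fusion \int\Gamma) \cup (\int{\overline b}\fusion A\fusion\int\Gamma)$, apply the relevant clause of Lemma~\ref{lem:soundnessaux} (on the left- or right-hand side as appropriate), and observe that $X \cup Y \subseteq Z$ iff $X \subseteq Z$ and $Y \subseteq Z$ splits the inclusion into the two premisses with their restricted atom-sets. The axioms $\mathsf{id}$ (conclusion $A \fusion \mathsf{At} \subseteq \mathsf{At}$, always true) and $\bot$ (conclusion has $A = \emptyset$, so the left-hand side is empty) are immediate; $\mathsf{id}$ and $\bot$ have no premisses so invertibility is vacuous.

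The main obstacle, and the only genuinely delicate part, is the pair of modal rules $\mathsf{k}$ and $\mathsf{k}_0$, which are the only rules that are not unconditionally invertible — here the sketch already carries the core argument. For $\mathsf{k}$ with conclusion $p,\Gamma \Rightarrow_A p,\Delta$: if the application does not have priority, the only higher-priority rule applicable is $\bot$, forcing $A = \emptyset$, so the conclusion is vacuously valid (soundness holds trivially) but invertibility can fail, witnessed by the displayed instance $1 \Rightarrow_{\mathsf{At}} 0 \,/\, p,1 \Rightarrow_\emptyset p,0$. If the application has priority then $A \neq \emptyset$; picking $\alpha \in A$ and using the determinacy-flavoured facts of Lemma~\ref{lem:exposed} (that membership of $\alpha x$ in $\int{p,\Gamma}$ is independent of the leading atom, since $\int{p,\Gamma} = \int p \fusion \int\Gamma$ is ``exposed''), together with the cancellation property $\int p \fusion L = \int p \fusion K \Rightarrow L = K$ of Lemma~\ref{lem:fusion_n_symm}.(ii), one gets the chain of equivalences $A \fusion \int{p,\Gamma}\subseteq\int{p,\Delta} \Leftrightarrow \alpha \fusion \int p \fusion \int\Gamma \subseteq \int p \fusion \int\Delta \Leftrightarrow \int p \fusion \int\Gamma \subseteq \int p \fusion \int\Delta \Leftrightarrow \int\Gamma\subseteq\int\Delta$, which is both soundness and invertibility. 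For $\mathsf{k}_0$ with conclusion $p,\Gamma \Rightarrow_A \Delta$ and premiss $\Gamma \Rightarrow_{\mathsf{At}} 0$: validity of the premiss means $\int\Gamma \subseteq \int 0 = \emptyset$, hence $\int{p,\Gamma} = \int p \fusion \int\Gamma = \emptyset$, so $A \fusion \int{p,\Gamma} = \emptyset \subseteq \int\Delta$ — soundness, for any $A$, with no priority needed. Conversely, when $\mathsf{k}_0$ has priority (so in particular no higher-priority rule, including $\bot$, applies, whence $A \neq \emptyset$): from $A \fusion \int p \fusion \int\Gamma \subseteq \int\Delta$ one wants $\int\Gamma = \emptyset$; but since the application has priority, $\mathsf{k}$ is not applicable, meaning $\Delta$ does not begin with $p$ — combined with $A \neq \emptyset$ this forces, by an ``exposedness'' argument as in stage (4) of the completeness proof and Lemma~\ref{lem:exposed}.(i), that any $\alpha x \in \int{p,\Gamma}$ would yield $\beta x \in \int{p,\Gamma} \subseteq \int\Delta$ for $\beta \in A$, impossible; hence $\int{p,\Gamma} = \emptyset$, so by Lemma~\ref{lem:fusion_n_symm}.(ii) (cancelling $\int p$) $\int\Gamma = \emptyset = \int 0$, i.e.\ the premiss $\Gamma \Rightarrow_{\mathsf{At}} 0$ is valid. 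I would present $\mathsf{k}$ and $\mathsf{k}_0$ in full and relegate the routine structural cases to a terse enumeration.
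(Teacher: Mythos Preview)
Your plan and execution match the paper's proof closely: rule-by-rule verification, with the logical rules handled by unfolding and splitting via Lemma~\ref{lem:soundnessaux}, and the modal rules singled out for the delicate priority-dependent analysis using Lemma~\ref{lem:exposed} and the cancellation property. The treatment of $\mathsf{k}$ is exactly the paper's.

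There is, however, one genuine gap in your $\mathsf{k}_0$ invertibility argument. You use ``$\mathsf{k}_0$ has priority'' only to conclude that $A \neq \emptyset$ (from $\bot$ not applying) and that $\Delta$ does not begin with $p$ (from $\mathsf{k}$ not applying). You then assert that $\beta x \in \int{\Delta}$ with $x$ starting with $p$ is ``impossible''. But ``$\Delta$ does not begin with $p$'' is not enough: if $\Delta$ began with, say, $p +_b q$ or $(pe)^{(b)}$, then $\int{\Delta}$ could perfectly well contain strings of the form $\beta p \cdots$. What rules this out is the \emph{remaining} content of ``has priority'': no right logical rule applies either, so the leftmost expression of $\Delta$ (if any) must be a primitive program or a test $b$ with $A \restriction b \neq A$. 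The paper makes exactly this case distinction --- $\Delta = \epsilon$; $\Delta = b,\Theta$ with some $\beta \in A$, $\beta \not\leq b$; $\Delta = q,\Theta$ with $q \neq p$ --- and in each case derives the contradiction directly. Your forward reference to ``stage (4) of the completeness proof'' gestures at this, but that stage assumes the constraint on $\Delta$ as a precondition rather than deriving it; here you must derive it from priority. Make the three-way case split explicit and the argument goes through.
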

\begin{proof}
We will cover the rules of $\mathsf{SGKAT}$ one-by-one. 
\begin{itemize}[labelwidth=4em,leftmargin =\dimexpr\labelwidth+\labelsep\relax - 1em]
\item[($b$-$l$)] This is immediate by Lemma \ref{lem:soundnessaux}.1.

\item[($b$-$r$)] We have:
\begin{align*}
A \fusion \int{\Gamma} \subseteq  \int{b, \Delta} &\Leftrightarrow A \fusion\int{\Gamma} \subseteq  \int{b} \fusion \int{\Delta} &\text{(sequent int.)} \\
	&\Leftrightarrow A \restriction b \fusion \int{\Gamma} \subseteq \int{b} \fusion \int{\Delta} &\text{(by ($\dagger$))} \\ 
	&\Leftrightarrow A \restriction b \fusion \int{\Gamma} \subseteq \int{\Delta} &\text{($A \restriction b \subseteq \int{b}$)} \\
	&\Leftrightarrow A \fusion \int{\Gamma} \subseteq \int{\Delta} &\text{(by ($\dagger$))}
\end{align*}
\item[($\cdot$-$l$)] Immediate, since $A \fusion \int{e \cdot f, \Gamma} = A \fusion \int{e, f, \Gamma}$.
\item[($\cdot$-$r$)] Likewise, but by $\int{e \cdot f, \Delta} = \int{e, f, \Delta}$. 
\item[($+_b$-$l$)] This follows directly from the fact that
\begin{align*}
A \fusion \int{e +_b f, \Gamma}  &= A \fusion \int{e +_b f} \fusion \int{\Gamma} &\text{(sequent int.)} \\
&=  A \fusion ((\int{b} \fusion \int{e, \Gamma}) \cup (\int{\overline b} \fusion \int{f, \Gamma})) &\text{(Lem. \ref{lem:soundnessaux}.(ii))} \\
&=  (A \fusion \int{b} \fusion \int{e, \Gamma}) \cup (A \fusion \int{\overline b} \fusion \int{f, \Gamma}) &\text{(distrib.)} \\
&= (A \restriction b \fusion \int{e, \Gamma}) \cup (A \restriction \overline b \fusion \int{f, \Gamma}) &\text{(Lem. \ref{lem:soundnessaux}.(i))} 
\end{align*}
\item[($+_b$-$r$)] We find 
\begin{align*}
	&A \fusion \int{\Gamma} \subseteq \int{e +_b f} \fusion \int{\Delta} \\\
	&\ \  \Leftrightarrow A \fusion \int{\Gamma} \subseteq (\int{b} \fusion \int{e, \Delta}) \cup (\int{\overline b} \fusion \int{f, \Delta}) \\
	&\ \ \ \ \Leftrightarrow A \restriction b \fusion \int{\Gamma} \subseteq \int{e, \Delta} \text{ or } A \restriction \overline b \subseteq \int{f, \Delta},
	\end{align*}
	where the first equivalence holds due to Lemma \ref{lem:soundnessaux}.(ii), and the second due to $A \fusion \int{\Gamma} = (\int{b} \fusion A \fusion \int{\Gamma}) \cup (\int{\overline b} \fusion A \fusion \int{\Gamma})$ and Lemma \ref{lem:soundnessaux}.(i).
\item[($(b)$-$l$)] This follows directly from the fact that
\begin{align*}
A \fusion \int{e^{(b)}, \Gamma} &= A \fusion \int{e^{(b)}} \fusion \int{\Gamma} &\text{(sequent int.)} \\
&=  A \fusion ((\int{b} \fusion \int{e, e^{(b)}, \Gamma}) \cup (\int{\overline b} \fusion \int{f, \Gamma})) &\text{(Lem. \ref{lem:soundnessaux}.(iii))} \\
&=  (A \fusion \int{b} \fusion \int{e, e^{(b)}, \Gamma}) \cup (A \fusion \int{\overline b} \fusion \int{f, \Gamma}) &\text{(distrib.)} \\
&= (A \restriction b \fusion \int{e, e^{(b)}, \Gamma}) \cup (A \restriction \overline b \fusion \int{f, \Gamma}) &\text{(Lem. \ref{lem:soundnessaux}.(i))} 
\end{align*}
\item[($(b)$-$r$)] We find 
\begin{align*}
	&A \fusion \int{\Gamma} \subseteq \int{e^{(b)}, \Delta} \\
	&\ \ \Leftrightarrow A \fusion \int{\Gamma} \subseteq (\int{b} \fusion \int{e, e^{(b)}, \Delta}) \cup (\int{\overline b} \fusion \int{\Delta}) \\
	&\ \ \ \ \Leftrightarrow  A \restriction b \fusion \int{\Gamma} \subseteq \int{b} \fusion \int{e, e^{(b)}, \Delta} \text{ and } A \restriction \overline b \subseteq \int{\overline b} \fusion \int{\Delta},
	\end{align*}
	where the first equivalence holds due to Lemma \ref{lem:soundnessaux}.3, and the second due to $A \fusion \int{\Gamma} = (\int{b} \fusion A \fusion \int{\Gamma}) \cup (\int{\overline b} \fusion A \fusion \int{\Gamma})$ and Lemma \ref{lem:soundnessaux}.1.
\item[($\mathsf{id}$)] This follows from $A \fusion \int{1} = A \fusion \mathsf{At} = A \subseteq \mathsf{At} = \int{1}$.  
\item[($\bot$)] We have $\emptyset \fusion \int{\Gamma} = \emptyset \subseteq \int{\Delta}$. 
	\item[($\mathsf{k}$)] Suppose first that some application of $\mathsf{k}$ does \emph{not} have priority. The only rule of higher priority than $\mathsf{k}$ which can have a conclusion of the form $p, \Gamma \Rightarrow_A p, \Delta$ is $\bot$, whence we must have $A = \emptyset$. As shown in the previous case, this conclusion must be valid. Hence under this restriction the rule application is vacuously sound. It is, however, not invertible, as the following rule instance demonstrates
	\[
	\AxiomC{$1 \Rightarrow_\mathsf{At} 0$}
	\LeftLabel{$\mathsf{k}$}
	\UnaryInfC{$p, 1 \Rightarrow_\emptyset p, 0$}
	\DisplayProof
	\]

	Next, suppose that some application of $\mathsf{k}$ does have priority. This means that the set $A$ of atoms in the conclusion $p, \Gamma \Rightarrow_A p, \Delta$ is \emph{not} empty. We will show that under this restriction the rule is both sound and invertible. Let $\alpha \in A$. We have
	\begin{align*}
	A \fusion \int{p, \Gamma} \subseteq \int{p, \Delta} &\Leftrightarrow A \fusion \int{p} \fusion \int{\Gamma} \subseteq \int{p} \fusion \int{\Delta} &\text{(seq. int.)}\\
	&\Leftrightarrow {\alpha} \fusion \int{p} \fusion \int{\Gamma} \subseteq \int{p} \fusion \int{\Delta} &\text{($\alpha \in A$, Lem. \ref{lem:exposed})}\\
	&\Leftrightarrow \int{p} \fusion \int{\Gamma} \subseteq \int{p} \fusion \int{\Delta} &\text{(Lem. \ref{lem:exposed})} \\
	&\Leftrightarrow \int{\Gamma} \subseteq \int{\Delta}, &\text{(Lem. \ref{lem:fusion_n_symm}.(ii))}
	\end{align*}
	as required. 
	\item[($\mathsf{k}_0$)] For the final rule $\mathsf{k}_0$, we will first show the soundness of all instances, and then the invertibility of those instances which have priority. For soundness, suppose that the premiss is valid. Since
	\[
	\int{\Gamma} = \mathsf{At} \fusion \int{\Gamma} \subseteq \int{0} = \emptyset,
	\]
	it follows that $\int{\Gamma} = \emptyset$. Hence
	\[
	A \fusion \int{p, \Gamma} = A \fusion \int{p} \fusion \int{\Gamma} =  A \fusion \int{p} \fusion \emptyset = \emptyset \subseteq \int{\Delta},
	\]
	as required. 

	For invertibility, suppose that some instance of $\mathsf{k}_0$ has priority. Then the conclusion $p, \Gamma \Rightarrow_A \Delta$ cannot be the conclusion of any other rule application. 

	Suppose that $p, \Gamma \Rightarrow_A \Delta$ is valid. We wish to show that $\Gamma \Rightarrow_\mathsf{At} 0$ is valid, or, in other words, that $\int{\Gamma} = \emptyset$. 

	First note that, as in the previous case, from the assumption that our instance of $\mathsf{k}_0$ has priority, it follows that $A \not= \emptyset$. 

	We now make a case distinction on the shape of $\Delta$. Suppose first that $\Delta = \epsilon$. Then 
	\[
	 A \fusion \int{p, \Gamma} \subseteq \int{\Delta} = \int{\epsilon} = \mathsf{At}. 
	\]
	As $A \fusion \int{p, \Gamma} = \{\alpha p \beta x : \alpha \in A \text{ and } \beta x \in \int{\Gamma}\}$, we must have $\int{\Gamma} = \emptyset$.

	Next, suppose that $\Delta$ has a leftmost expression $e$. By the assumption that the rule instance has priority, we know that $e$ is not of the form $e_0 \cdot e_1$, $e_0 +_b e_1$, or $e^{(b)}$, for otherwise a right logical rule could be applied. Hence, the expression $e$ must either be a test or a primitive program. 

	If $e$ is a test, say $b$, we know that $A \restriction b \not= A$, for otherwise $b$-$r$ could be applied. Recall that it suffices to show that $\int{\Gamma} = \emptyset$. So suppose, towards a contradiction, that there is some $\beta x \in \int{\Gamma}$. Let $\alpha \in A$ such that $\alpha \not \leq b$. Then $\alpha p \beta x \in \int{p, \Gamma} \subseteq \int{\Delta}$. But this contradicts the fact that $\int{\Delta} \subseteq \{\alpha y : \alpha \leq b\}$.

	Finally, suppose that $e$ is a primitive program, say $q$. Write $\Delta = q, \Theta$. First note that assumption that the rule instance has priority implies $p \not= q$, for otherwise the rule $\mathsf{k}$ could be applied. We have:
	\[
	A \fusion \int{p, \Gamma} \subseteq \int{\Delta} = \{\alpha q \beta x : \beta x \in \int{\Theta}\},
	\]
	As $A \fusion \int{p, \Gamma} = \{\alpha p \beta x : \alpha \in A \text{ and } \beta x \in \int{\Gamma}\}$ and $p \not= q$, we again find that $\int{\Gamma} = \emptyset$. 
\end{itemize}
This finishes the proof. 
\end{proof}
\begin{lemma}
	For every $n \in \mathbb{N}$: if we have $\mathsf{SGKAT} \vdash^\infty e^{(b)}, \Gamma \Rightarrow_A \Delta$, then we also have $\mathsf{SGKAT} \vdash^\infty [e^{(b)}]^{n}, \Gamma \Rightarrow_A \Delta$.
\end{lemma}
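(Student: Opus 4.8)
The plan is to transform a given $\mathsf{SGKAT}^\infty$-proof $\pi$ of $e^{(b)}, \Gamma \Rightarrow_A \Delta$ directly into one of $[e^{(b)}]^n, \Gamma \Rightarrow_A \Delta$, by \emph{tracking} the displayed occurrence of the while-loop through $\pi$ together with a \emph{budget} $k \in \{0,\dots,n\}$, recording that this occurrence currently appears, in the transformed proof, as $[e^{(b)}]^k$ (so $k=n$ at the root). The observation that makes this work is that, as long as the tracked occurrence sits in the antecedent, no rule of $\mathsf{SGKAT}$ can \emph{act} on it except $(b)$-$l$ applied to it once it has become the leftmost expression: the other left rules $b$-$l$, $\cdot$-$l$, $+_b$-$l$ require a test, a product or a guarded union in leftmost position, the modal rules $\mathsf{k},\mathsf{k}_0$ require a primitive program there, and a tracked $[e^{(b)}]^k$ is none of these; every rule of $\pi$ that is not such an \emph{unfold event} simply carries the tracked occurrence along unchanged (note that $\mathsf{k}$ and $\mathsf{k}_0$ reset the atom set to $\mathsf{At}$, so the atom sets in the transformed proof agree with those of $\pi$ node by node).

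At an unfold event -- a node of $\pi$ applying $(b)$-$l$ to the tracked $e^{(b)}$, with conclusion $e^{(b)},\Gamma' \Rightarrow_{A'} \Delta'$, left premiss $e, e^{(b)}, \Gamma' \Rightarrow_{A' \restriction b} \Delta'$ and right premiss $\Gamma' \Rightarrow_{A' \restriction \overline b} \Delta'$ -- the corresponding node of the transformed proof reads $[e^{(b)}]^k, \Gamma' \Rightarrow_{A'} \Delta'$, and there are two cases. If $k \geq 1$, then $[e^{(b)}]^k = b\cdot e\cdot [e^{(b)}]^{k-1}$, so I apply $\cdot$-$l$ and $b$-$l$ to descend to $e, [e^{(b)}]^{k-1}, \Gamma' \Rightarrow_{A' \restriction b} \Delta'$, and then keep mirroring the \emph{left} subtree of $\pi$ above this event, now with budget $k-1$ (the right subtree is discarded). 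If $k = 0$, then $[e^{(b)}]^0 = \overline b$, so I apply $b$-$l$ to descend to $\Gamma' \Rightarrow_{A' \restriction \overline b} \Delta'$ and then copy the \emph{right} subtree of $\pi$ above this event \emph{verbatim} -- there is nothing left to track (the left subtree is discarded). By construction the root of the resulting derivation is $[e^{(b)}]^n, \Gamma \Rightarrow_A \Delta$.

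What remains -- and where the only real work lies -- is to check that this yields an $\mathsf{SGKAT}^\infty$-proof. Closedness is immediate: every leaf mirrors a $\bot$- or $\mathsf{id}$-leaf of $\pi$, and $\mathsf{id}$-leaves (having empty antecedent) can occur only inside verbatim-copied subtrees, where they are simply reproduced. The subtle point is fairness: one must show that every infinite branch of the transformed proof contains infinitely many applications of $(b)$-$l$. Since the budget is finite and strictly decreases at each unfold event with $k \geq 1$, any branch meets at most $n$ such events; thereafter it either reaches an unfold event with $k = 0$ -- past which it coincides with an infinite branch of $\pi$ and is therefore fair -- or it never does, in which case it mirrors an infinite branch of $\pi$ step for step and in particular reproduces all of that branch's (infinitely many) $(b)$-$l$ applications. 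Either way the branch is fair, so the construction is a proof of $[e^{(b)}]^n, \Gamma \Rightarrow_A \Delta$, as required. The main obstacle will be precisely this fairness bookkeeping -- making rigorous that redirecting/truncating $\pi$ at unfold events never destroys fairness. An alternative organisation is an induction on $n$ resting on a proof-theoretic invertibility lemma for $(b)$-$l$ -- if $e^{(b)}, \Gamma \Rightarrow_A \Delta$ is provable then so are both premisses of the corresponding $(b)$-$l$ instance, proved by cutting $\pi$ at its first unfold events -- together with the identities $[e^{(b)}]^0 = \overline b$ and $[e^{(b)}]^{n+1} = b\cdot e\cdot [e^{(b)}]^n$; but the induction step then needs the statement generalised to allow a left context, which reintroduces essentially the same bookkeeping.
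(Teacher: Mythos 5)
Your proposal is correct and is essentially the paper's own argument: the paper performs the same surgery --- tracking the distinguished occurrence of $e^{(b)}$, replacing its $(b)$-$l$ unfoldings by $\cdot$-$l$/$b$-$l$ steps on $[e^{(b)}]^k$ while discarding the unused premiss, and checking that fairness survives --- only organised as an induction on $n$ (your ``alternative organisation'') rather than as your single corecursive pass with a budget. The one phrase to tighten is that a branch with no $k=0$ event does \emph{not} reproduce \emph{all} of the corresponding branch's $(b)$-$l$ applications (the at most $n$ unfold events become $\cdot$-$l$/$b$-$l$ steps), but since only finitely many are unfold events, infinitely many $(b)$-$l$ applications remain and fairness still holds.
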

\begin{proof}
	We assume that $A \not= \emptyset$, for otherwise the lemma is trivial. Let $\pi$ be the assumed $\mathsf{SGKAT}^\infty$-proof of $e^{(b)}, \Gamma \Rightarrow_A \Delta$. Note that, since all succedents referred to in the lemma are equal to $\Delta$, it suffices to prove the lemma under the assumption that the last rule applied in $\pi$ is \emph{not} a right logical rule. Hence, we may assume that the last rule applied in $\pi$ is $(b)$-$l$, for that is the only remaining rule with a sequent of this shape as conclusion. This means that $\pi$ is of the form:
	\[
	\AxiomC{$\pi_1$}
	\noLine
	\UnaryInfC{$e , e^{(b)},\Gamma \Rightarrow_{A \restriction b} \Delta$}
	\AxiomC{$\pi_2$}
	\noLine
	\UnaryInfC{$\Gamma \Rightarrow_{A \restriction \overline b} \Delta$}
	\RightLabel{\footnotesize $(b)$-$l$}
	\BinaryInfC{$e^{(b)}, \Gamma \Rightarrow_A \Delta$}
	\DisplayProof
	\]
	We show the lemma by induction on $n$. For the induction base, we take the following proof:
	\[
	\AxiomC{$\pi_2$}
	\noLine
	\UnaryInfC{$\Gamma \Rightarrow_{A \restriction \overline b} \Delta$}
	\RightLabel{\footnotesize $\overline b$-$l$}
	\UnaryInfC{$[e^{(b)}]^{0}, \Gamma \Rightarrow_A \Delta$}
	\DisplayProof
	\]
	For the inductive step $n + 1$, we construct from $\pi_1$ a proof $\tau$ of $e , [e^{(b)}]^{n},\Gamma \Rightarrow_{A \restriction b} \Delta$. To that end, we first replace in $\pi_1$ every occurrence of $e^{(b)}, \Gamma$ as a final segment of the antecedent by $e^{(b)^{n}}, \Gamma$ and cut off all branches at sequents of the form $[e^{(b)}]^{n}, \Gamma \Rightarrow_B \Theta$. This may be depicted as follows, where to the left of the arrow $\leadsto$ we have a branch of $\pi_1$, and to right the resulting branch of $\tau$. 
	\begin{align*}
	\AxiomC{$\vdots$}
	\UnaryInfC{$e^{(b)}, \Gamma \Rightarrow_{B} \Theta$}
	\noLine
	\UnaryInfC{$\vdots$}
	\UnaryInfC{$e , e^{(b)},\Gamma \Rightarrow_{A \restriction b} \Delta$}
	\DisplayProof
	\leadsto 
	\AxiomC{$\phantom{\vdots}$}
	\UnaryInfC{$[e^{(b)}]^{n}, \Gamma \Rightarrow_{B} \Theta$}
	\noLine
	\UnaryInfC{$\vdots$}
	\UnaryInfC{$e , [e^{(b)}]^{n},\Gamma \Rightarrow_{A \restriction b} \Delta$}
	\DisplayProof
	\end{align*} 
	Note that every remaining infinite branch in the resulting derivation $\tau$ satisfies the fairness condition. Therefore, to turn $\tau$ into a proper $\mathsf{SGKAT}^\infty$-proof, we only need to close each open leaf, which by construction is of the form $[e^{(b)}]^{n}, \Gamma \Rightarrow_B \Delta$. Note that $\pi_1$ must contain a proof of $e^{(b)}, \Gamma \Rightarrow_B \Delta$, whence by the induction hypothesis the sequent $[e^{(b)}]^{n}, \Gamma \Rightarrow_B \Delta$ is provable. We can thus close the leaf by simply appending the witnessing proof. 

	Letting $\tau$ be the resulting proof, we finish the induction step by taking:
	\[
	\AxiomC{$\tau$}
	\noLine
	\UnaryInfC{$e, [e^{(b)}]^{n}, \Gamma \Rightarrow_{A \restriction b} \Delta$}
	\RightLabel{\footnotesize $b$-$l$}
	\UnaryInfC{$[e^{(b)}]^{n+1}, \Gamma \Rightarrow_A \Delta$}
	\DisplayProof
	\]
	which gives us the required $\mathsf{SGKAT}^\infty$-proof. 
\end{proof}
\begin{lemma}
	$\wwh([e^{(b)}]^{n}, \Gamma) < \wwh(e^{(b)}, \Gamma)$ for every $n \in \mathbb{N}$. 
\end{lemma}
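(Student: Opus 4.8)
The plan is to reduce the claim to the single inequality $\wh([e^{(b)}]^n) \le \wh(e)$ between while-heights of expressions, and then to read off the multiset inequality directly from the definition of the Dershowitz--Manna ordering given in the paper.

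First I would establish $\wh([e^{(b)}]^n) \le \wh(e)$ by induction on $n$ (in fact one sees that equality holds for $n \ge 1$, while $\wh([e^{(b)}]^0) = 0$). The base case is immediate, since $[e^{(b)}]^0 = \overline b$ is a test, so $\wh([e^{(b)}]^0) = 0 \le \wh(e)$. For the inductive step, unfolding $[e^{(b)}]^{n+1} = b \cdot e \cdot [e^{(b)}]^n$ and using the clause $\wh(f \cdot g) = \max\{\wh(f), \wh(g)\}$ gives $\wh([e^{(b)}]^{n+1}) = \max\{\wh(b), \wh(e), \wh([e^{(b)}]^n)\}$, which equals $\wh(e)$ by $\wh(b) = 0$ and the induction hypothesis. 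Since $\wh(e^{(b)}) = \wh(e) + 1$, this yields $\wh([e^{(b)}]^n) \le \wh(e) < \wh(e^{(b)})$ for every $n$.

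Next I would compare the multisets $\wwh([e^{(b)}]^n, \Gamma)$ and $\wwh(e^{(b)}, \Gamma)$. They share the common submultiset $\wwh(\Gamma)$ and differ only in that the former contributes the value $\wh([e^{(b)}]^n)$ where the latter contributes $\wh(e^{(b)}) = \wh(e) + 1$. As $\wh([e^{(b)}]^n) \le \wh(e) < \wh(e) + 1$, the two multisets are distinct, and the greatest value at which their multiplicities disagree is $\wh(e) + 1$; at that value the former has strictly smaller multiplicity than the latter. By the definition of $<$ on multisets, this is precisely $\wwh([e^{(b)}]^n, \Gamma) < \wwh(e^{(b)}, \Gamma)$.

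I do not anticipate a genuine obstacle, as the argument is routine. The two points deserving slight care are the base case, where $[e^{(b)}]^0$ is a bare test not mentioning $e$ at all, and the passage from the intuitive picture of ``decreasing one element of the multiset'' to the formal greatest-point-of-disagreement definition of the ordering --- bearing in mind that $\wwh(\Gamma)$ may itself already contain occurrences of $\wh(e) + 1$, which is harmless since those occur with equal multiplicity on both sides.
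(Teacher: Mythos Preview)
Your proposal is correct and follows essentially the same approach as the paper: both arguments hinge on the observation that $\wh([e^{(b)}]^n) \le \wh(e) < \wh(e^{(b)})$, and then read off the multiset inequality from the fact that the two multisets differ only by swapping this single element. Your version is in fact slightly more explicit than the paper's, carrying out the induction on $n$ in detail and directly identifying $\wh(e)+1$ as the greatest point of disagreement, whereas the paper leaves the analogous verification to the reader.
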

\begin{proof}
	Let $k := \wh(e^{(b)})$. Note that the maximum while-height in $[e^{(b)}]^{n}$ is that of $e$. Hence, we have $\wwh([e^{(b)}]^{n})(k) = 0 < 1 = \wwh{(e^{(b)})}(k)$. Therefore:
	\begin{align*}
	\wwh{([e^{(b)}]^{n}, \Gamma)}(k) &= \wwh([e^{(b)}]^{n})(k) + \wwh(\Gamma)(k) \\
	&<  \wwh(e^{(b)})(k) + \wwh(\Gamma)(k) = \wwh{(e^{(b)}, \Gamma)}(k).
	\end{align*}
	Hence $\wwh{([e^{(b)}]^{n}, \Gamma)} \not = \wwh(e^{(b)}, \Gamma)$. Now suppose that for some $l \in \mathbb{N}$ we have $\wwh([e^{(b)}]^{n}, \Gamma)(l) > \wwh(e^{(b)}, \Gamma)(l)$. We leave it to the reader to verify that in this case we must have $l < k$. As $\wwh{([e^{(b)}]^{n}, \Gamma)}(k) < \wwh(e^{(b)}, \Gamma)(k)$, we find $\wwh{([e^{(b)}]^{n}, \Gamma)} < \wwh(e^{(b)}, \Gamma)$.
\end{proof}
\setcounter{subsection}{5}
\subsection{... of Section \ref{sec:sgkatcompleteness}}
\setcounter{lemma}{7}
\begin{lemma}
	Any valid sequent is the conclusion of some rule application. 
\end{lemma}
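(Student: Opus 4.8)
The plan is to prove the contrapositive: if a sequent $\Gamma \Rightarrow_A \Delta$ is \emph{not} the conclusion of any rule application, then it is not valid. So suppose no rule applies. First I would analyse what this means for $A$: if $A = \emptyset$ then $\bot$ applies, so $A \neq \emptyset$. Next, no left logical rule applies, so the leftmost expression of $\Gamma$ (if any) is neither a test, nor of the form $e \cdot g$, nor $e +_b g$, nor $e^{(b)}$; hence $\Gamma$ is exposed, i.e. $\Gamma = \epsilon$ or $\Gamma = p, \Theta$. Symmetrically, no right logical rule applies, so the leftmost expression of $\Delta$ is neither of the form $e \cdot f$, nor $e +_b f$, nor $e^{(b)}$; and if it is a test $b$ then $b$-$r$ fails, which forces $A \restriction b \neq A$ (since $A \neq \emptyset$); so $\Delta$ is exposed or begins with such a "bad" test $b$.

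\textbf{The main case split.} With $\Gamma$ exposed, I would split on whether $\Gamma = \epsilon$ or $\Gamma = p,\Theta$. If $\Gamma = p, \Theta$: then $\mathsf{k}$ does not apply, so $\Delta$ does not begin with $p$; and $\mathsf{k}_0$ does not apply, so the premiss $\Theta \Rightarrow_\mathsf{At} 0$ would have to be non-derivable — but that is a proof-theoretic statement, whereas I want a semantic one. The cleaner route here is: since $\mathsf{k}_0$ being applicable is purely a syntactic condition on the conclusion (it always applies when $\Gamma$ starts with a primitive program!), in fact $\mathsf{k}_0$ \emph{is} always applicable whenever $\Gamma = p,\Theta$. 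So this subcase cannot occur at all — if $\Gamma = p, \Theta$, the rule $\mathsf{k}_0$ has conclusion $p, \Theta \Rightarrow_A \Delta$, contradicting our assumption. Hence $\Gamma = \epsilon$. Similarly $\mathsf{id}$ requires $\Gamma = \Delta = \epsilon$, so if $\Delta = \epsilon$ as well then $\mathsf{id}$ applies, contradiction; therefore $\Delta \neq \epsilon$. So $\Delta$ is nonempty and either begins with a primitive program $q$ or with a test $b$ such that $A \restriction b \neq A$.

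\textbf{Concluding invalidity.} Now I would exhibit a guarded string witnessing $A \diamond \int{\epsilon} \not\subseteq \int{\Delta}$. Since $A \neq \emptyset$, pick $\alpha \in A$; then $\alpha \in A \diamond \int{\epsilon} = A$. It remains to check $\alpha \notin \int{\Delta}$. If $\Delta$ begins with a primitive program $q$, then every string in $\int{\Delta}$ has length $\geq 3$ (it passes through $q$), so the one-atom string $\alpha$ is not in it. If $\Delta$ begins with a test $b$ with $A \restriction b \neq A$, I would choose $\alpha \in A$ with $\alpha \not\leq b$ (possible precisely because $A \restriction b \neq A$); then since $\int{\Delta} \subseteq \int{b} = \{\gamma : \gamma \leq b\}$, again $\alpha \notin \int{\Delta}$. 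Either way $\epsilon \Rightarrow_A \Delta$ is invalid, completing the contrapositive. The one subtlety worth spelling out carefully — and the step I expect to be the main (minor) obstacle — is the observation that $\mathsf{k}_0$ has no semantic side condition on its conclusion, so its applicability is governed solely by the shape of $\Gamma$; this is what collapses the $\Gamma = p,\Theta$ case immediately and keeps the argument purely combinatorial rather than forcing an appeal to derivability.
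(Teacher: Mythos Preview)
Your argument is correct and follows the same contrapositive strategy as the paper: rule out each rule in turn to force $A \neq \emptyset$, $\Gamma = \epsilon$, and $\Delta$ nonempty, then exhibit an atom $\alpha \in A$ not in $\int{\Delta}$. In fact you are more careful than the paper's appendix proof on one point: the paper asserts that $\Delta$ must be exposed (claiming a right logical rule would otherwise apply), but this overlooks the possibility that $\Delta$ begins with a test $b$ with $A \restriction b \neq A$, in which case $b$-$r$ is blocked by its side condition; you correctly identify and dispatch this case. One minor wording fix: the inclusion ``$\int{\Delta} \subseteq \int{b}$'' is not literally true; what you need (and clearly intend) is that every string in $\int{\Delta} = \int{b} \diamond \int{\Delta'}$ begins with an atom $\leq b$, so your chosen $\alpha \not\leq b$ cannot lie in it.
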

\begin{proof}
	We prove this lemma by contraposition. So suppose $\Gamma \Rightarrow_A \Delta$ is \emph{not} the conclusion of \emph{any} rule application. We make a few observations:
	\begin{itemize}
		\item Both $\Gamma$ and $\Delta$ are exposed, for otherwise $\Gamma \Rightarrow_A \Delta$ would be the conclusion of an application of a left, respectively right, logical rule.
		\item $A$ is non-empty, for otherwise $\Gamma \Rightarrow_A \Delta$ would be the conclusion of an application of $\bot$.
		\item The leftmost expression of $\Gamma$ is not a primitive program, for otherwise our sequent $\Gamma \Rightarrow_A \Delta$ would be the conclusion of an application of $\mathsf{k_0}$.
		\item The leftmost expression of $\Delta$ \emph{is} a primitive program, for otherwise, by the previous items, the sequent $\Gamma \Rightarrow_A \Delta$ would be the conclusion of an application of $\mathsf{id}$. 
	\end{itemize}
	Hence $\Gamma \Rightarrow_A \Delta$ is of the form $\epsilon \Rightarrow_A p, \Theta$. Let $\alpha \in A$. Then $\alpha \in A \fusion \int{\epsilon}$. However, since $\alpha$ is not of the form $\beta p \gamma y$, we have $\alpha \notin \int{p, \Theta}$. This shows that $\Gamma \Rightarrow_A \Delta$ is not valid, as required. 
\end{proof}
\begin{proposition}
The proof search procedure of Theorem \ref{thm:nonregcompleteness} runs in $\mathsf{coNLOGSPACE}$. Hence proof search, and thus also the language inclusion problem for $\gkat$-expressions, is in $\mathsf{NLOGSPACE}$.
\end{proposition}
\begin{proof}
The exact $\mathsf{coNLOGSPACE}$ procedure goes as follows. Without loss of generality we assume that the input is a sequent of the form $e \Rightarrow_A f$, given as a triple consisting of the syntax tree $T_e$ of $e$, the set of atoms $A$, and the syntax tree of $T_f$ of $f$. During the procedure we store:
\begin{itemize}
    \item One pointer to $T_e$, initialised at the root. 
    \item One pointer to $T_f$, initialised at the root.
    \item A set of atoms, initially $A$.
\end{itemize}
Note that by Lemma \ref{lem:realisationoftailgenerated} these data suffice to completely describe a sequent. Moreover, we store:
\begin{itemize}
    \item A counter keeping track of how many steps are taken, initially $1$. 
    \item A variable $s \in \{1, 2, 3, 4\}$ indicating at which of the four stages of the procedure we currently are.
\end{itemize}
The idea is to non-deterministically apply the proof search procedure of Theorem \ref{thm:nonregcompleteness}. Note, however, that the fact that we are only storing a single sequent prevents us from performing the loop check that happens in stage (2). As a remedy, we additionally store
\begin{itemize}
    \item A second counter, keeping track of how many steps are taken in stage (2) (always resetting at the beginning of stage (2)).
\end{itemize}
At each iteration we apply the rule dictated by the proof search procedure, non-deterministically choosing a premiss and updating the data accordingly. Our non-deterministic procedure searches for a failing branch (\emph{i.e.} one to which no rule can be applied). Whenever the first counter reaches $4 \cdot |T_e| \cdot |\mathsf{At}| \cdot |T_f|$, it gives up the search. Moreover, whenever the second counter reaches $2 \cdot |\mathsf{At}| \cdot |T_f|$, it applies $\mathsf{k}_0$ (this replaces the loop check of stage (2)). 

We claim that this procedure succeeds in finding a failing branch if it exists. We first justify the application of $\mathsf{k}_0$ when the second counter hits its limit. Consider a list of sequents $\Gamma \Rightarrow_{A_0} \Delta_0, \ldots, \Gamma \Rightarrow_{A_n} \Delta_n$ in stage (2), where $n = |\mathsf{At}| \cdot |T_f|$. By the pidgeonhole principle, this list must repeat some sequent. Moreover, because the only rule that grows succedents is $(b)$-$r$, the segment between the two repetitions must contain a succedent of the form $e^{(b)}, \Delta$ of minimal length. But then there is a variant of this branch, of length at most $2n$, where the succedent $e^{(b)}, \Delta$ is repeated and satisfies the conditions of Lemma \ref{lem:right-productivity}.

Finally, we argue that if there is a failing branch, then there is one of length smaller than the limit of the first counter. This follows from the fact that the second counter can only hit its limit once (because afterwards the succedent becomes trivial) and every repetition that happens before or after the corresponding stage (2) can be cut out of the branch. 
\end{proof}
\end{document}